\newif\ifarxiv\arxivtrue%
\newcommand{\pipe}{\;\;|\;\;}
\newcommand{\BA}{\ensuremath{\mathsf{BA}}\xspace}
\newcommand{\KA}{\ensuremath{\mathsf{KA}}\xspace}
\newcommand{\CKA}{\ensuremath{\mathsf{CKA}}\xspace}
\newcommand{\KAT}{\ensuremath{\mathsf{KAT}}\xspace}
\newcommand{\CKAT}{\ensuremath{\mathsf{CKAT}}\xspace}
\newcommand{\WKAO}{\ensuremath{\mathsf{WKAO}}\xspace}
\newcommand{\KAO}{\ensuremath{\mathsf{KAO}}\xspace}
\newcommand{\AKA}{\ensuremath{\mathsf{T}}\xspace}
\newcommand{\terms}{\ensuremath{\mathcal{T}}}
\newcommand{\termsprop}{\terms_P}
\newcommand{\termsrat}{\terms_R}
\newcommand{\termsgrat}{\terms_{GR}}
\newcommand{\termssrat}{\terms_{SR}}
\newcommand{\termsgsrat}{\terms_{GSR}}
\newcommand{\termsagrat}{\terms_{AGR}}
\newcommand{\at}{\mathcal{A}}
\newcommand{\sem}[1]{\llbracket#1\rrbracket}
\newcommand{\semba}[1]{\sem{#1}_{\scriptscriptstyle\BA}}
\newcommand{\semka}[1]{\sem{#1}_{\scriptscriptstyle\KA}}
\newcommand{\semwkao}[1]{\sem{#1}_{\scriptscriptstyle\WKAO}}
\newcommand{\semkao}[1]{\sem{#1}_{\scriptscriptstyle\KAO}}
\newcommand{\bigsemwkao}[1]{{\left\llbracket#1\right\rrbracket}_{\scriptscriptstyle\WKAO}}
\newcommand{\bigsemkao}[1]{{\left\llbracket#1\right\rrbracket}_{\scriptscriptstyle\KAO}}
\newcommand{\equivba}{\equiv_{\scriptscriptstyle\BA}}
\newcommand{\equivka}{\mathrel{\equiv_{\scriptscriptstyle\KA}}}
\newcommand{\equivcka}{\mathrel{\equiv_{\scriptscriptstyle\CKA}}}
\newcommand{\equivkat}{\mathrel{\equiv_{\scriptscriptstyle\KAT}}}
\newcommand{\equivckat}{\mathrel{\equiv_{\scriptscriptstyle\CKAT}}}
\newcommand{\equivwkao}{\mathrel{\equiv_{\scriptscriptstyle\WKAO}}}
\newcommand{\equivkao}{\mathrel{\equiv_{\scriptscriptstyle\KAO}}}
\newcommand{\equivaka}{\mathrel{\equiv_{\scriptscriptstyle\AKA}}}
\newcommand{\leqqba}{\leqq_{\scriptscriptstyle\BA}}
\newcommand{\leqqka}{\mathrel{\leqq_{\scriptscriptstyle\KA}}}
\newcommand{\leqqcka}{\mathrel{\leqq_{\scriptscriptstyle\CKA}}}
\newcommand{\leqqkat}{\mathrel{\leqq_{\scriptscriptstyle\KAT}}}
\newcommand{\leqqckat}{\mathrel{\leqq_{\scriptscriptstyle\CKAT}}}
\newcommand{\leqqkao}{\mathrel{\leqq_{\scriptscriptstyle\KAO}}}
\newcommand{\leqqaka}{\mathrel{\leqq_{\scriptscriptstyle\AKA}}}
\newcommand{\contr}{\mathrel{\preceq}}
\newcommand{\altcontr}{\mathrel{\trianglelefteqslant}}
\newcommand{\down}[1]{#1{\downarrow}}
\newcommand{\naturals}{\mathbb{N}}
\let\OLDthebibliography\thebibliography%
\renewcommand\thebibliography[1]{
  \OLDthebibliography{#1}
  \setlength{\itemsep}{-0.2pt}
}
\theoremstyle{plain}
\declaretheorem[name=Theorem,sibling=theorem]{thrm}
\declaretheorem[name=Lemma,sibling=theorem]{lem}
\declaretheorem[name=Antinomy,sibling=theorem]{ant}
\crefname{lem}{Lemma}{Lemmas}
\crefname{prop}{Proposition}{Propositions}
\title{%
    Kleene Algebra with Observations%
}
\author{Tobias Kapp\'{e}}{University College London}{tkappe@cs.ucl.ac.uk}{https://orcid.org/0000-0002-6068-880X}{} 
\author{Paul Brunet}{University College London}{}{https://orcid.org/0000-0002-9762-6872}{} 
\author{Jurriaan Rot}{University College London \and Radboud University, Nijmegen}{}{}{Marie Curie Fellowship (795119).}
\author{Alexandra Silva}{University College London}{}{https://orcid.org/0000-0001-5014-9784}{Leverhulme Prize (PLP--2016--129).} 
\author{Jana Wagemaker}{University College London}{}{}{}
\author{Fabio Zanasi}{University College London}{}{}{EPSRC grant (EP/R020604/1).}
\authorrunning{T. Kapp\'{e}, P. Brunet, J. Rot, A. Silva, J. Wagemaker and F. Zanasi}
\begin{document}

\maketitle

\begin{abstract}
\emph{Kleene algebra with tests} (\emph{KAT}) is an algebraic framework for reasoning about the control flow of sequential programs.
Generalising KAT to reason about concurrent programs is not straightforward, because axioms native to KAT in conjunction with expected axioms for concurrency lead to an anomalous equation.
In this paper, we propose \emph{Kleene algebra with observations} (\emph{KAO}), a variant of KAT, as an alternative foundation for extending KAT to a concurrent setting.
We characterise the free model of KAO, and establish a decision procedure w.r.t.\ its equational theory.

\keywords{Concurrent Kleene algebra, Kleene algebra with tests, free model, axiomatisation, decision procedure}
\end{abstract}

\section{Introduction}

The axioms of \emph{Kleene algebra} (\emph{KA})~\cite{kleene-1956,conway-1971} correspond well to program composition~\cite{hoare-hayes-he-etal-1987}, making them a valuable tool for studying equivalences between programs from an algebraic perspective.
An extension of Kleene algebra known as \emph{Kleene algebra with tests} (\emph{KAT})~\cite{kozen-1996} adds primitives for conditional branching, and is particularly useful when proving validity of program transformations, such as optimisations applied by a compiler~\cite{kozen-patron-2000,smolka-eliopoulos-foster-guha-2015}.

As a matter of fact, KAT is sufficiently abstract to express not only program behaviour, but also program specifications; consequently, its laws can be used to compare programs to specifications~\cite{kozen-2000,anderson-foster-guha-etal-2014}.
What makes this connection especially powerful is that KA (resp.\ KAT) is known to be \emph{sound} and \emph{complete} with respect to a language model~\cite{boffa-1990,krob-1991,kozen-1994,kozen-smith-1996}, meaning that an equation is valid in any KA (resp.\ KAT) precisely when it holds in the corresponding language model.
Practical algorithms for deciding language equivalence~\cite{hopcroft-karp-1971,bonchi-pous-2013,pous-2014} enable checking equations in KA or KAT, and hence automated verification becomes feasible~\cite{foster-kozen-milano-silva-thompson-2015}.

More recently, Kleene algebra has been extended with a parallel composition operator, yielding \emph{concurrent Kleene algebra} (\emph{CKA})~\cite{hoare-moeller-struth-wehrman-2009,hoare-2014,hoare-staden-moeller-struth-zhu-2016}.
Crucially, CKA includes the \emph{exchange law}, which encodes \emph{interleaving}, i.e., the (partial) sequentialisation of threads.
Like its predecessors, CKA can be applied to verify (concurrent) programs by reasoning about equivalences~\cite{hoare-staden-moeller-struth-zhu-2016}.
The equational theory of CKA has also been characterised in terms of a language-like semantics~\cite{laurence-struth-2017-arxiv,kappe-brunet-silva-zanasi-2018}, where equivalence is known to be decidable~\cite{brunet-pous-struth-2017}.

Since both KAT and CKA are conservative extensions of KA, this prompted Jipsen and Moshier~\cite{jipsen-moshier-2016} to study a marriage between the two, dubbed \emph{concurrent Kleene algebra with tests} (\emph{CKAT}).
The aim of CKAT is to extend CKA with Boolean guards, and thus arrive at a new algebraic perspective on verification of concurrent programs with conditional branching.

The starting point of this paper is the realisation that CKAT is not a suitable model of concurrent programs.
This is because for any test $p$ and CKAT-term $e$, one can prove $p \cdot e \cdot \overline{p} \equivckat 0$, an equation that appears to have no reasonable interpretation for programs.
The derivation goes as follows:
\begin{equation*}
0 \leqqkat
 	    p \cdot e \cdot \overline{p}
        \leqqcka e \parallel (p \cdot \overline{p})
        \equivkat e \parallel 0
        \equivcka 0
        \ .
\end{equation*}
As we shall see, this is possible because of the interplay between the exchange law and the fact that KAT identifies conjunction of tests with their sequential composition.
For sequential programs, this identification is perfectly reasonable.
In the context of concurrency with interleaving, however, actions from another thread may be scheduled in between two sequentially composed tests, whereas the conjunction of tests executes atomically.
Indeed, an action scheduled between the tests might very well change the result of the second test.

It thus appears that, to reason algebraically about programs with both tests and concurrency, one needs a perspective on conditional branching where the conjunction of two tests is not necessarily the same as executing one test after the other.
The remit of this paper is to propose an alternative to KAT, which we call \emph{Kleene algebra with observations} (\emph{KAO}), that makes exactly this distinction.
We claim that, because of this change, KAO is more amenable to a sensible extension with primitives for concurrency.
Establishing the meta-theory of KAO turns out to be a technically demanding task.
We therefore devote this paper to such foundations, and leave development of concurrent KAO to follow-up work.

Concretely, we characterise the equational theory of KAO in terms of a language model (\cref{section:completeness}).
Furthermore, we show that we can decide equality of these languages (and hence the equational theory of KAO) by deciding language equivalence of non-deterministic finite automata (\cref{section:decision-procedure}).
Both proofs show a clear separation of concerns: their kernel is idiomatic to KAO, and some well-known results from KA complete the argument.

\medskip

For space reasons, detailed proofs are only included in the full version of this paper~\cite{kao-full}; here, we sketch the main insights needed to prove the core propositions and theorems.

\section{Preliminaries}%
\label{section:preliminaries}

We start by outlining some concepts and elementary results.

\subparagraph*{Boolean algebra}
We use $2$ to denote the set $\{ 0, 1 \}$.
The \emph{powerset} (i.e., set of subsets) of a set $S$ is denoted $2^S$.
We fix a finite set $\Omega$ of symbols called \emph{observables}.

The \emph{propositional terms} over $\Omega$, denoted $\termsprop$, are generated by the grammar
\[
    p, q ::= \bot \pipe \top \pipe o \in \Omega \pipe p \vee q \pipe p \wedge q \pipe \overline{p} \ .
\]
We write $\equivba$ for the smallest congruence on $\termsprop$ that satisfies the axioms of Boolean algebra, i.e., such that for all $p, q, r \in \termsprop$ the following hold:
\begin{mathpar}
p \vee \bot \equivba p
\and
p \vee q \equivba q \vee p
\and
p \vee \overline{p} \equivba \top
\and
p \vee (q \vee r) \equivba (p \vee q) \vee r
\\
p \wedge \top \equivba p
\and
p \wedge q \equivba q \wedge p
\and
p \wedge \overline{p} \equivba \bot
\and
p \wedge (q \wedge r) \equivba (p \wedge q) \wedge r
\\
p \vee (q \wedge r) \equivba (p \vee q) \wedge (p \vee r)
\and
p \wedge (q \vee r) \equivba (p \wedge q) \vee (p \wedge r)
\ .
\end{mathpar}
The set of \emph{atoms}, denoted $\at$, is defined as $2^\Omega$.
The semantics of propositional terms is given by the map $\semba{-}: \termsprop \to 2^{\at}$, as follows:
\begin{align*}
\semba{\bot} &= \emptyset
    & \semba{o} &= \{ \alpha \in \at : o \in \alpha \}
    & \semba{p \vee q} &= \semba{p} \cup \semba{q} \\
\semba{\top} &= \at
    & \semba{\overline{p}} &= \at \setminus \semba{p}
    & \semba{p \wedge q} &= \semba{p} \cap \semba{q}
      \ .
\end{align*}
We also write $p \leqqba q$ as a shorthand for $p \vee q \equivba q$.

It is known that $\semba{-}$ characterises $\equivba$ (c.f.~\cite[Chapter~5.9]{birkhoff-bartee-1970}), in the following sense:
\begin{thrm}[Completeness for $\BA$]%
\label{theorem:ba-completeness}
Let $p, q \in \termsprop$; now $p \equivba q$ if and only if $\semba{p} = \semba{q}$.
\end{thrm}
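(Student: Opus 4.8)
The plan is to prove both implications, with the forward (soundness) direction being routine and the converse (completeness) being the substantive one. For soundness, I would observe that $2^{\at}$, equipped with $\cup$, $\cap$, complement relative to $\at$, together with $\emptyset$ and $\at$ as the distinguished elements, is itself a Boolean algebra. Since the clauses defining $\semba{-}$ send $\bot, \top, o, \vee, \wedge, \overline{(-)}$ to exactly these operations, the relation $\{(p,q) : \semba{p} = \semba{q}\}$ is a congruence on $\termsprop$ validating every axiom listed for $\equivba$. As $\equivba$ is the \emph{smallest} such congruence, it is contained in this relation, giving $p \equivba q \Rightarrow \semba{p} = \semba{q}$.

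For completeness I would pass through disjunctive normal forms built from \emph{minterms}. For each atom $\alpha \in \at$ define the term $\pi_\alpha := \bigwedge_{o \in \alpha} o \wedge \bigwedge_{o \notin \alpha} \overline{o} \in \termsprop$ (read as $\top$ when $\Omega = \emptyset$). A direct computation from the semantic clauses shows $\semba{\pi_\alpha} = \{\alpha\}$. The crux is the \emph{normal-form lemma}, asserting the syntactic equivalence
\[
    p \equivba \bigvee_{\alpha \in \semba{p}} \pi_\alpha \qquad \text{for all } p \in \termsprop,
\]
where an empty disjunction denotes $\bot$. I would prove this by induction on the structure of $p$. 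The two facts driving every case are the \emph{disjointness} $\pi_\alpha \wedge \pi_\beta \equivba \bot$ for $\alpha \neq \beta$ (some observable occurs positively in one minterm and negatively in the other, so $o \wedge \overline{o} \equivba \bot$ applies), and the \emph{exhaustiveness} $\bigvee_{\alpha \in \at} \pi_\alpha \equivba \top$. Given exhaustiveness, the case $p = o$ follows from $o \equivba o \wedge \bigvee_{\alpha} \pi_\alpha \equivba \bigvee_{\alpha : o \in \alpha} \pi_\alpha$ via distributivity; the cases for $\vee$ and $\wedge$ combine the induction hypotheses with distributivity, idempotence, and disjointness; and the negation case uses De Morgan together with exhaustiveness to rewrite $\overline{p}$ as the disjunction of the minterms indexed by $\at \setminus \semba{p}$.

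Granting the lemma, completeness is immediate: if $\semba{p} = \semba{q}$ then $p$ and $q$ normalise to the disjunction of the \emph{same} finite set of minterms, so
\[
    p \equivba \bigvee_{\alpha \in \semba{p}} \pi_\alpha = \bigvee_{\alpha \in \semba{q}} \pi_\alpha \equivba q,
\]
the middle identity being literal equality of expressions up to the associativity and commutativity of $\vee$ already absorbed into $\equivba$. The main obstacle is establishing exhaustiveness $\bigvee_{\alpha \in \at} \pi_\alpha \equivba \top$ \emph{within} $\equivba$ rather than semantically: I would derive it by rewriting $\top \equivba \bigwedge_{o \in \Omega}(o \vee \overline{o})$ and fully expanding this product by distributivity, so that the $2^{|\Omega|}$ resulting conjunctions are precisely the minterms $\pi_\alpha$. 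This expansion, which must use only the Boolean axioms (since it is what licenses the later jump from semantic equality to $\equivba$), is the one genuinely combinatorial step; alternatively, one can prove only that each $p$ is $\equivba$-equivalent to \emph{some} disjunction of minterms and then invoke soundness to pin the index set down as $\semba{p}$. The remaining cases are bookkeeping.
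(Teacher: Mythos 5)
Your proposal is correct. Note, however, that the paper does not actually prove this theorem: it is stated as a known background fact with a citation to Birkhoff and Bartee (Chapter 5.9), so there is no in-paper proof to compare against. What you have written is the standard self-contained argument: soundness by observing that $2^{\at}$ is a Boolean algebra and that $\{(p,q) : \semba{p} = \semba{q}\}$ is therefore a congruence containing $\equivba$ by minimality, and completeness via the minterm normal form $p \equivba \bigvee_{\alpha \in \semba{p}} \pi_\alpha$, with disjointness, exhaustiveness (derived inside $\equivba$ by expanding $\bigwedge_{o \in \Omega}(o \vee \overline{o})$ distributively, which is legitimate since $\Omega$ is finite), and a structural induction. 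This dovetails with the paper's own machinery, since the terms $\pi_\alpha$ and the fact $\semba{\pi_\alpha} = \{\alpha\}$ are introduced immediately afterwards as Lemma 2.2. Two small points worth making explicit if you write this out in full: the derived laws you invoke (idempotence, absorption, De Morgan, uniqueness of complements for the negation case) must themselves be derived from the listed axioms, which is routine but not free; and your fallback remark --- proving only that every $p$ is $\equivba$-equivalent to \emph{some} disjunction of minterms and then using soundness plus $\semba{\pi_\alpha} = \{\alpha\}$ to identify the index set with $\semba{p}$ --- is in fact the cleaner way to organise the induction, since it lets soundness do the bookkeeping.
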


When $\alpha \in \at$, we write $\pi_\alpha$ for the Boolean term $\bigwedge_{o \in \alpha} o \wedge \bigwedge_{o \in \Omega \setminus \alpha} \overline{o}$, in which $\bigwedge$ is the obvious generalisation of $\wedge$ for some (arbitrary) choice of bracketing and order on terms.
The following is then straightforward to prove.
\begin{lem}%
\label{lemma:atom-semantics}
For all $\alpha \subseteq \Omega$ it holds that $\semba{\pi_\alpha} = \{ \alpha \}$.
\end{lem}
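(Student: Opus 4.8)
The plan is to unfold the definition of $\semba{-}$ along the conjunctive structure of $\pi_\alpha$ and then characterise which atoms survive the resulting intersection. Since $\semba{p \wedge q} = \semba{p} \cap \semba{q}$, and $\cap$ is associative and commutative, the semantics of the iterated conjunction $\pi_\alpha = \bigwedge_{o \in \alpha} o \wedge \bigwedge_{o \in \Omega \setminus \alpha} \overline{o}$ does not depend on the (arbitrary) bracketing and ordering used in its definition. Concretely, I would argue by induction on the sizes of the two index sets that
\[
    \semba{\pi_\alpha} = \bigcap_{o \in \alpha} \semba{o} \cap \bigcap_{o \in \Omega \setminus \alpha} \semba{\overline{o}} \ .
\]
In the base cases where one index set is empty, the corresponding empty conjunction is $\top$, whose semantics $\at$ is the identity for $\cap$; thus the degenerate cases (notably $\alpha = \emptyset$ and $\alpha = \Omega$) cause no trouble.

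Next I would substitute the atomic clauses $\semba{o} = \{ \beta \in \at : o \in \beta \}$ and $\semba{\overline{o}} = \at \setminus \semba{o} = \{ \beta \in \at : o \notin \beta \}$ into this intersection. An atom $\beta \in \at$ then lies in $\semba{\pi_\alpha}$ precisely when $o \in \beta$ for every $o \in \alpha$ and $o \notin \beta$ for every $o \in \Omega \setminus \alpha$. The first condition says $\alpha \subseteq \beta$, and the second says $\beta \subseteq \alpha$ (recall $\beta \subseteq \Omega$). Together these force $\beta = \alpha$, so $\semba{\pi_\alpha} = \{ \alpha \}$; and since $\alpha \subseteq \Omega$ is itself an atom, this is a well-formed element of $2^{\at}$, confirming the claim.

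I expect no genuinely hard step here. The only point requiring care is the informal ``obvious generalisation'' of $\wedge$, i.e.\ justifying the displayed intersection formula for any choice of bracketing and order; this is exactly the induction, and it rests solely on associativity and commutativity of $\cap$, mirroring the Boolean axioms for $\wedge$. Once that bookkeeping is done, the membership characterisation reduces to a routine double inclusion.
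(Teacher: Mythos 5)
Your proof is correct and is exactly the routine argument the paper has in mind when it declares this lemma ``straightforward to prove'' and omits the proof (it does not appear in the appendix either): unfold $\semba{-}$ over the conjunction into an intersection, then observe that membership forces $\alpha \subseteq \beta$ and $\beta \subseteq \alpha$. Your care about the arbitrary bracketing/ordering and the empty-conjunction edge cases is appropriate but does not change the substance.
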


\subparagraph*{Kleene algebra}
A \emph{word} over a set $\Delta$ is a sequence of symbols $d_0\cdots{}d_{n-1}$ from $\Delta$.
The \emph{empty word} is denoted $\epsilon$.
A set of words is called a \emph{language}.
Words can be \emph{concatenated}: if $w$ and $x$ are words, then $wx$ is the word where the symbols of $w$ precede those of $x$.
If $L$ and $L'$ are languages over $\Delta$, then $L \cdot L'$ is the language of pairwise concatenations from $L$ and $L'$, i.e., $\{ wx : w \in L, x \in L' \}$.
We write $L^\star$ for the \emph{Kleene closure} of $L$, which is the set $\{ w_0\cdots{}w_{n-1} : w_0, \dots, w_{n-1} \in L \}$.
This makes $\Delta^\star$ the set of all words over $\Delta$.%

We fix a finite set of symbols $\Sigma$ called the \emph{alphabet}.
The \emph{rational terms} over $\Sigma$, denoted $\termsrat$, are generated by the grammar
\[
    e, f ::= 0 \pipe 1 \pipe a \in \Sigma \pipe e + f \pipe e \cdot f \pipe e^\star \ .
\]
We write $\equivka$ for the smallest congruence on $\termsrat$ that satisfies the axioms of Kleene algebra, i.e., such that for all $e, f, g \in \termsrat$ the following hold:
\begin{mathpar}
e + 0 \equivka e \and
e + e \equivka e \and
e + f \equivka f + e \and
e + (f + g) \equivka (e + f) + g \\
e \cdot 1 \equivka e \and
e \equiv 1 \cdot e \and
e \cdot 0 \equivka 0 \and
0 \equiv 0 \cdot e \and
e \cdot (f \cdot g) \equivka (e \cdot f) \cdot g \\
e \cdot (f + g) \equivka e \cdot f + e \cdot g \and
1 + e \cdot e^\star \equivka e^\star \and
e + f \cdot g \leqqka g \implies f^\star \cdot e \leqqka g \phantom{\ ,} \\
(e + f) \cdot g \equivka e \cdot g + f \cdot g \and
1 + e^\star \cdot e \equivka e^\star \and
e + f \cdot g \leqqka f \implies e \cdot g^\star \leqqka f \ ,
\end{mathpar}
in which $e \leqqka f$ is a shorthand for $e + f \equivka f$.

The semantics of rational terms is given by $\semka{-}: \termsrat \to 2^{\Sigma^\star}$, in the following sense:
\begin{align*}
\semka{0} &= \emptyset
    & \semka{a} &= \{ a \}
    & \semka{e + f} &= \semka{e} \cup \semka{f} \\
\semka{1} &= \{ \epsilon \}
    & \semka{e^\star} &= \semka{e}^\star
    & \semka{e \cdot f} &= \semka{e} \cdot \semka{f}
      \ .
\end{align*}
It is furthermore known that $\semka{-}$ characterises $\equivka$~\cite{boffa-1990,krob-1991,kozen-1994}, as follows:
\begin{thrm}[Completeness for \KA]%
\label{theorem:ka-completeness}
Let $e, f \in \termsrat$; now $e \equivka f$ if and only if $\semka{e} = \semka{f}$.
\end{thrm}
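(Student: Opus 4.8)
The plan is to treat the two directions separately, since only one is hard. For \emph{soundness} (the ``only if'' direction) I would show that the language model $2^{\Sigma^\star}$, with $\cup$ for $+$, language concatenation for $\cdot$, and Kleene closure for $(-)^\star$, validates every \KA-axiom, including the two fixpoint implications read as quasi-equations over languages. Equivalently, the relation $\{(e,f) : \semka e = \semka f\}$ is a congruence on $\termsrat$ satisfying all the axioms, so by minimality of $\equivka$ we get $\equivka \subseteq \{(e,f) : \semka e = \semka f\}$. Most of the checks are elementary identities about concatenation and closure; the only substantive ones are the fixpoint axioms, e.g.\ that $\semka e \cup \semka f \cdot \semka g \subseteq \semka g$ implies $\semka{f^\star} \cdot \semka e \subseteq \semka g$, which holds because $\semka{f^\star} \cdot \semka e$ is the \emph{least} language $X$ with $\semka e \cup \semka f \cdot X \subseteq X$.

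The substance of the theorem is \emph{completeness} (the ``if'' direction): assuming $\semka e = \semka f$, derive $e \equivka f$. Here I would follow the automata-theoretic route. The key algebraic device is that the $n \times n$ matrices over a Kleene algebra again form a Kleene algebra, under a suitably defined matrix star; this lets one encode a finite automaton as a triple $(u, M, v)$ of an initial row vector, a transition matrix whose entries are sums of letters, and a final column vector, and to read off the term $u \cdot M^\star \cdot v \in \termsrat$. The first milestone is a ``Kleene's theorem inside \KA'': every term $e$ is provably equal, under $\equivka$, to the expression $u_e \cdot M_e^\star \cdot v_e$ of an automaton $A_e$ recognising $\semka e$, built inductively (Thompson-style) over the structure of $e$.

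The crux is then to show that two automata recognising the \emph{same} language yield $\equivka$-equal expressions. The plan is to pass to canonical forms — determinise $A_e$ and $A_f$, then minimise the resulting deterministic automata — while tracking that each transformation preserves the $\equivka$-class of the associated expression. This internalisation is the main obstacle: one must prove \emph{purely equationally} that an automaton's expression equals that of its determinisation and of its minimisation. The engine driving these proofs is the provable least-solution property of linear systems $x \equivka M \cdot x + v$, namely that $M^\star \cdot v$ is KA-provably the least solution; the two fixpoint induction axioms supply this property for the two orientations of linear systems that arise. Since $\semka e = \semka f$ forces the minimal deterministic automata to be isomorphic (by Myhill--Nerode), their expressions coincide, and chaining the established equalities yields $e \equivka u_e \cdot M_e^\star \cdot v_e \equivka u_f \cdot M_f^\star \cdot v_f \equivka f$. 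An alternative I would keep in reserve is the coalgebraic route via Brzozowski derivatives: prove the fundamental identity $e \equivka \sum_{a \in \Sigma} a \cdot \partial_a(e) + [\epsilon \in \semka e]$ and run a bisimulation-up-to-$\equivka$ argument, trading explicit determinisation for more delicate coinductive bookkeeping.
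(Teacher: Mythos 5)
The paper does not prove this theorem at all: it is imported as a known background result with a citation to Boffa, Krob and Kozen, and the argument you sketch is precisely the proof from the cited work \cite{kozen-1994} --- soundness by checking that $2^{\Sigma^\star}$ is a Kleene algebra in which $\semka{f}^\star \cdot \semka{e}$ is the least solution of the corresponding inequation, and completeness via the matrix Kleene algebra, an internalised Kleene's theorem, equational transfer across determinisation and minimisation, and Myhill--Nerode. Your outline is correct at this level of detail; the one step you gloss over, transferring $\equivka$ across determinisation and minimisation, is carried in Kozen's development by simulation lemmas of the shape $S \cdot M \equivka N \cdot S \implies S \cdot M^\star \equivka N^\star \cdot S$, which are themselves consequences of the two fixpoint rules you already identify as the engine.
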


We also work with matrices and vectors of rational terms.
Let $Q$ be a finite set.
A \emph{$Q$-vector} is a function $x: Q \to \termsrat$; a \emph{$Q$-matrix} is a function $M: Q \times Q \to \termsrat$.

Let $e \in \termsrat$, let $x$ and $y$ be $Q$-vectors, and let $M$ be a $Q$-matrix.
\emph{Addition} of vectors is defined pointwise, i.e., $x + y$ is the $Q$-vector given by $(x + y)(q) = x(q) + y(q)$.
We can also \emph{scale} vectors, writing $x \fatsemi e$ for the $Q$-vector given by $(x \fatsemi e)(q) = x(q) \cdot e$.

\emph{Multiplication} of a $Q$-vector by a $Q$-matrix yields a $Q$-vector, as expected:
\[
    (M \cdot x)(q) = \sum_{q' \in Q} M(q, q') \cdot x(q') \ .
\]
Here, $\sum$ is the usual generalisation of $+$ for some (arbitrary) choice of bracketing and order on terms; the empty sum is defined to be $0$.

We write $x \equivka y$ when $x$ and $y$ are pointwise equivalent, i.e., for all $q \in Q$ we have $x(q) \equivka y(q)$; we extend $\leqqka$ to $Q$-vectors as before.
Matrices over rational terms again obey the axioms of Kleene algebra~\cite{kozen-1994}.
As a special case, we can obtain the following:
\begin{lem}%
\label{lemma:ka-matrix-fixpoint}
Let $M$ be a $Q$-matrix.
Using the entries of $M$ and applying the operators of Kleene algebra, we can construct a $Q$-matrix $M^\star$, which has the following property.
Let $y$ be any $Q$-vector; now $M^\star \cdot y$ is the least (w.r.t.\ $\leqqka$) $Q$-vector $x$ such that $M \cdot x + y \leqqka x$.
\end{lem}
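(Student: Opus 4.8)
The plan is to build entirely on the fact, recalled just above, that the $Q$-matrices over rational terms form a Kleene algebra in their own right~\cite{kozen-1994}; this furnishes, for free, a star operation on matrices, and I take $M^\star$ to be the star of $M$ in this algebra. By construction $M^\star$ is assembled from the entries of $M$ using only Kleene algebra operators, as required. The remaining content of the lemma is that $M^\star \cdot y$ is the least solution of $M \cdot x + y \leqqka x$, and the key observation is that this is precisely the left unfold and induction axioms of Kleene algebra, read off for matrices and transported to vectors.

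To carry out this transport I would first reduce the statement about vectors to one inside the matrix Kleene algebra. Assuming $Q \neq \emptyset$ (the empty case is trivial), fix $q_0 \in Q$ and associate with each $Q$-vector $x$ the $Q$-matrix $\hat{x}$ whose $q_0$-column is $x$ and whose other entries are $0$. A direct computation shows $M \cdot \hat{x} \equivka \widehat{M \cdot x}$ and $\widehat{x + y} \equivka \hat{x} + \hat{y}$, and comparing entries column-by-column shows that $x \leqqka x'$ holds iff $\hat{x} \leqqka \hat{x'}$. Thus $x \mapsto \hat{x}$ is an order-faithful embedding of vectors into matrices, compatible with the operations appearing in the lemma.

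That $M^\star \cdot y$ solves the inequation follows from the left unfold axiom $1 + M \cdot M^\star \equivka M^\star$ for matrices: multiplying on the right by $y$ and using distributivity and $1 \cdot y \equivka y$ gives
\[
    M \cdot (M^\star \cdot y) + y
        \equivka (M \cdot M^\star + 1) \cdot y
        \equivka M^\star \cdot y
        \ ,
\]
so in particular $M \cdot (M^\star \cdot y) + y \leqqka M^\star \cdot y$. For minimality, let $x$ be any $Q$-vector with $M \cdot x + y \leqqka x$. Applying the embedding turns this into $\hat{y} + M \cdot \hat{x} \leqqka \hat{x}$ in the matrix Kleene algebra, and the left induction axiom $e + f \cdot g \leqqka g \implies f^\star \cdot e \leqqka g$, instantiated at $e := \hat{y}$, $f := M$, $g := \hat{x}$, yields $M^\star \cdot \hat{y} \leqqka \hat{x}$; pulling this back through the embedding gives $M^\star \cdot y \leqqka x$, as desired.

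The genuinely difficult ingredient, Kozen's theorem that matrices over a Kleene algebra again form a Kleene algebra — and hence that a suitable $M^\star$ can be built by block-decomposing $M$ — is imported wholesale, so no real work is left. The only step demanding any care is the passage through the embedding $x \mapsto \hat{x}$: the induction axiom is a statement about elements of a Kleene algebra, so it must be applied to the matrices $\hat{x}, \hat{y}$ rather than directly to the vectors, and one must check that the embedding respects exactly the operations $(\cdot)$, $(+)$ and $\leqqka$ used in the statement.
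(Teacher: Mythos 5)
Your proposal is correct and follows essentially the same route as the paper, which derives this lemma directly from Kozen's result that matrices over rational terms form a Kleene algebra: take $M^\star$ to be the star in that matrix algebra, and read the least-fixpoint property off the unfold and star-induction axioms. The paper leaves the transport from matrices to vectors implicit, whereas you make it explicit via the column embedding $x \mapsto \hat{x}$; that is a faithful and careful spelling-out of the same argument rather than a different one.
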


\subparagraph*{Automata and bisimulations}
We briefly recall \emph{bisimulation up to congruence} for language equivalence of automata, from~\cite{bonchi-pous-2013}.
This will be used in \cref{section:decision-procedure}.

A \emph{non-deterministic automaton} (\emph{NDA}) over an alphabet $\Sigma$ is a triple $(X,o,d)$ where $o \colon X \rightarrow 2$ is an output function, and $d \colon X \times \Sigma \rightarrow X$ a transition function. A non-deterministic finite automaton (NFA) is an NDA where $X$ is finite.
It will be convenient to characterise the semantics of an NDA $(X,o,d)$ recursively as the unique map $\ell: X \to 2^{\Sigma^\star}$ such that
\begin{mathpar}
\ell(x) = \{ \epsilon : o(x) = 1 \} \cup \bigcup_{x' \in d(x, a)} \{ a \} \cdot \ell(x') \ .
\end{mathpar}
This coincides with the standard definition of language acceptance for NDAs.
The \emph{determinisation} of an NDA $(X,o,d)$ is the deterministic automaton $(2^X, \overline{o}, \overline{d})$~\cite{rabin-scott-1959},
where
\begin{mathpar}
\overline{o}(V) =
\begin{cases}
1 & \text{if } \exists s\in V \text{ s.t. } o(s)=1 \\
0 & \text{otherwise }
\end{cases}
\and
\overline{d}(V,a)= \bigcup_{s \in V} d(s,a)
\ .
\end{mathpar}
The \emph{congruence closure} of a relation $R \subseteq 2^X \times 2^X$, denoted $R^c$, is the least equivalence relation such that $R \subseteq R^c$, and if $(C,D)\in R^c$ and $(E,F)\in R^c$ then $(C\cup E, D\cup F )\in R^c$.
A \emph{bisimulation up to congruence} for an NDA  $(X,o,d)$ is a relation $R \subseteq 2^X \times 2^X$ such that for all $(V,W) \in R$, we have $\bar{o}(V)=\bar{o}(W)$ and furthermore for all $a \in \Sigma$, we have $(\bar{d}(V,a),\bar{d}(W,a))\in R^c$.
Bisimulations up to congruence give a proof technique for language equivalence of states of non-deterministic automata, as a consequence of the following~\cite{bonchi-pous-2013}.

\begin{thrm}\label{theorem:bisim}
	Let $(X,o,d)$ be an NDA\@. For all $U, V \in 2^X$,
	we have $\bigcup_{x \in U} \ell(x) = \bigcup_{x \in V} \ell(x)$ if and only if
	there exists a bisimulation up to congruence $R$ for $(X,o,d)$ such that
	$(U,V)\in R$.
\end{thrm}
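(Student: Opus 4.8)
The plan is to prove the two implications separately, after recording a few facts about the determinisation. Write $L(V) = \bigcup_{x \in V} \ell(x)$ for $V \in 2^X$. Unfolding the recursive characterisation of $\ell$ together with the definitions of $\overline{o}$ and $\overline{d}$ gives two properties that I will use throughout: $\epsilon \in L(V)$ if and only if $\overline{o}(V) = 1$, and for every $a \in \Sigma$ we have $\{\, w : aw \in L(V) \,\} = L(\overline{d}(V, a))$. Crucially, $L$, $\overline{o}$ and each $\overline{d}(-, a)$ are \emph{linear} in their set argument: directly from the definitions, $L(V \cup W) = L(V) \cup L(W)$, $\overline{o}(V \cup W) = 1$ exactly when $\overline{o}(V) = 1$ or $\overline{o}(W) = 1$, and $\overline{d}(V \cup W, a) = \overline{d}(V, a) \cup \overline{d}(W, a)$. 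This linearity is the engine behind the whole argument.

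For the ``only if'' direction, suppose $L(U) = L(V)$. I would take ${\sim} = \{\, (V, W) : L(V) = L(W) \,\}$ as the witness. The two displayed properties of $L$ show immediately that $\sim$ is a bisimulation: if $L(V) = L(W)$ then $\epsilon \in L(V) \iff \epsilon \in L(W)$, so $\overline{o}(V) = \overline{o}(W)$; and $L(\overline{d}(V,a))$ and $L(\overline{d}(W,a))$ are the $a$-derivatives of equal languages, hence equal, so $(\overline{d}(V,a), \overline{d}(W,a)) \in {\sim}$. Since trivially ${\sim} \subseteq {\sim}^c$, the relation $\sim$ is in particular a bisimulation up to congruence, and it contains $(U, V)$ by hypothesis. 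This settles the easy direction.

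The substance is in the ``if'' direction. Suppose $R$ is a bisimulation up to congruence with $(U, V) \in R$. The key lemma is that $R^c$ is then an \emph{ordinary} bisimulation for the determinisation, i.e.\ $(V, W) \in R^c$ implies $\overline{o}(V) = \overline{o}(W)$ and $(\overline{d}(V,a), \overline{d}(W,a)) \in R^c$ for all $a$. I would prove this by induction on the construction of $R^c$ as the least equivalence containing $R$ and closed under pairwise union. The base case is exactly the hypothesis that $R$ is a bisimulation up to congruence; reflexivity, symmetry and transitivity of the bisimulation property are immediate, the last clause using transitivity of $R^c$ on the derivatives; and the union-closure step is precisely where linearity of $\overline{o}$ and $\overline{d}$ is invoked, together with the fact that $R^c$ is itself closed under pairwise union. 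Once $R^c$ is known to be a bisimulation, a routine induction on the length of a word $w$ shows that a bisimulation on the deterministic automaton relates only language-equivalent states, so $(V, W) \in R^c$ implies $L(V) = L(W)$. Applying this to $(U, V) \in R \subseteq R^c$ yields $L(U) = L(V)$, as required.

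The main obstacle is the key lemma, and within it the union-closure case: one must check that the bisimulation property, phrased via membership in $R^c$ rather than plain language equality, survives taking unions of pairs. This goes through only because the output and transition maps of the determinisation distribute over union, which is exactly what lets the congruence closure $R^c$ mediate between the components of a union without loss of information; the rest of the proof is bookkeeping on the inductive closure and a standard word-length induction.
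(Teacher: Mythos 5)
Your argument is correct. Note that the paper itself gives no proof of this statement: it is imported wholesale from Bonchi and Pous (2013), so there is nothing in the text to compare against line by line. Your reconstruction is the standard soundness-and-completeness argument for bisimulation up to congruence, instantiated concretely: the ``only if'' direction observes that language equivalence of sets of states is itself an ordinary bisimulation on the determinisation (hence trivially a bisimulation up to congruence, since $R \subseteq R^c$), and the ``if'' direction shows that $R^c$ is an ordinary bisimulation by induction on the derivation of membership in $R^c$ (base pairs from $R$, reflexivity, symmetry, transitivity, and pairwise union), with the union case discharged by the linearity of $\overline{o}$ and $\overline{d}(-,a)$ and the union-closure of $R^c$, followed by a word-length induction showing that bisimilar states of the determinisation are language equivalent. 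All steps check out; the only presentational difference from the cited source is that Bonchi and Pous obtain the result as an instance of their general theory of compatible up-to functions (the congruence-closure function is compatible with the bisimulation functional), whereas you prove the concrete instance directly, which is more elementary and self-contained but less reusable. One cosmetic caveat: the induction hypothesis in your key lemma mentions membership of derivatives in $R^c$, which is legitimate only because $R^c$ is a fixed, already-defined relation and the induction is on derivations of membership; it is worth saying this explicitly to preempt the appearance of circularity.
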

In~\cite{bonchi-pous-2013}, it is shown how the construction of bisimulations up to congruence leads to a very efficient algorithm for language equivalence.

\section{The problem with CKAT}%
\label{section:problem-with-ckat}

Our motivation for adjusting the axioms of KAT is based on the fact that conjunction and sequential composition are distinct when interleaving is involved, because sequential composition leaves a ``gap'' between tests that might be used by another thread, possibly changing the outcome of the second test.
We now formalise this, by detailing how combining KAT and CKA to obtain CKAT (as in~\cite{jipsen-moshier-2016}) leads to the absurd equation $p \cdot e \cdot \overline{p} \equivckat 0$.

\subparagraph*{Kleene algebra with tests}
To obtain KAT, we enrich rational terms with propositions; concretely, the set of \emph{guarded rational terms} over $\Sigma$ and $\Omega$, denoted $\termsgrat$, is generated by
\[
    e, f ::= 0 \pipe 1 \pipe a \in \Sigma \pipe p \in \termsprop \pipe e + f \pipe e \cdot f \pipe e^\star \ .
\]
We define $\equivkat$ as the smallest congruence on $\termsgrat$ which contains $\equivba$ and obeys the axioms of $\equivka$ (e.g., $e + e \equivkat e$).
Furthermore, $\equivkat$ should relate constants and operators on propositional subterms: for all $p, q \in \termsprop$ it holds that%
\footnote{%
    This is a slightly contrived definition of KAT\@; one usually presents the constants and operators using the same symbols~\cite{kozen-1996}.
    To contrast KAO and KAT it is helpful to make this identification explicit.
}%
\begin{mathpar}
\bot \equivkat 0 \and
\top \equivkat 1 \and
p \vee q \equivkat p + q \and
p \wedge q \equivkat p \cdot q
\ .
\end{mathpar}

Guarded rational terms relate to programs by viewing actions as statements and propositions as assertions.
The last axiom is therefore not strange: if we assert $\mathtt{x = 1}$ followed by $\mathtt{y = 1}$ then surely, if $\mathtt{x}$ and $\mathtt{y}$ remain the same, this is equivalent to asserting $\mathtt{x = 1 \wedge y = 1}$.

\subparagraph*{Concurrent Kleene algebra}
To obtain CKA, we add a parallel composition operator to KA\@.
Concretely, the set of \emph{series-rational terms}~\cite{lodaya-weil-2000} over $\Sigma$, denoted $\termssrat$, is generated by
\[
    e, f ::= 0 \pipe 1 \pipe a \in \Sigma \pipe e + f \pipe e \cdot f \pipe e \parallel f \pipe e^\star \ .
\]
We define $\equivcka$ as the smallest congruence on $\termssrat$ which obeys the axioms that generate $\equivka$, as well as the following for all $e, f, g, h \in \termssrat$:
\begin{mathpar}
e \parallel f \equivcka f \parallel e \and
e \parallel 1 \equivcka e \and
e \parallel 0 \equivcka 0 \and
(e + f) \parallel g \equivcka e \parallel g + f \parallel g \\
e \parallel (f \parallel g) \equivcka (e \parallel f) \parallel g \and
(e \parallel f) \cdot (g \parallel h) \leqqcka (e \cdot g) \parallel (f \cdot h) \ ,
\end{mathpar}
in which $e \leqqcka f$ is an abbreviation for $e + f \equivcka f$.

Here, $0$ annihilates parallel composition because $0$ corresponds to the program without valid traces; hence, running $e$ in parallel with $0$ also cannot yield any traces.
Distributivity of $\parallel$ over $+$ witnesses that a choice within a thread can also be made outside that thread.

The last axiom, called the \emph{exchange law}~\cite{hoare-moeller-struth-wehrman-2009}, encodes interleaving.
Intuitively, it says that when programs $e \cdot g$ and $f \cdot h$ run in parallel, their behaviour includes running their heads in parallel (i.e., $e \parallel f$) followed by their tails (i.e., $g \parallel h$).
Taken to its extreme, the exchange law says that the behaviour of a program includes its complete linearisations.

\subparagraph*{Concurrent Kleene algebra with tests}
To define CKAT~\cite{jipsen-moshier-2016}, we choose \emph{guarded series-rational terms} over $\Sigma$ and $\Omega$, denoted $\termsgsrat$, as those generated by the grammar
\[
    e, f ::= 0 \pipe 1 \pipe a \in \Sigma \pipe p \in \termsprop \pipe e + f \pipe e \cdot f \pipe e \parallel f \pipe e^\star \ .
\]
Now, $\equivckat$ is the least congruence on $\termsgsrat$ obeying the axioms of $\equivkat$ and $\equivcka$.

If we view guarded series-rational terms as representations of programs, and use $\equivckat$ to reason about them, then we should expect to obtain sensible statements about programs, since the axioms that underpin CKAT seem reasonable in that context.

To test this hypothesis, consider the guarded series-rational term $p \cdot e \cdot \overline{p}$, which represents a program that first performs an assertion $p$, then runs a program described by $e$, and asserts the negation of $p$.
There are choices of $p$ and $e$ that should make $p \cdot e \cdot \overline{p}$ describe a program with valid behaviour; for instance, $p$ could assert that $\mathtt{x = 1}$, and $e$ could be the program $\mathtt{x \leftarrow 0}$.
Hence, by our hypothesis, $p \cdot e \cdot \overline{p}$ should not, in general, be equivalent to $0$, the program without any valid behaviour.
Unfortunately, the opposite is true.

\begin{ant}%
\label{antinomy:problem}
Let $e \in \termsgsrat$ and $p \in \termsprop$; now $p \cdot e \cdot \overline{p} \equivckat 0$.
\end{ant}
\begin{proof}
By the axiom that $1$ is the unit of $\parallel$ and the exchange law, we derive that
\[
    p \cdot e \cdot \overline{p}
        \equivckat (p \parallel 1) \cdot (1 \parallel e) \cdot \overline{p}
        \leqqckat ((p \cdot 1) \parallel (1 \cdot e)) \cdot \overline{p}
        \ .
\]
By the same reasoning, and the axiom that $1$ is the unit of $\cdot$, we have that
\[
    ((p \cdot 1) \parallel (1 \cdot e)) \cdot \overline{p}
        \equivckat (p \parallel e) \cdot (\overline{p} \parallel 1)
        \leqqckat (p \cdot \overline{p}) \parallel (e \cdot 1)
        \ .
\]
Applying the unit, and using the identification of $\wedge$ and $\cdot$ on tests, we find that
\[
    (p \cdot \overline{p}) \parallel (e \cdot 1)
        \equivckat (p \cdot \overline{p}) \parallel e
        \equivckat (p \wedge \overline{p}) \parallel e
        \equivckat \bot \parallel e
        \ .
\]
Since $\bot$ and $0$ are identified, and $0$ annihilates parallel composition, we have that
\[
    \bot \parallel e
        \equivckat 0 \parallel e
        \equivckat 0
        \ .
\]
By the above, we have $p \cdot e \cdot \overline{p} \leqqckat 0$.
Since $0 \leqqckat p \cdot e \cdot \overline{p}$, the claim follows.
\end{proof}

Another way of contextualising the above is to consider that propositional Hoare logic can be encoded in KAT~\cite{kozen-2001}.
Concretely, a propositional Hoare triple $\{ p \} S \{ q \}$ is valid if $p \cdot e_S \cdot \overline{q} \equivkat 0$, where $e_S$ is a straightforward encoding of $S$.
It stands to reason that sequential programs should similarly encode in CKAT\@.
However, if we apply that line of reasoning to the above, then \emph{any} Hoare triple of the form $\{ p \} S \{ p \}$ is valid, which would mean that \emph{any property is an invariant of any program}.

\section{Kleene algebra with observations}%
\label{section:introducing-kao}

We now propose KAO as an alternative way of embedding Boolean guards in rational terms.
This approach should prevent \cref{antinomy:problem}, and thus make KAO more suitable for an extension with concurrency.
We start by motivating how we adapt KAT, before proceeding with a language model and a generalisation of partial derivatives to KAO\@.

\subsection{From tests to observations}
The root of the problem in \cref{antinomy:problem} is the axiom $p \cdot q \equivkat p \wedge q$, telling us that $(p \cdot \overline{p}) \parallel e \equivckat (p \wedge \overline{p}) \parallel e$.
Interpreted as programs, these are different: in one, $e$ can be interleaved between $p$ and $\overline{p}$, which the other does not allow.

To fix this problem, we propose a new perspective on Boolean guards.
Rather than considering a guard a \emph{test}, which in KAT entails an assertion valid from the last action to the next, we consider a guard an \emph{observation}, i.e., an assertion valid at that particular point in the execution of the program.
This weakens the connection between conjunction and concatenation: if a program observes $p$ followed by $q$, there is no guarantee that $p$ and $q$ can be observed simultaneously.
Hence, we drop the equivalence between conjunction and concatenation of tests.
We call this system \emph{weak Kleene algebra with observations}; like KAT, its axioms are based on the axioms of Boolean algebra and Kleene algebra, as follows.
\begin{definition}[\WKAO axioms]
We define $\equivwkao$ as the smallest congruence on $\termsgrat$ that contains $\equivba$ and also obeys the axioms of $\equivka$.
Furthermore, $\bot \equivwkao 0$ and for all $p, q \in \termsprop$ it holds that $p \vee q \equivwkao p + q$.
\end{definition}

Since conjunction and concatenation no longer coincide, we have also dropped the axiom that relates their units.
This can be justified from our shift in perspective: $\top$, i.e., the observation that always succeeds, is an action that leaves some record in the behaviour of the program, whereas $1$, i.e., the program that does nothing, has no such obligation.%
\footnote{%
    We refer to~\cite[Appendix E]{kao-full} for further algebraic details on the consequences of identifying these units.
}

As hinted before, it may happen that when a program observes $p$ followed by $q$, both these assertions are true simultaneously, i.e., their conjunction could have been observed; hence, the behaviour of observing $p$ \emph{and} $q$ should be contained in the behaviour of observing $p$ \emph{and then} $q$.
For instance, consider the pseudo-programs $S = \mathtt{assert}\ p \wedge q;\ e$ and $S' = \mathtt{assert}\ p;\ \mathtt{assert}\ q;\ e$.
Here, $S$ asserts that $p$ and $q$ hold at the same time before proceeding with $e$, while $S'$ first asserts $p$, and then $q$.
The behaviour of $S$ should be included in that of $S'$: if $p$ and $q$ are simultaneously observable, then the first two observations might take place simultaneously.
Encoding this in an additional axiom leads to KAO proper.

\begin{definition}[\KAO axioms]
We define $\equivkao$ as the smallest congruence on $\termsgrat$ that contains $\equivwkao$, and furthermore satisfies the \emph{contraction law}: for all $p, q \in \termsprop$ it holds that $p \wedge q \leqqkao p \cdot q$ --- where $e \leqqkao f$ is a shorthand for $e + f \equivkao f$.
\end{definition}

We briefly return to our example programs.
If we encode $S$ as $(p \wedge q) \cdot e$ and $S'$ as $p \cdot q \cdot e$, then the behaviour of $S$ is contained in that of $S'$, because $(p \wedge q) \cdot e \leqqkao p \cdot q \cdot e$.

\begin{remark}
  In the presence of the other axioms, $p \wedge q \leqqkao p \cdot q$ is equivalent to $p\leqqkao p\cdot p$.
  Indeed, the second inequality may be inferred from the first, since $p\equivkao p\wedge p$.
  The converse implication is obtained as follows: $p\wedge q\leqqkao \left(p\wedge q\right)\cdot\left(p\wedge q\right)\leqqkao p\cdot q$, using that $p\wedge q\leqqkao p,q$.
  While the contraction law is more convenient to compare terms, the axiom $p\leqqkao p\cdot p$ will serve implicitly as the basis for the contraction relation $\contr$ defined in the next section.
\end{remark}

\subsection{A language model}

The Kleene algebra variants encountered thus far have models based on rational languages, which characterise the equivalences derivable using the axioms.
To find a model for guarded rational terms which corresponds to KAO, we start with a model of weak KAO\@:
\begin{definition}[\WKAO semantics]
Let $\Gamma = \at \cup \Sigma$.
Languages over $\Gamma$ are called \emph{observation languages}.
We define $\semwkao{-}: \termsgrat \to 2^{\Gamma^\star}$ as follows:
\begin{align*}
\semwkao{0} &= \emptyset       & \semwkao{a} &= \{ a \}    & \semwkao{e + f}     &= \semwkao{e} \cup \semwkao{f}  & \semwkao{e^\star}       &= \semwkao{e}^\star \\
\semwkao{1} &= \{ \epsilon \}  & \semwkao{p} &= \semba{p}  & \semwkao{e \cdot f} &= \semwkao{e} \cdot \semwkao{f} \ ,
\end{align*}
where it is understood that $\semba{p} \subseteq \at \subseteq \Gamma \subseteq \Gamma^\star$.
\end{definition}

Observation languages are a model for guarded rational terms w.r.t.\ $\equivwkao$:

\begin{restatable}[\WKAO soundness]{lem}{wkaosoundness}%
\label{lemma:wkao-soundness}
Let $e, f \in \termsgrat$.
Now, if $e \equivwkao f$, then $\semwkao{e} = \semwkao{f}$.
\end{restatable}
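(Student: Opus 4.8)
The plan is to exploit the fact that $\equivwkao$ is, by definition, the \emph{smallest} congruence on $\termsgrat$ with the stated closure properties. I would introduce the kernel of the semantic map, $R = \{ (e, f) \in \termsgrat \times \termsgrat : \semwkao{e} = \semwkao{f} \}$, and show that $R$ itself satisfies every defining requirement of $\equivwkao$. Minimality then gives $\equivwkao \subseteq R$, which is exactly the claim. (Equivalently, one could induct on a derivation of $e \equivwkao f$; the proof obligations are identical, one per generating axiom and per congruence rule.)

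Most of the requirements are discharged directly from the compositional definition of $\semwkao{-}$. First, $R$ is trivially an equivalence relation, being the kernel of a function, and it is a congruence for $+, \cdot, {\star}$ because each clause of $\semwkao{-}$ (e.g.\ $\semwkao{e + f} = \semwkao{e} \cup \semwkao{f}$) determines the semantics of a compound term from those of its immediate subterms. Second, $R \supseteq \equivba$: on propositional terms $\semwkao{-}$ coincides with $\semba{-}$, so $p \equivba q$ yields $\semba{p} = \semba{q}$ by \cref{theorem:ba-completeness}, hence $(p,q) \in R$. Third, the two bridging axioms are one-line computations, namely $\semwkao{\bot} = \semba{\bot} = \emptyset = \semwkao{0}$ and $\semwkao{p \vee q} = \semba{p} \cup \semba{q} = \semwkao{p} \cup \semwkao{q} = \semwkao{p + q}$.

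The remaining, and most delicate, requirement is that $R$ obeys the axioms of $\equivka$. The clean way to obtain all of them uniformly is to observe that $(2^{\Gamma^\star}, \cup, \cdot, {\star}, \emptyset, \{\epsilon\})$ is itself a Kleene algebra, and that $\semwkao{-}$ is precisely the interpretation that sends the generators $0, 1, a, p$ to $\emptyset, \{\epsilon\}, \{a\}, \semba{p}$ and commutes with $+, \cdot, {\star}$. Every \emph{equational} KA axiom then holds in $R$ because it holds identically in any Kleene algebra. The two \emph{quasi-equational} fixpoint axioms require slightly more care, since they are implications: unfolding $\leqqwkao$ as set inclusion, I must check that the premise $\semwkao{e} \cup \semwkao{f} \cdot \semwkao{g} \subseteq \semwkao{g}$ entails $\semwkao{f}^\star \cdot \semwkao{e} \subseteq \semwkao{g}$ (and symmetrically for the right-hand rule), which is exactly the least-fixpoint characterisation of the Kleene star in the language model. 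I expect this verification of the fixpoint rules to be the only genuinely non-mechanical step; everything else reduces to the compositional definition of $\semwkao{-}$ together with soundness of Boolean algebra.
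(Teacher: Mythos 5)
Your proposal is correct and follows essentially the same route as the paper: reduce to checking the generating pairs of $\equivwkao$ (congruence closure being immediate from compositionality of $\semwkao{-}$), handle the Boolean-algebra pairs via the coincidence of $\semwkao{-}$ with $\semba{-}$ on $\termsprop$ together with \cref{theorem:ba-completeness}, handle the Kleene-algebra pairs by observing that $2^{\Gamma^\star}$ with the relevant operations is itself a Kleene algebra, and verify the two bridging axioms by direct computation. Your extra attention to the quasi-equational fixpoint rules is a detail the paper leaves implicit but does not change the argument.
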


More work is necessary to get a model of guarded rational terms w.r.t.\ $\equivkao$.
In particular, $\semwkao{-}$ does not preserve the contraction law: if $o_1, o_2 \in \Omega$, then
\begin{mathpar}
\semwkao{o_1 \wedge o_2} = \{ \alpha \in \at : o_1, o_2 \in \alpha \} \and
\semwkao{o_1 \cdot o_2}  = \{ \alpha\beta \in \at \cdot \at : o_1 \in \alpha, o_2 \in \beta \} \ ,
\end{mathpar}
meaning that $\semwkao{o_1 \wedge o_2} \not\subseteq \semwkao{o_1 \cdot o_2}$, despite $o_1 \wedge o_2 \leqqkao o_1 \cdot o_2$.%
\footnote{%
    Incidentally, this shows that the contraction law is independent of the axioms that build $\equivwkao$, i.e., that those axioms are not sufficient to prove the contraction law.
}

To obtain a sound model, we need the counterpart of the contraction law on the level of observation languages, which works out as follows

\begin{definition}[Contraction]
We define $\contr$ as the smallest relation on $\Gamma^\star$ s.t.
\begin{mathpar}
\inferrule{%
    w,x \in \Gamma^\star \\
    \alpha \in \at
}{%
    w\alpha{}x \contr w\alpha\alpha{}x
}
\ .
\end{mathpar}
When $L \subseteq \Gamma^\star$, we write $\down{L}$ for the $\contr$-closure of $L$, that is to say, the smallest observation language such that $L \subseteq \down{L}$, and if $w \contr x$ with $x \in \down{L}$, then $w \in \down{L}$.
\end{definition}
Thus, $\down{L}$ contains all words obtained from words in $L$ by contracting any number of repeated atoms (but not actions) into one.
Applying this closure to the semantics encodes the intuition that repeated observations may also correspond to just one step in the execution.

With these tools, we can define a model of KAO as follows.
\begin{definition}[\KAO semantics]
We define $\semkao{-}: \termsgrat \to 2^{\Gamma^\star}$ by $\semkao{e} = \down{\semwkao{e}}$.
\end{definition}

Alternatively, we can describe $\semkao{-}$ compositionally, using the following.

\begin{restatable}{lem}{closurealmostcommutes}%
\label{lemma:closure-almost-commutes}
Let $e, f \in \termsgrat$.
The following hold:
    (i)~$\semkao{e + f} = \semkao{e} \cup \semkao{f}$, and
    (ii)~$\semkao{e \cdot f} = \down{(\semkao{e} \cdot \semkao{f})}$, and
    (iii)~$\semkao{e^\star} = \down{\semkao{e}^\star}$.
\end{restatable}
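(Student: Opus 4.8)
The plan is to reduce all three identities to abstract properties of the closure operator $\down{(-)}$, exploiting the compositionality of $\semwkao{-}$ together with the definition $\semkao{e} = \down{\semwkao{e}}$. Writing $L = \semwkao{e}$ and $M = \semwkao{f}$, and using $\semwkao{e+f} = L \cup M$, $\semwkao{e \cdot f} = L \cdot M$ and $\semwkao{e^\star} = L^\star$, the three claims become purely language-theoretic statements over $\Gamma$: namely (i) $\down{(L \cup M)} = \down{L} \cup \down{M}$, (ii) $\down{(L \cdot M)} = \down{(\down{L} \cdot \down{M})}$, and (iii) $\down{(L^\star)} = \down{(\down{L})^\star}$. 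So it suffices to prove these three facts about $\down{(-)}$ and the relation $\contr$, after which each part of the lemma follows by unfolding definitions.

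First I would establish a concrete description of the closure: $\down{L} = \{ w \in \Gamma^\star : \exists x \in L.\ w \contr^\ast x \}$, where $\contr^\ast$ is the reflexive--transitive closure of $\contr$; this is a routine induction on the defining properties of the closure. From this characterisation it is immediate that $\down{(-)}$ is monotone and idempotent, and that it distributes over arbitrary unions, since membership is witnessed elementwise. In particular, (i) follows at once because $w \contr^\ast x$ for some $x \in L \cup M$ iff $x$ lies in $L$ or in $M$.

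The heart of the argument is a congruence property: single-step contraction is compatible with concatenation, i.e. $u \contr u'$ implies $wuv \contr wu'v$ for all $w, v \in \Gamma^\star$, which holds because the defining rule $w\alpha x \contr w\alpha\alpha x$ already contracts inside an arbitrary context. Passing to $\contr^\ast$ then yields that $u \contr^\ast u'$ and $v \contr^\ast v'$ together imply $uv \contr^\ast u'v'$. With this, both (ii) and (iii) follow from the same two-inclusion pattern. One inclusion uses monotonicity together with $L \subseteq \down{L}$ and $M \subseteq \down{M}$. For the converse, the congruence property shows $\down{L} \cdot \down{M} \subseteq \down{(L \cdot M)}$ --- given $u \contr^\ast u' \in L$ and $v \contr^\ast v' \in M$ we get $uv \contr^\ast u'v' \in L \cdot M$ --- and applying $\down{(-)}$ to both sides, using idempotence, collapses the outer closure. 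Part (iii) is identical, iterating the congruence property across the $n$ factors of a word in $(\down{L})^\star$ to land in $\down{(L^\star)}$.

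The only genuinely subtle point --- and the reason an \emph{outer} $\down{(-)}$ is unavoidable in (ii) and (iii), rather than a plain commutation $\semkao{e \cdot f} = \semkao{e} \cdot \semkao{f}$ --- is that contraction can occur at the boundary between two concatenated words: a trailing atom $\alpha$ of a word from $L$ and a leading $\alpha$ of a word from $M$ may merge. Such boundary contractions are absent from $\down{L} \cdot \down{M}$ but are recovered by the outer closure, which is precisely what the congruence argument exploits. I expect the main care in writing up to lie in bookkeeping the contexts in the congruence property and in verifying the explicit characterisation of $\down{(-)}$; the remaining manipulations are routine facts about closure operators.
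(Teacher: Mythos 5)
Your proof is correct and follows essentially the same route as the paper's: both rest on the characterisation of $\down{L}$ via the reflexive--transitive closure $\contr^\ast$ and on the compatibility of $\contr^\ast$ with concatenation (which the paper uses implicitly in the chain $w \contr^\ast xy \contr^\ast xv \contr^\ast uv$ for part~(ii)). The only cosmetic difference is in part~(iii), where the paper reduces to parts~(i) and~(ii) via the decomposition $L^\star = \bigcup_{n} L^n$, whereas you iterate the congruence property directly over the factors --- both are fine.
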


It is now straightforward to check that $\semkao{-}$ preserves the axioms of $\equivkao$.
\begin{restatable}[\KAO soundness]{lem}{soundness}%
\label{lemma:soundness}
Let $e, f \in \termsgrat$.
Now, if $e \equivkao f$, then $\semkao{e} = \semkao{f}$.
\end{restatable}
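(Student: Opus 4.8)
The plan is to exploit the fact that $\equivkao$ is, by definition, the \emph{least} congruence on $\termsgrat$ that extends $\equivwkao$ and validates the contraction law. Accordingly, I set $R = \{(e,f) : \semkao{e} = \semkao{f}\}$ and aim to show that $R$ is itself a congruence with these two properties; minimality of $\equivkao$ then gives $\equivkao \subseteq R$, which is exactly the claim. So the whole argument reduces to three verifications: that $R$ is a congruence, that $\equivwkao \subseteq R$, and that $R$ validates the contraction law (together with, if one reads the conditional Kleene-algebra rules as part of the generating axioms, their soundness).

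First I would check that $R$ is a congruence. Reflexivity, symmetry and transitivity are immediate, and compatibility with $+$, $\cdot$ and $(-)^\star$ is read off directly from the compositional description of $\semkao{-}$ in \cref{lemma:closure-almost-commutes}: if $\semkao{e} = \semkao{e'}$ and $\semkao{f} = \semkao{f'}$ then, for instance, $\semkao{e \cdot f} = \down{(\semkao{e} \cdot \semkao{f})} = \down{(\semkao{e'} \cdot \semkao{f'})} = \semkao{e' \cdot f'}$, and analogously for the other operations. The inclusion $\equivwkao \subseteq R$ is equally short: by \cref{lemma:wkao-soundness}, $e \equivwkao f$ gives $\semwkao{e} = \semwkao{f}$, and applying the closure operation to both sides yields $\semkao{e} = \down{\semwkao{e}} = \down{\semwkao{f}} = \semkao{f}$.

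The genuinely new ingredient is the soundness of the contraction law, i.e.\ $\semkao{p \wedge q} \subseteq \semkao{p \cdot q}$, which I would establish by computing both sides. Since $\semba{p \wedge q} = \semba{p} \cap \semba{q}$ is a set of single atoms and no atom admits a nontrivial contraction, $\semkao{p \wedge q} = \down{(\semba{p} \cap \semba{q})} = \semba{p} \cap \semba{q}$. On the other hand, by \cref{lemma:closure-almost-commutes} (and since $\semba{p},\semba{q}$ consist of single atoms) we have $\semkao{p \cdot q} = \down{(\semba{p} \cdot \semba{q})}$; and for any $\alpha \in \semba{p} \cap \semba{q}$ the word $\alpha\alpha$ lies in $\semba{p} \cdot \semba{q}$ with $\alpha \contr \alpha\alpha$, whence $\alpha \in \down{(\semba{p} \cdot \semba{q})}$. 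This yields the required inclusion.

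I expect the main obstacle to be the conditional Kleene-algebra axioms (the two star-induction rules), should they need to be revalidated at the level of $\semkao{-}$. The key observation that makes them go through is that every language in the image of $\semkao{-}$ is $\contr$-closed, since $\semkao{g} = \down{\semwkao{g}}$ and the operation $L \mapsto \down{L}$ is idempotent. Using this, the semantic reading of a premise $e + f \cdot g \leqqkao g$, namely $\semkao{e} \cup \down{(\semkao{f} \cdot \semkao{g})} \subseteq \semkao{g}$, collapses by monotonicity and idempotence of this closure to the plain inclusions $\semkao{e} \subseteq \semkao{g}$ and $\semkao{f} \cdot \semkao{g} \subseteq \semkao{g}$. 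Since observation languages over $\Gamma$ form the standard Kleene-algebra language model, the language-level induction principle then gives $(\semkao{f})^\star \cdot \semkao{e} \subseteq \semkao{g}$; re-applying the closure and using $\contr$-closedness of $\semkao{g}$ together with \cref{lemma:closure-almost-commutes} turns this into $\semkao{f^\star \cdot e} \subseteq \semkao{g}$, as desired, and the symmetric rule is handled dually.
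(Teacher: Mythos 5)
Your proof is correct and follows essentially the same route as the paper's: the paper likewise proceeds by induction on the construction of $\equivkao$, handling the $\equivwkao$-pairs via \cref{lemma:wkao-soundness} and an application of $\down{(-)}$, the contraction law via the observation that $\alpha \contr \alpha\alpha$ places $\alpha$ in $\down{\semwkao{p \cdot q}}$, and the congruence rules via \cref{lemma:closure-almost-commutes} --- your ``kernel of $\semkao{-}$ is a congruence with the required properties'' formulation is the same argument read through minimality. Your final paragraph on the star-induction rules is a point the paper's own proof of this lemma does not spell out (it discharges the Kleene-algebra axioms only at the $\semwkao{-}$ level), and your use of $\contr$-closedness and idempotence of $\down{(-)}$ to push the premise and conclusion through the closure is a sound way to cover it.
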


\subsection{Partial derivatives}\label{partial-derivatives}

\emph{Derivatives}~\cite{brzozowski-1964} are a powerful tool in Kleene algebra variants.
In the context of programs, we can think of derivatives as an operational semantics; they tell us whether the term represents a program that can halt immediately (given by the \emph{termination map}), as well as the program that remains to be executed once an action is performed (given by the \emph{continuation map}).

Derivatives are typically compatible with the congruences used to reason about terms; what's more, they are closely connected to finite automata~\cite{brzozowski-1964}.
This makes them useful for reasoning about language models~\cite{kozen-2001,bonchi-pous-2013,pous-2014}.
We therefore introduce a form of derivatives of guarded rational terms that works well with $\equivkao$.
Let us start by giving the termination map, which is close to the termination map of rational terms compatible with $\equivka$.
\begin{definition}[Termination]\label{def:epsilon}
We define $\epsilon: \termsgrat \to 2$ inductively, as follows:
\begin{align*}
\epsilon(0) &= 0  & \epsilon(a) &= 0  & \epsilon(e + f)     &= \max \{ \epsilon(e), \epsilon(f) \}  & \epsilon(e^\star) &= 1 \\
\epsilon(1) &= 1  & \epsilon(p) &= 0  & \epsilon(e \cdot f) &= \min \{ \epsilon(e), \epsilon(f) \} \ .
\end{align*}
\end{definition}

To check that $\epsilon$ characterises guarded rational terms that can terminate, we record the following lemma, which says that $\epsilon(e) = 1$ precisely when $\semkao{e}$ includes the empty string.

\begin{restatable}{lem}{accvsunit}%
\label{lemma:acc-vs-unit}
Let $e \in \termsgrat$.
Now $\epsilon(e) \leqqkao e$, and $\epsilon(e) = 1$ if and only if $\epsilon \in \semkao{e}$.
\end{restatable}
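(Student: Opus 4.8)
The plan is to prove the two conjuncts separately, each by structural induction on $e$: the inequality $\epsilon(e) \leqqkao e$ is purely algebraic and lives inside $\equivkao$, whereas the equivalence $\epsilon(e) = 1 \iff \epsilon \in \semkao{e}$ concerns the language model, so I would keep the two arguments apart.

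For the inequality, I would read $\epsilon(e) \in \{0,1\}$ as the term $0$ or $1$. Whenever $\epsilon(e) = 0$ the claim is immediate since $0$ is least, so only the cases forcing $\epsilon(e) = 1$ need attention. For $e = 1$ this is trivial; for $e = g^\star$ it is exactly the \KA unrolling axiom $1 + g \cdot g^\star \equivkao g^\star$, which yields $1 \leqqkao g^\star$. For $e = e_1 + e_2$ with $\epsilon(e) = 1$, some summand, say $e_1$, satisfies $\epsilon(e_1) = 1$, so $1 \leqqkao e_1 \leqqkao e_1 + e_2$ by induction. For $e = e_1 \cdot e_2$, the value $\epsilon(e) = 1$ forces $\epsilon(e_1) = \epsilon(e_2) = 1$, so $1 \equivkao 1 \cdot 1 \leqqkao e_1 \cdot e_2$ by the two induction hypotheses and monotonicity of $\cdot$. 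Every step uses only axioms already present in $\equivka \subseteq \equivkao$.

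For the equivalence, the key preliminary observation is that the contraction closure never creates the empty word: each step $w\alpha x \contr w\alpha\alpha x$ strictly lengthens its word and needs an atom on the left, so $\epsilon$ is the left-hand side of no $\contr$-step. Hence $\epsilon \in \down{L} \iff \epsilon \in L$, and in particular $\epsilon \in \semkao{e} \iff \epsilon \in \semwkao{e}$ since $\semkao{e} = \down{\semwkao{e}}$. It then suffices to show $\epsilon(e) = 1 \iff \epsilon \in \semwkao{e}$ by induction: the empty word lies in $\semwkao{1} = \{\epsilon\}$ and in every $\semwkao{g^\star}$, is absent from $\semwkao{0}$, $\semwkao{a}$, and $\semwkao{p} = \semba{p} \subseteq \at$ (all of whose words have length one), and respects $\cup$ and $\cdot$, matching the $\max$ and $\min$ clauses of \cref{def:epsilon}. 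I would also remark that the ``only if'' direction can be had for free from the inequality together with \KAO soundness (\cref{lemma:soundness}): if $\epsilon(e) = 1$ then $\{\epsilon\} = \semkao{1} \subseteq \semkao{e}$.

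The only part demanding genuine care, rather than bookkeeping, is pinning down the behaviour of the down-closure on $\epsilon$; once that reduction to $\semwkao{-}$ is in place, both inductions are entirely mechanical, so I expect no serious obstacle.
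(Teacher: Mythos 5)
Your proposal is correct and follows essentially the same route as the paper: the inequality $\epsilon(e) \leqqkao e$ is proved by the same structural induction, and the direction $\epsilon(e)=1 \Rightarrow \epsilon \in \semkao{e}$ is obtained from it via soundness, while the converse rests on the observation that no $\contr$-step can produce the empty word. The only (harmless) difference is that you factor that observation out once to reduce to $\semwkao{-}$ up front, whereas the paper inducts directly over $\semkao{-}$ using \cref{lemma:closure-almost-commutes} and applies the same fact about $\contr$ inside the product case.
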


Next we define the continuation map, or more specifically, \emph{partial continuation map}~\cite{antimirov-1996}: given $\mathfrak{a} \in \Gamma$, this function gives a \emph{set} of terms representing possible continuations of the program after performing $\mathfrak{a}$.
We start with the continuation map for actions.

\begin{definition}[$\Sigma$-continuation]
We define $\delta: \termsgrat \times \Sigma \to 2^{\termsgrat}$ inductively
\begin{align*}
\delta(0, a)    &= \delta(1, a) = \emptyset & \delta(e + f, a)      &= \delta(e, a) \cup \delta(f, a)                               \\
\delta(a, a')   &= \{ 1 : a = a' \}         & \delta(e \cdot f, a)  &= \{ e' \cdot f : e' \in \delta(e, a) \} \cup \Delta(e, f, a)  \\
\delta(p, a)    &= \emptyset                & \delta(e^\star, a)    &= \{ e' \cdot e^\star : e' \in \delta(e, a) \} \ ,
\end{align*}
where $\Delta(e, f, a) = \delta(f, a)$ when $\epsilon(e) = 1$, and $\emptyset$ otherwise.
\end{definition}

Finally, we give the continuation map for observations, which is similar to the one for actions, except that on sequential composition it is subtly different.

\begin{definition}[$\at$-continuation]
We define $\zeta: \termsgrat \times \at \to 2^{\termsgrat}$ inductively
\begin{align*}
\zeta(0, \alpha) &= \delta(1, \alpha) = \emptyset    & \zeta(e + f, \alpha) &= \zeta(e, \alpha) \cup \zeta(f, \alpha)                               \\
\zeta(a, \alpha) &= \emptyset                        & \zeta(e \cdot f, \alpha) &= \{ e' \cdot f : e' \in \zeta(e, \alpha) \} \cup Z(e, f, \alpha)  \\
\zeta(p, \alpha) &= \{ 1 : \pi_\alpha \leqqba p \}   & \zeta(e^\star, \alpha) &= \{ e' \cdot e^\star : e' \in \zeta(e, \alpha) \} \ ,
\end{align*}
where $Z(e, f, \alpha) = \zeta(f, \alpha)$ when $\epsilon(e) = 1$ or $\epsilon(e') = 1$ for some $e' \in \zeta(e, \alpha)$, and $\emptyset$ otherwise.
\end{definition}

For instance, let $p, q \in \termsprop$.
If $\alpha \in \semba{p} \cap \semba{q}$, then we can calculate that
\[
    \zeta(p \cdot q, \alpha)
        = \{ p' \cdot q : p' \in \zeta(p, \alpha) \} \cup Z(p, q, \alpha)
        = \{ 1 \cdot q \} \cup \{ 1 \} \ ,
\]
which is to say that the program can either continue in $1 \cdot q$, where it has to make an observation validating $q$, or it can choose to re-use the observation $\alpha$ to validate $q$, continuing in $Z(p, q, \alpha) = \zeta(q, \alpha) = \{ 1 \}$, because $1 \in \zeta(p, \alpha)$.
This notion that $\zeta$ can apply an observation more than once to validate multiple assertions in a row can be formalised.

\begin{restatable}{lem}{doublederivatives}%
\label{lemma:double-derivatives}
Let $e \in \termsgrat$, $\alpha \in \at$, and $e' \in \zeta(e, \alpha)$.
Now $\zeta(e', \alpha) \subseteq \zeta(e, \alpha)$.
\end{restatable}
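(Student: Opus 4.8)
The plan is to prove the statement by structural induction on $e$, establishing for every $e \in \termsgrat$ that each $e' \in \zeta(e, \alpha)$ satisfies $\zeta(e', \alpha) \subseteq \zeta(e, \alpha)$. The induction hypothesis needs to be available for all proper subterms of $e$ (with $\alpha$ fixed): crucially, every case below invokes it only on syntactic subterms of $e$, never on the derivative $e'$ itself, which may be syntactically larger than $e$ (as in the star case). So an ordinary structural induction suffices, provided one expands $\zeta(e', \alpha)$ \emph{directly} from the defining clauses rather than treating $e'$ as an induction target.

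The base and sum cases are routine. For $e \in \{ 0, 1 \}$ and $e = a \in \Sigma$ the set $\zeta(e, \alpha)$ is empty, so the claim holds vacuously; for $e = p \in \termsprop$ the only possible continuation is $1$ (when $\pi_\alpha \leqqba p$), and $\zeta(1, \alpha) = \emptyset$ is contained in everything. For $e = e_1 + e_2$ the set $\zeta(e_1 + e_2, \alpha)$ is the union of the two continuation sets, any $e'$ lands in one of them, and the induction hypothesis on the corresponding summand finishes the case in one line.

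The sequential composition case $e = e_1 \cdot e_2$ is the crux. Here $\zeta(e_1 \cdot e_2, \alpha) = \{ g \cdot e_2 : g \in \zeta(e_1, \alpha) \} \cup Z(e_1, e_2, \alpha)$, and I would split on which summand $e'$ comes from. If $e' = g \cdot e_2$ with $g \in \zeta(e_1, \alpha)$, then expanding $\zeta(g \cdot e_2, \alpha)$ by the same clause produces a ``left'' part $\{ h \cdot e_2 : h \in \zeta(g, \alpha) \}$, absorbed using the induction hypothesis on $e_1$ (which gives $\zeta(g, \alpha) \subseteq \zeta(e_1, \alpha)$), plus a ``$Z$'' part $Z(g, e_2, \alpha)$. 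The delicate point is this $Z$ part: it is nonempty only when $\epsilon(g) = 1$, or when $\epsilon(g') = 1$ for some $g' \in \zeta(g, \alpha)$, and I must show that either trigger forces the original guard $Z(e_1, e_2, \alpha) = \zeta(e_2, \alpha)$ to also be ``on''. In the first case $g$ itself witnesses the guard, since $g \in \zeta(e_1, \alpha)$ with $\epsilon(g) = 1$; in the second case the induction hypothesis on $e_1$ promotes $g' \in \zeta(g, \alpha)$ to $g' \in \zeta(e_1, \alpha)$, again witnessing the guard. If instead $e' \in Z(e_1, e_2, \alpha)$, then the guard for $e_1 \cdot e_2$ is already on and $e' \in \zeta(e_2, \alpha)$, so the induction hypothesis on $e_2$ yields $\zeta(e', \alpha) \subseteq \zeta(e_2, \alpha) = Z(e_1, e_2, \alpha) \subseteq \zeta(e_1 \cdot e_2, \alpha)$.

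Finally, the Kleene star case reduces to the composition argument: every $e' \in \zeta(e_1^\star, \alpha)$ has the form $e'' \cdot e_1^\star$ with $e'' \in \zeta(e_1, \alpha)$, and expanding $\zeta(e'' \cdot e_1^\star, \alpha)$ again yields a left part controlled by the induction hypothesis on $e_1$ together with a term $Z(e'', e_1^\star, \alpha)$ that, by definition of $Z$, is either empty or equal to $\zeta(e_1^\star, \alpha)$ itself, hence trivially inside $\zeta(e_1^\star, \alpha)$. I expect the main obstacle to be precisely the bookkeeping around the $Z$-guard in the composition case: one must verify that the ``re-use'' of an observation encoded by $Z$ never escapes the original one-step continuation set. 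This is exactly where the idempotent flavour of observing the same atom $\alpha$ twice is needed, and where the induction hypothesis on $e_1$ does the real work by transporting a terminating derivative of $g$ back to a terminating derivative of $e_1$.
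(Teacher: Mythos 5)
Your proposal is correct and follows essentially the same route as the paper's own proof: structural induction on $e$ with the derivative $e'$ expanded directly from the defining clauses, the sum and base cases dispatched immediately, and the crux being the $Z$-guard in the sequential case, where either $\epsilon(g)=1$ for $g\in\zeta(e_1,\alpha)$ or the induction hypothesis transports a terminating $g'\in\zeta(g,\alpha)$ into $\zeta(e_1,\alpha)$, so that the guard of $Z(e_1,e_2,\alpha)$ is triggered exactly as the paper argues. The star case is likewise handled identically, with $Z(e'',e_1^\star,\alpha)$ collapsing to either $\emptyset$ or $\zeta(e_1^\star,\alpha)$.
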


To check that $\delta$ and $\zeta$ indeed output continuations of the term after the action or assertion has been performed, we should check that they are compatible with $\equivkao$.

\begin{restatable}{lem}{halffundamentaltheorem}%
\label{lemma:half-fundamental-theorem}
Let $e \in \termsgrat$.
For all $a \in \Sigma$ and $e' \in \delta(e, a)$, we have $a \cdot e' \leqqkao e$.
Furthermore, for all $\alpha \in \at$ and $e' \in \zeta(e, \alpha)$, we have $\pi_\alpha \cdot e' \leqqkao e$.
\end{restatable}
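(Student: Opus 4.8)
The plan is to prove both inequalities by structural induction on $e$; the two claims do not interact, so I would run two parallel inductions sharing the same case structure. In every case the constants ($0$, $1$) and the ``wrong'' generator (an action for $\zeta$, a proposition for $\delta$) make the relevant derivative set empty, so those cases hold vacuously. For the matching generator the only work is a unit law: if $e' \in \delta(a,a)$ then $e' = 1$ and $a \cdot 1 \equivkao a$; if $e' \in \zeta(p,\alpha)$ then $e' = 1$ with $\pi_\alpha \leqqba p$, and I would transfer this to $\pi_\alpha \cdot 1 \equivkao \pi_\alpha \leqqkao p$ using that $\equivba$ is contained in $\equivkao$ together with $p \vee q \equivwkao p + q$, which turns $\leqqba$ into $\leqqkao$. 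Throughout I would freely use monotonicity of $\cdot$ and $+$ (consequences of distributivity and the congruence) and, crucially, \cref{lemma:acc-vs-unit}, which gives $1 \leqqkao e$ whenever $\epsilon(e) = 1$.

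For the $\delta$ claim the inductive cases are routine. Sums reduce immediately to the induction hypothesis. For $e_1 \cdot e_2$, a derivative of the form $e_1' \cdot e_2$ with $e_1' \in \delta(e_1,a)$ is handled by $a \cdot (e_1' \cdot e_2) \equivkao (a \cdot e_1') \cdot e_2 \leqqkao e_1 \cdot e_2$ via associativity and the hypothesis $a \cdot e_1' \leqqkao e_1$; a derivative coming from $\Delta(e_1,e_2,a) = \delta(e_2,a)$ (so $\epsilon(e_1) = 1$) is handled by $a \cdot e' \leqqkao e_2 \equivkao 1 \cdot e_2 \leqqkao e_1 \cdot e_2$, using $1 \leqqkao e_1$. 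The star case is the product case followed by $e_1 \cdot e_1^\star \leqqkao e_1^\star$.

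The $\zeta$ claim follows the same pattern except at sequential composition, which is where the contraction law becomes indispensable. For $e' = e_1' \cdot e_2$ with $e_1' \in \zeta(e_1,\alpha)$ the argument is as above. The interesting subcase is $e' \in Z(e_1,e_2,\alpha) = \zeta(e_2,\alpha)$, which can be triggered in two ways. If $\epsilon(e_1) = 1$ then, exactly as for $\delta$, $\pi_\alpha \cdot e' \leqqkao e_2 \leqqkao e_1 \cdot e_2$ with no appeal to contraction. But if instead $\epsilon(e_1') = 1$ for some $e_1' \in \zeta(e_1,\alpha)$, the induction hypothesis gives $\pi_\alpha \cdot e_1' \leqqkao e_1$, and combining this with $1 \leqqkao e_1'$ (from \cref{lemma:acc-vs-unit}) yields $\pi_\alpha \leqqkao e_1$; separately the hypothesis on $e_2$ gives $\pi_\alpha \cdot e' \leqqkao e_2$. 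To assemble $\pi_\alpha \cdot e' \leqqkao e_1 \cdot e_2$ I would duplicate the observation, using $\pi_\alpha \leqqkao \pi_\alpha \cdot \pi_\alpha$ (the form of the contraction law recorded in the remark) and associativity, to obtain $\pi_\alpha \cdot e' \leqqkao \pi_\alpha \cdot (\pi_\alpha \cdot e') \leqqkao \pi_\alpha \cdot e_2 \leqqkao e_1 \cdot e_2$. The star case again reuses the product computation together with $e_1 \cdot e_1^\star \leqqkao e_1^\star$.

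The main obstacle is precisely this last subcase: it is the only place where an observation must be \emph{reused}, and the bookkeeping --- recognising that $\epsilon(e_1') = 1$ lets one copy of $\pi_\alpha$ flow past $e_1$ while a second copy still validates the continuation in $e_2$ --- is exactly what the contraction law is designed to license. Getting the directions of the inequalities right when duplicating $\pi_\alpha$ (applying $\pi_\alpha \leqqkao \pi_\alpha \cdot \pi_\alpha$ on the correct side and then invoking monotonicity) is the one delicate point; everything else is unit and associativity manipulation.
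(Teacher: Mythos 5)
Your proposal is correct and follows essentially the same route as the paper's proof: structural induction with vacuous base cases except for the matching generator, routine unit/associativity manipulations for sums, products and star, and the contraction law (in the form $\pi_\alpha \leqqkao \pi_\alpha \cdot \pi_\alpha$) deployed exactly in the $Z(e_1,e_2,\alpha)$ subcase triggered by $\epsilon(e_1')=1$, combined with \cref{lemma:acc-vs-unit} and the inductive hypotheses on both factors. The only difference from the paper's derivation is the order in which the two copies of $\pi_\alpha$ are absorbed into $e_1$ and $e_2$, which is immaterial.
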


We also need the \emph{reach} of a guarded rational term, which is meant to describe the set of terms that can be obtained by repeatedly applying continuation maps.

\begin{definition}[Reach of a term]
For $e \in \termsgrat$, we define $\rho: \termsgrat \to 2^{\termsgrat}$ inductively:
\begin{align*}
\rho(0) &= \emptyset  & \rho(a) &= \{ 1, a \}  & \rho(e + f) &= \rho(e) \cup \rho(f)                               \\
\rho(1) &= \{ 1 \}    & \rho(p) &= \{ 1, p \}  & \rho(e \cdot f) &= \{ e' \cdot f : e' \in \rho(e) \} \cup \rho(f) \\
        &             &         &              & \rho(e^\star) &= \{ 1 \} \cup \{ e' \cdot e^\star : e' \in \rho(e) \} \ .
\end{align*}
\end{definition}

Indeed, $\rho(e)$ truly contains all terms reachable from $e$ by means of $\delta$ and $\zeta$:

\begin{restatable}{lem}{reachcontainsderivatives}%
\label{lemma:reach-contains-derivatives}
Let $e \in \termsgrat$.
If $a \in \Sigma$, then $\delta(e, a) \subseteq \rho(e)$; if $e' \in \rho(e)$, then $\delta(e', a) \subseteq \rho(e)$.
Furthermore, if $\alpha \in \at$, then $\zeta(e, \alpha) \subseteq \rho(e)$; if $e' \in \rho(e)$, then $\zeta(e', \alpha) \subseteq \rho(e)$.
\end{restatable}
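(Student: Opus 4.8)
The plan is to prove all four inclusions by structural induction on $e$, organising them into two groups. The first and third inclusions, $\delta(e, a) \subseteq \rho(e)$ and $\zeta(e, \alpha) \subseteq \rho(e)$, concern the single-step derivatives of $e$ itself; the second and fourth say that $\rho(e)$ is closed under $\delta(-, a)$ and $\zeta(-, \alpha)$, i.e.\ that iterating the derivative never escapes $\rho(e)$. Since $\rho(e + f) = \rho(e) \cup \rho(f)$ need not contain $e + f$ itself (for example $e + f \notin \rho(e) \cup \rho(f)$ when $e, f \in \Sigma$ are distinct), the single-step inclusions are not instances of the closure inclusions. I would therefore establish them first and then invoke them as lemmas inside the closure argument.

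For the single-step inclusions I induct on $e$. The cases $0, 1, a', p$ are immediate: the only nonempty entries are $\delta(a, a) = \{ 1 \} \subseteq \{ 1, a \} = \rho(a)$ and $\zeta(p, \alpha) \subseteq \{ 1 \} \subseteq \rho(p)$. For $e + f$ the inclusion distributes over the union defining $\rho(e + f)$. For $e \cdot f$, the summand $\{ e' \cdot f : e' \in \delta(e, a) \}$ lands in $\{ e' \cdot f : e' \in \rho(e) \} \subseteq \rho(e \cdot f)$ by the induction hypothesis, while $\Delta(e, f, a)$ is either empty or equals $\delta(f, a) \subseteq \rho(f) \subseteq \rho(e \cdot f)$, again by induction. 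The Kleene star case is analogous, and the argument for $\zeta$ is identical with $Z$ in place of $\Delta$.

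For the closure inclusions I induct on $e$ once more, now calling on the single-step inclusions at the right moments. The interesting cases are $e \cdot f$ and $e^\star$. Take $g \in \rho(e \cdot f)$: either $g \in \rho(f)$, so $\delta(g, a) \subseteq \rho(f)$ by the induction hypothesis, or $g = h \cdot f$ with $h \in \rho(e)$, in which case $\delta(h \cdot f, a)$ splits into $\{ h' \cdot f : h' \in \delta(h, a) \}$ and $\Delta(h, f, a)$. Here the closure hypothesis for $e$ gives $\delta(h, a) \subseteq \rho(e)$, so the first part remains in $\rho(e \cdot f)$, and $\Delta(h, f, a) \subseteq \delta(f, a) \subseteq \rho(f)$ by the single-step inclusion for $f$. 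The star case is handled the same way, with the $\Delta$ contribution bounded by $\delta(e^\star, a) \subseteq \rho(e^\star)$ via the single-step inclusion applied to $e^\star$ as a whole. The $\zeta$ closure mirrors this verbatim.

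The one point that must be watched is exactly this split in the concatenation and star cases: the ``body'' contribution is controlled by the closure hypothesis on the left factor, whereas the $\Delta$ (respectively $Z$) contribution must be controlled by the single-step inclusion rather than the closure hypothesis. For $\zeta$ this is the only delicate spot, because the trigger for $Z(e, f, \alpha)$ is more elaborate than that for $\Delta$: it also fires when some $e' \in \zeta(e, \alpha)$ has $\epsilon(e') = 1$. This extra trigger is harmless, however, since whenever $Z(e, f, \alpha)$ is nonempty it equals $\zeta(f, \alpha)$, which we bound by $\rho(f)$ uniformly via the single-step inclusion, independently of which condition caused it to fire.
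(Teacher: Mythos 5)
Your proof is correct. The first half --- establishing the single-step inclusions $\delta(e,a) \subseteq \rho(e)$ and $\zeta(e,\alpha) \subseteq \rho(e)$ by structural induction, with the $\Delta$ and $Z$ summands absorbed into $\rho(f) \subseteq \rho(e \cdot f)$ --- is the same as the paper's. You diverge in the second half: the paper does not prove the closure inclusions directly, but instead shows by a separate induction that $e' \in \rho(e)$ implies $\rho(e') \subseteq \rho(e)$, and then composes this with the single-step result to get $\zeta(e',\alpha) \subseteq \rho(e') \subseteq \rho(e)$. You instead run a second induction on $e$ proving closure of $\rho(e)$ under derivatives directly, carefully splitting the concatenation and star cases into the ``body'' contribution (handled by the closure induction hypothesis on the left factor) and the $\Delta$/$Z$ contribution (handled by the already-established single-step inclusion, applied to $f$ or to $e^\star$ as a whole). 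Both routes are sound and of comparable length; the paper's intermediate fact that $\rho$ is transitive is slightly more reusable, while your direct argument avoids introducing it and makes explicit exactly where the single-step inclusions are needed --- your observation that the more elaborate trigger for $Z$ is immaterial because a nonempty $Z(e,f,\alpha)$ always equals $\zeta(f,\alpha)$ is precisely the right point to flag.
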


It is not hard to see that for all $e \in \termsgrat$, we have that $\rho(e)$ is finite.
Note that $\rho(e)$ does not, in general, contain $e$ itself.
On the other hand, $\rho(e)$ does contain a set of terms that is sufficient to reconstruct $e$, up to $\equivkao$.
We describe these as follows.

\begin{definition}[Initial factors]\label{def:iota}
For $e \in \termsgrat$, we define $\iota: \termsgrat \to 2^{\termsgrat}$ inductively:
\begin{align*}
\iota(0) &= \emptyset  & \iota(a) &= \{ a \}  & \iota(e + f) &= \iota(e) \cup \iota(f)                            \\
\iota(1) &= \{ 1 \}    & \iota(p) &= \{ p \}  & \iota(e \cdot f) &= \{ e' \cdot f : e' \in \iota(e) \} \\
         &             &          &           & \iota(e^\star) &= \{ 1 \} \cup \{ e' \cdot e^\star : e' \in \iota(e) \} \ .
\end{align*}
\end{definition}

Now, to reconstruct $e$ from $\iota(e)$, all we have to do is sum its elements.
\begin{restatable}{lem}{reconstructfrominitial}%
\label{lemma:reconstruct-from-initial}
Let $e \in \termsgrat$.
Then $e \equivkao \sum_{e' \in \iota(e)} e'$.
\end{restatable}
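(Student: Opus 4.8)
The plan is to prove the equivalence by \emph{structural induction} on $e$. A pleasant feature of the statement is that every equivalence we invoke is already derivable from the Kleene algebra axioms that $\equivkao$ contains; in particular the contraction law plays no role here, so the argument is really a $\KA$-style computation carried out inside $\equivkao$. Before starting the induction, I would record two auxiliary facts about the generalised sum $\sum$, both consequences of commutativity, associativity, and \emph{idempotency} of $+$: first, that $\sum_{e' \in S \cup T} e' \equivkao \sum_{e' \in S} e' + \sum_{e' \in T} e'$ for finite $S, T \subseteq \termsgrat$; and second, that $\sum_{e' \in \{ g(e') : e' \in S \}} (-) \equivkao \sum_{e' \in S} g(e')$ even when the map $g$ identifies distinct elements of $S$. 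The base cases are then immediate: for $e \in \{ 0, 1, a, p \}$ the set $\iota(e)$ is $\emptyset$ or the singleton $\{ e \}$, so $\sum_{e' \in \iota(e)} e'$ is $0$ (recalling that the empty sum is defined to be $0$) or $e$, as required.

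For the inductive step I would treat the three composite operators in turn. When $e = e_1 + e_2$, the defining clause $\iota(e_1 + e_2) = \iota(e_1) \cup \iota(e_2)$ together with the union fact gives
\[
    \sum_{e' \in \iota(e_1 + e_2)} e'
        \equivkao \sum_{e' \in \iota(e_1)} e' + \sum_{e' \in \iota(e_2)} e'
        \equivkao e_1 + e_2
        \ ,
\]
the last step by the induction hypothesis on $e_1$ and $e_2$. When $e = e_1 \cdot e_2$, the clause $\iota(e_1 \cdot e_2) = \{ e' \cdot e_2 : e' \in \iota(e_1) \}$ and right distributivity yield
\[
    \sum_{e' \in \iota(e_1 \cdot e_2)} e'
        \equivkao \Bigl( \sum_{e' \in \iota(e_1)} e' \Bigr) \cdot e_2
        \equivkao e_1 \cdot e_2
        \ ,
\]
again closing by the hypothesis on $e_1$. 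Finally, when $e = e_1^\star$, the clause $\iota(e_1^\star) = \{ 1 \} \cup \{ e' \cdot e_1^\star : e' \in \iota(e_1) \}$ lets me combine the two preceding computations with the unrolling axiom, obtaining
\[
    \sum_{e' \in \iota(e_1^\star)} e'
        \equivkao 1 + \Bigl( \sum_{e' \in \iota(e_1)} e' \Bigr) \cdot e_1^\star
        \equivkao 1 + e_1 \cdot e_1^\star
        \equivkao e_1^\star
        \ .
\]

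The only genuine obstacle is the bookkeeping around the generalised sum: since $\sum$ is indexed by a \emph{set}, I must justify that summing over a union splits as a sum of sums, and that summing over the image $\{ e' \cdot e_2 : e' \in \iota(e_1) \}$ agrees with $\sum_{e' \in \iota(e_1)} (e' \cdot e_2)$ even when the map $e' \mapsto e' \cdot e_2$ collapses distinct factors syntactically. Both reduce to idempotency plus the commutative-monoid laws for $+$, so once the two auxiliary facts are established the three inductive cases are routine applications of distributivity and the star-unrolling axiom. I would therefore isolate those facts as a short preliminary remark and reuse them uniformly, keeping the induction itself essentially mechanical.
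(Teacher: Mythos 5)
Your proof is correct and follows essentially the same route as the paper's: structural induction on $e$, with the base cases immediate from the definition of $\iota$ and the inductive cases handled by distributivity of $+$ over $\cdot$ and the star-unrolling axiom $1 + e_0 \cdot e_0^\star \equivkao e_0^\star$. The extra care you take with the set-indexed sum (splitting over unions, and images that may collapse distinct elements) is a legitimate bit of bookkeeping that the paper elides, but it does not change the argument.
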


\section{Completeness}%
\label{section:completeness}

We now set out to show that $\semkao{-}$ characterises $\equivkao$.
Since soundness of $\semkao{-}$ w.r.t.\ $\equivkao$ was already shown in \cref{lemma:soundness}, it remains to prove completeness, i.e., if $e$ and $f$ are interpreted equally by $\semkao{-}$, then they can be proven equivalent via $\equivkao$.
To this end, we first identify a subset of guarded rational terms for which a completeness result can be shown by relying on the completeness result for KA\@.
To describe these terms, we need the following.%

\begin{definition}[Atomic and guarded terms]
Let $\Pi$ denote the set $\{ \pi_\alpha : \alpha \in \at \}$.
The set of \emph{atomic guarded rational terms}, denoted $\termsagrat$, is the subset of $\termsgrat$ generated by the grammar
\[
    e, f ::= 0 \pipe 1 \pipe a \in \Sigma \pipe \pi_\alpha \in \Pi \pipe e + f \pipe e \cdot f \pipe e^\star \ .
\]
Furthermore, if $e \in \termsgrat$ such that $\semkao{e} = \semwkao{e}$, we say that $e$ is \emph{closed}.
\end{definition}

Completeness of $\equivkao$ with respect to $\semkao{-}$ can then be established for atomic and closed guarded rational terms as follows.

\begin{restatable}{prop}{kaocompletenesspartial}%
\label{proposition:kao-completeness-partial}
Let $e, f \in \termsagrat$ be closed, and $\semkao{e} = \semkao{f}$; then $e \equivkao f$.
\end{restatable}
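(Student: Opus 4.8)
The plan is to reduce the statement to completeness of \KA (\cref{theorem:ka-completeness}) by observing that atomic guarded rational terms are precisely the rational terms over the extended alphabet $\Gamma = \at \cup \Sigma$, with each $\pi_\alpha$ playing the role of the letter $\alpha \in \at$. Concretely, I would fix the homomorphism $\sigma$ from rational terms over $\Gamma$ into $\termsagrat$ that sends the letter $\alpha$ to $\pi_\alpha$ and each $a \in \Sigma$ to itself; since the generators correspond bijectively ($\at \leftrightarrow \Pi$ via $\alpha \mapsto \pi_\alpha$, and $\Gamma$ is finite because $\Omega$ and $\Sigma$ are), $\sigma$ is a bijection onto $\termsagrat$. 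The first step is a routine induction showing that $\sigma$ translates the rational-language semantics over $\Gamma$ into the \WKAO-semantics: for every rational term $g$ over $\Gamma$ we have $\semwkao{\sigma(g)} = \semka{g}$, where $\semka{-}$ denotes the usual rational semantics now read over $\Gamma$. The only nontrivial base case is $g = \alpha$, where $\semwkao{\pi_\alpha} = \semba{\pi_\alpha} = \{ \alpha \} = \semka{\alpha}$ by \cref{lemma:atom-semantics}; the clauses for $+$, $\cdot$, and $^\star$ agree verbatim between $\semwkao{-}$ and $\semka{-}$.

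With this in place, write $e = \sigma(g)$ and $f = \sigma(h)$ for the unique rational terms $g, h$ over $\Gamma$. Because $e$ and $f$ are closed, $\semkao{e} = \semwkao{e}$ and $\semkao{f} = \semwkao{f}$, so the hypothesis $\semkao{e} = \semkao{f}$ yields $\semwkao{e} = \semwkao{f}$, and hence $\semka{g} = \semka{h}$ via the translation identity. Completeness of \KA over the finite alphabet $\Gamma$ then gives $g \equivka h$.

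The remaining, and most delicate, step is to transport $g \equivka h$ back along $\sigma$ to conclude $e = \sigma(g) \equivkao \sigma(h) = f$. The key point is that $\equivkao$ validates every axiom generating $\equivka$ (it contains $\equivwkao$, which obeys the \KA axioms) and is a congruence on all of $\termsgrat$. I would therefore argue by induction on a \KA-derivation of $g \equivka h$ that each step survives under $\sigma$: applying $\sigma$ to a purely equational \KA axiom instance produces a \KAO axiom instance, and the reflexivity, symmetry, transitivity, and congruence steps are preserved since $\sigma$ is a homomorphism and $\equivkao$ is a congruence. The part requiring genuine care is the two star-induction rules, which are Horn clauses rather than equations: once the premise of such a rule has been lifted, say to $\sigma(e') + \sigma(f') \cdot \sigma(g') \leqqkao \sigma(g')$, one invokes the very same induction axiom in \KAO to obtain the lifted conclusion $\sigma(f')^\star \cdot \sigma(e') \leqqkao \sigma(g')$. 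I expect this substitution-invariance of \KA reasoning inside \KAO — in particular its soundness for the quasi-equational star-induction axioms — to be the main obstacle; everything else amounts to bookkeeping around the two semantics coinciding on atomic terms.
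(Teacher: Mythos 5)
Your proposal is correct and follows essentially the same route as the paper: view closed atomic terms as rational expressions over the extended alphabet (the paper uses $\Delta = \Sigma \cup \Pi$ rather than your $\Gamma$ with the bijection $\alpha \mapsto \pi_\alpha$, a purely notational difference), show the \KAO-semantics of a closed atomic term coincides with its \KA-semantics, invoke \cref{theorem:ka-completeness}, and conclude via the fact that $\equivkao$ validates all the axioms generating $\equivka$. The final transport step you flag as delicate is exactly the paper's closing remark, and your treatment of the star-induction Horn clauses is the right (if routine) way to make it precise.
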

\begin{proof}[Sketch]
By noting that the semantics of an atomic and closed term coincide with the KA-semantics (over an extended alphabet); the result then follows by appealing to completeness of KA, and the fact that the axioms of KA are contained in those for KAO\@.
\end{proof}

To show that $\semkao{-}$ characterises $\equivkao$ for \emph{all} guarded rational terms, it suffices to find for every $e \in \termsgrat$ a closed $\hat{e} \in \termsagrat$ with $e \equivkao \hat{e}$.
After all, if we have such a transformation, then $\semkao{e} = \semkao{f}$ implies $\semkao{\hat{e}} = \semkao{\hat{f}}$, which implies $\hat{e} \equivkao \hat{f}$, and hence $e \equivkao f$.

In the remainder of this section, we describe how to obtain a closed and atomic guarded rational term from a guarded rational term $e$.
This transformation works by first ``disassembling'' $e$ using partial derivatives; on an intuitive level, this is akin to creating a (non-deterministic finite) automaton that accepts the observation language described by $e$.
Next, we ``reassemble'' from this representation an atomic term $\hat{e}$, equivalent to $e$; this is analogous to constructing a rational expression from the automaton.
To show that $\hat{e}$ is closed, we leverage \cref{lemma:double-derivatives}, essentially arguing that the automaton obtained from $e$ encodes $\contr$.

We extend the notation for vectors and matrices over rational terms to guarded rational terms.
We can then represent a guarded rational term by a matrix and a vector, as follows.
\begin{definition}
For $e \in \termsgrat$, define the $\rho(e)$-vector $x_e$ and $\rho(e)$-matrix $M_e$ by
\begin{mathpar}
x_e(e') = \epsilon(e')
\and
M_e(e', e'') = \sum_{e'' \in \delta(e', a)} a + \sum_{e'' \in \zeta(e', \alpha)} \pi_\alpha \ .
\end{mathpar}
\end{definition}
Note that, in the above, $\delta(e', a)$ and $\zeta(e', \alpha)$ are finite by \cref{lemma:reach-contains-derivatives}.

Kozen's argument that matrices over rational terms satisfy the axioms of Kleene algebra~\cite{kozen-1994} generalises to guarded rational terms.
This leads to a straightforward generalisation of \cref{lemma:ka-matrix-fixpoint}.
For the sake of brevity, we use $\AKA$ to stand in for $\WKAO$ or $\KAO$.

\begin{lem}%
\label{lemma:matrix-fixpoint-kao}
Let $M$ be a $Q$-matrix.
Using the entries of $M$ and applying the operators of Kleene algebra, we can construct a $Q$-matrix $M^\star$, which has the following property.
Let $y$ be any $Q$-vector; now $M^\star \cdot y$ is the least (w.r.t.\ $\leqqaka$) $Q$-vector $x$ such that $M \cdot x + y \leqqaka x$.
\end{lem}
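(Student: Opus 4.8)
The plan is to reduce \cref{lemma:matrix-fixpoint-kao} to the already-established matrix fixpoint result for Kleene algebra, \cref{lemma:ka-matrix-fixpoint}, by observing that the only facts needed about the underlying term algebra are exactly the axioms of Kleene algebra, all of which are contained in both $\equivwkao$ and $\equivkao$. First I would note that the statement of \cref{lemma:ka-matrix-fixpoint} was derived purely from the fact that rational terms satisfy the axioms generating $\equivka$, via Kozen's result that square matrices over a Kleene algebra again form a Kleene algebra. Since $\AKA$ stands for either $\WKAO$ or $\KAO$, and both $\equivwkao$ and $\equivkao$ are defined to contain (i.e., obey) the axioms of $\equivka$, the same construction applies verbatim over $\termsgrat$.

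The key steps, in order, are as follows. I would first make explicit that the construction of $M^\star$ from the entries of $M$ uses only the operators $0$, $+$, $\cdot$, and $(-)^\star$, together with the defining identities of Kleene algebra (associativity, commutativity and idempotence of $+$; unit and annihilation laws; distributivity; the two unrolling axioms; and the two fixpoint implications). Next I would verify that each such identity, when restricted to the relevant subterms, is an instance of an axiom that holds in $\equivaka$: this is immediate because the axioms generating $\equivka$ are by definition among those generating $\equivaka$. It then follows that the guarded-rational-term matrices satisfy precisely the algebraic laws that Kozen's argument consumes, so the matrix $M^\star$ and its least-fixpoint property transfer from \cref{lemma:ka-matrix-fixpoint} to the $\AKA$ setting, with $\leqqka$ replaced throughout by $\leqqaka$. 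Concretely, $M^\star \cdot y$ solves $M \cdot x + y \leqqaka x$ and is below any other solution, exactly as in the rational case.

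The main obstacle, such as it is, lies not in any deep new argument but in being careful that the additional $\AKA$-axioms---the Boolean-algebra laws, the identification $p \vee q \equivwkao p + q$, $\bot \equivwkao 0$, and (for $\KAO$) the contraction law $p \wedge q \leqqkao p \cdot q$---do not interfere with the construction. Since these extra axioms only \emph{enlarge} the congruence, every Kleene-algebra equality still holds, and enlarging the congruence can only make the relation $\leqqaka$ coarser; one must check that this coarsening does not invalidate the \emph{minimality} half of the fixpoint claim. This is the point I would treat with the most care: minimality is expressed as a universally quantified implication over all $Q$-vectors $x$, and passing to a larger congruence keeps the hypothesis $M \cdot x + y \leqqaka x$ and the conclusion $M^\star \cdot y \leqqaka x$ coherent, because Kozen's derivation of minimality invokes only the Kleene-algebra fixpoint implication, which remains sound under $\equivaka$. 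Thus the extra axioms are harmless, and the lemma follows as a direct instantiation of \cref{lemma:ka-matrix-fixpoint} over the term algebra $\termsgrat$ equipped with $\equivaka$.
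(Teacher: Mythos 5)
Your proposal is correct and matches the paper's own justification: the paper simply remarks that Kozen's argument that matrices over a Kleene algebra again form a Kleene algebra generalises to guarded rational terms, precisely because the axioms generating $\equivka$ (including the star-induction implications) are by definition among those generating $\equivwkao$ and $\equivkao$. Your extra care about the minimality half surviving the enlargement of the congruence is resolved exactly as you say---the implicational axioms are built into the definition of $\equivaka$, so the quotient is a genuine Kleene algebra and \cref{lemma:ka-matrix-fixpoint} transfers verbatim.
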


We can now use the above to reassemble a term from $M_e$ and $x_e$ as follows.
\begin{definition}[Transformation of terms]
Let $e \in \termsgrat$.
We write $s_e$ for the $\rho(e)$-vector given by $M_e^\star \cdot x_e$, and $\hat{e}$ for the guarded rational term given by $\sum_{e' \in \iota(e)} s_e(e')$.
\end{definition}

Since $\hat{e}$ is constructed from $M_e$ and $s_e$, and these are built using atomic guarded rational terms, $\hat{e}$ is atomic.
We carry on to show that $\hat{e} \equivkao e$.
To this end, the following is useful.

\begin{restatable}[Least solutions]{prop}{leastsolutioncharacterisation}%
\label{proposition:least-solution-characterisation}
Let $e, f \in \termsgrat$.
Now $s_e \fatsemi f$ is the least (w.r.t. $\leqqaka$) $\rho(e)$-vector $s$ such that for each $e' \in \rho(e)$, it holds that
\[
    \epsilon(e') \cdot f
        + \sum_{e'' \in \delta(e', a)} a \cdot s(e'')
        + \sum_{e'' \in \zeta(e', \alpha)} \pi_\alpha \cdot s(e'')
            \leqqaka s(e')
            \ .
\]
When $s = s_e \fatsemi f$, the above is an equivalence, i.e., we can substitute $\leqqaka$ with $\equivaka$.
\end{restatable}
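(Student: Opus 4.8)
The plan is to recognise the displayed inequality as the componentwise (pre-)fixpoint condition for the matrix $M_e$ with right-hand side $x_e \fatsemi f$, and then to transport the least solution of the \emph{unscaled} system --- which exists by \cref{lemma:matrix-fixpoint-kao} --- across right-multiplication by $f$.

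First I would unpack the matrix--vector product. For any $\rho(e)$-vector $s$ we have $(M_e \cdot s)(e') = \sum_{e'' \in \rho(e)} M_e(e', e'') \cdot s(e'')$; substituting the definition of $M_e(e', e'')$ and right-distributing the factor $s(e'')$ over the two inner sums flattens this into
\[
(M_e \cdot s)(e') \equivaka \sum_{e'' \in \delta(e', a)} a \cdot s(e'') + \sum_{e'' \in \zeta(e', \alpha)} \pi_\alpha \cdot s(e'') .
\]
Since $(x_e \fatsemi f)(e') = \epsilon(e') \cdot f$, the inequality in the statement is precisely the componentwise form of $M_e \cdot s + (x_e \fatsemi f) \leqqaka s$. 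Applying \cref{lemma:matrix-fixpoint-kao} with $y = x_e \fatsemi f$, the least such $s$ is $M_e^\star \cdot (x_e \fatsemi f)$.

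It then remains to identify this least solution with $s_e \fatsemi f = (M_e^\star \cdot x_e) \fatsemi f$. The load-bearing step is the scaling identity $(M_e^\star \cdot x_e) \fatsemi f \equivaka M_e^\star \cdot (x_e \fatsemi f)$: read componentwise, each side is the finite sum of the products $M_e^\star(e', e'') \cdot x_e(e'') \cdot f$, the left obtained by right-distributing the trailing $f$ over the sum defining $(M_e^\star \cdot x_e)(e')$ and the right by associativity of $\cdot$. Crucially this uses nothing about the internal structure of $M_e^\star$, only right-distributivity and associativity, both available in $\AKA$. Hence $s_e \fatsemi f \equivaka M_e^\star \cdot (x_e \fatsemi f)$ is the least solution, which proves the first claim. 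For the equivalence addendum I would invoke the standard fact that the least pre-fixpoint of the monotone affine map $\Phi(s) = M_e \cdot s + (x_e \fatsemi f)$ is a genuine fixpoint: writing $s^\ast = M_e^\star \cdot (x_e \fatsemi f)$ and $L = \Phi(s^\ast)$, the pre-fixpoint property gives $L \leqqaka s^\ast$, monotonicity of $\Phi$ gives $\Phi(L) \leqqaka L$ so that $L$ too is a pre-fixpoint, and leastness yields $s^\ast \leqqaka L$; thus $\Phi(s^\ast) \equivaka s^\ast$. As all operations are congruences for $\equivaka$ and $s_e \fatsemi f \equivaka s^\ast$, this transports to $\Phi(s_e \fatsemi f) \equivaka s_e \fatsemi f$, i.e.\ the displayed inequality becomes an equivalence at $s = s_e \fatsemi f$.

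The only genuinely delicate point is the bookkeeping in the first step: matching the compact definition $M_e(e', e'')$ --- where a single target $e''$ may be reached via several labels $a \in \Sigma$ or $\alpha \in \at$ --- against the two explicit families of sums over $\delta$ and $\zeta$. Once that identification is made the argument is a routine instance of the Kleene-algebra least-solution machinery. In particular the contraction law is never used, so the proof is uniform in $\AKA \in \{ \WKAO, \KAO \}$.
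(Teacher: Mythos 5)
Your proof is correct and follows essentially the same route as the paper's: unpack the matrix--vector product to identify the componentwise system with the pre-fixpoint inequality $M_e \cdot s + x_e \fatsemi f \leqqaka s$, invoke \cref{lemma:matrix-fixpoint-kao} together with the scaling identity $(M_e^\star \cdot x_e) \fatsemi f \equivaka M_e^\star \cdot (x_e \fatsemi f)$, and settle the equivalence addendum by the standard least-pre-fixpoint-is-a-fixpoint argument. The only cosmetic difference is that the paper explicitly cites \cref{lemma:reach-contains-derivatives} for the bookkeeping step you flag as delicate (ensuring the sums over $\delta$ and $\zeta$ range within $\rho(e)$).
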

\begin{proof}[Sketch]
The least such $s$ coincides with $s_e = M_e^\star \cdot x_e$, by using the property of $s_e$ that we obtain as a result of \cref{lemma:matrix-fixpoint-kao}.
The latter claim goes by standard argument akin to the argument showing that the least pre-fixpoint of a monotone operator is a fixpoint.
\end{proof}

Using the above, we show that the contents of $\rho(e)$ themselves qualify as a least $\rho(e)$-vector satisfying system obtained from $e$ (and fixing $f = 1$).
More concretely, we have the following.

\begin{restatable}{prop}{leastsolutionidentity}%
\label{proposition:least-solution-identity}
Let $e \in \termsgrat$ and $e' \in \rho(e)$.
Then $s_e(e') \equivkao e'$.
\end{restatable}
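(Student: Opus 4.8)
The plan is to establish the two inequalities $s_e(e') \leqqkao e'$ and $e' \leqqkao s_e(e')$ separately, as together they give $s_e(e') \equivkao e'$. Writing $\mathrm{id}$ for the $\rho(e)$-vector with $\mathrm{id}(e') = e'$, the first inequality will follow by exhibiting $\mathrm{id}$ as a solution of the system from \cref{proposition:least-solution-characterisation} (with scaling parameter $1$) and invoking the leastness of $s_e$, while the second is the reverse bound, which I prove directly by induction.

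For the first inequality I would check that $\mathrm{id}$ is a solution, i.e. that for every $e' \in \rho(e)$,
\[
\epsilon(e') + \sum_{e'' \in \delta(e', a)} a \cdot e'' + \sum_{e'' \in \zeta(e', \alpha)} \pi_\alpha \cdot e'' \leqqkao e' .
\]
Each summand on the left is dominated by $e'$ --- the constant $\epsilon(e')$ by \cref{lemma:acc-vs-unit}, and every $a \cdot e''$ and $\pi_\alpha \cdot e''$ by \cref{lemma:half-fundamental-theorem} --- so, $\leqqkao$ being a join order, the entire left-hand side lies below $e'$. Since $s_e$ is the least solution by \cref{proposition:least-solution-characterisation}, this yields $s_e \leqqkao \mathrm{id}$, that is $s_e(e') \leqqkao e'$.

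For the converse $e' \leqqkao s_e(e')$ I would induct on the structure of $e$, proving the statement for all $e' \in \rho(e)$ simultaneously and using that $s_e$ is an \emph{exact} fixed point (the equivalence in \cref{proposition:least-solution-characterisation}), so in particular $\epsilon(e') \leqqkao s_e(e')$, $a \cdot s_e(e'') \leqqkao s_e(e')$ for $e'' \in \delta(e', a)$, and $\pi_\alpha \cdot s_e(e'') \leqqkao s_e(e')$ for $e'' \in \zeta(e', \alpha)$. The atomic cases are immediate: $e' = 1$ gives $1 \leqqkao s_e(1)$ directly; $e' = a$ combines $1 \in \delta(a, a)$ with $1 \leqqkao s_e(1)$ to get $a \equivkao a \cdot 1 \leqqkao a \cdot s_e(1) \leqqkao s_e(a)$; and $e' = p$ uses that $1 \in \zeta(p, \alpha)$ exactly when $\pi_\alpha \leqqba p$, whence $p \equivkao \sum_{\pi_\alpha \leqqba p} \pi_\alpha \leqqkao s_e(p)$ via the disjunctive normal form guaranteed by \cref{theorem:ba-completeness} and \cref{lemma:atom-semantics}.

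The crux --- and the step I expect to be the main obstacle --- is the composite cases, where the reachable terms carry a suffix: for $e = f \cdot g$ they are the terms $f' \cdot g$ with $f' \in \rho(f)$ together with the members of $\rho(g)$, and for $e = f^\star$ they are $1$ and the terms $f' \cdot f^\star$. Such suffix-bearing terms are not structurally smaller than $e'$, so a naive induction does not close; the decisive ingredient is a compositionality lemma identifying $s_e$ on them with a scaled sub-solution, namely $s_{f \cdot g}(f' \cdot g) \equivkao (s_f \fatsemi g)(f') = s_f(f') \cdot g$ (and $s_{f \cdot g}$ agreeing with $s_g$ on $\rho(g)$) --- precisely the scaling anticipated by the $\fatsemi$-parameter of \cref{proposition:least-solution-characterisation}. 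Granting it, the induction hypotheses for $f$ and $g$ close the product case via $f' \cdot g \equivkao s_f(f') \cdot g \equivkao s_{f \cdot g}(f' \cdot g)$. The hardest point is the star case, where the suffix is $f^\star$ itself and the relevant subsystem is therefore self-referential: proving $s_{f^\star}(f' \cdot f^\star) \equivkao s_f(f') \cdot f^\star$ requires feeding the fixed-point equations into the star-induction axiom of Kleene algebra. Throughout, \cref{lemma:reach-contains-derivatives} is needed to ensure that all continuations stay inside $\rho(e)$, so that these subsystems are well defined.
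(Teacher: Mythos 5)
Your architecture matches the paper's: the bound $s_e(e') \leqqkao e'$ is obtained exactly as you say, by checking that the identity vector solves the system (via \cref{lemma:acc-vs-unit,lemma:half-fundamental-theorem}) and invoking leastness from \cref{proposition:least-solution-characterisation}, and the converse goes by induction on $e$ with the same four base cases. The divergence is in the composite cases. The paper never proves your two-sided identity $s_{f\cdot g}(f'\cdot g) \equivkao s_f(f')\cdot g$; it gets by with two one-sided lemmas: $s_f(f') \leqqkao s_e(f')$ whenever $\rho(f)\subseteq\rho(e)$ (handling sums and the $e'\in\rho(g)$ subcase), and $s_f(f')\cdot s_g(g') \leqqkao s_{f\cdot g}(f'\cdot g)$ for $g'\in\iota(g)$. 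The latter is exactly the hard direction of your asserted lemma and is the genuine content you leave unproved; note it cannot be established with $g$ itself as the scaling parameter --- one must decompose $g\equivkao\sum_{g'\in\iota(g)}g'$ (\cref{lemma:reconstruct-from-initial}) so that the induction hypothesis applies to the initial factors, which lie in $\rho(g)$. For the star case the paper avoids explicit star-induction: since $\rho(f\cdot f^\star)\subseteq\rho(f^\star)$, chaining the two lemmas gives $s_f(f')\cdot s_{f^\star}(g') \leqqkao s_{f\cdot f^\star}(f'\cdot f^\star) \leqqkao s_{f^\star}(f'\cdot f^\star)$; the fixpoint reasoning you would route through the star-induction axiom is already packaged in \cref{lemma:matrix-fixpoint-kao}. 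So your plan is sound and essentially the paper's, but the continuation inequality is the step that still requires a real proof.
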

\begin{proof}[Sketch]
The key idea is to first relate the least $\rho(e)$-vectors satisfying the system obtained from $e$ to those satisfying the systems arising from its subterms.
The proof then follows by induction on $e$, and some straightforward derivations.
\end{proof}

This allows us to conclude (using \cref{lemma:reconstruct-from-initial}) that $e \equivkao \hat{e}$.

\begin{restatable}[Transformation preserves equivalence]{cor}{leastsolutionidentitytotal}%
\label{corollary:least-solution-identity-total}
Let $e \in \termsgrat$; then $e \equivkao \hat{e}$.
\end{restatable}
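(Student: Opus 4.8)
The plan is to obtain the claim by simply chaining the two results developed for exactly this purpose, namely \cref{proposition:least-solution-identity} and \cref{lemma:reconstruct-from-initial}. Recall that $\hat{e}$ is defined as $\sum_{e' \in \iota(e)} s_e(e')$, a sum indexed by the initial factors of $e$. The key observation driving the argument is that each summand $s_e(e')$ is, up to $\equivkao$, exactly the corresponding initial factor $e'$, so the whole sum reduces to $\sum_{e' \in \iota(e)} e'$, which in turn collapses to $e$.

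First I would record that $\iota(e) \subseteq \rho(e)$. This holds by a routine structural induction on $e$, matching each clause of \cref{def:iota} against the corresponding clause defining $\rho$; it is in any case implicitly needed for $\hat{e}$ to be well-defined, since $s_e$ is a $\rho(e)$-vector and so $s_e(e')$ only makes sense for $e' \in \rho(e)$. With this containment in hand, \cref{proposition:least-solution-identity} applies to every $e' \in \iota(e)$, yielding $s_e(e') \equivkao e'$ for each such term. Because $\equivkao$ is a congruence, and in particular compatible with $+$, these pointwise equivalences combine to give $\sum_{e' \in \iota(e)} s_e(e') \equivkao \sum_{e' \in \iota(e)} e'$.

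It then remains only to collapse the right-hand side back to $e$, which is precisely the content of \cref{lemma:reconstruct-from-initial}: $\sum_{e' \in \iota(e)} e' \equivkao e$. Concatenating these steps yields $\hat{e} = \sum_{e' \in \iota(e)} s_e(e') \equivkao \sum_{e' \in \iota(e)} e' \equivkao e$, as desired. I expect no genuine obstacle at this point: all of the real difficulty has already been absorbed into \cref{proposition:least-solution-identity}, whose proof establishes that the least solution of the linear system read off from $e$ reconstructs each reachable subterm up to $\equivkao$. The present corollary merely assembles that fact at the level of the initial factors and applies the reconstruction lemma.
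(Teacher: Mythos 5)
Your proof is correct and follows exactly the paper's own argument: apply \cref{proposition:least-solution-identity} to each $e' \in \iota(e) \subseteq \rho(e)$ and then collapse the sum via \cref{lemma:reconstruct-from-initial}. Your explicit remark that $\iota(e) \subseteq \rho(e)$ is a welcome addition; the paper only states this containment in passing elsewhere.
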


It remains to show that $\hat{e}$ is closed.
To this end, the following characterisation is helpful:

\begin{restatable}{lem}{closureequivalence}%
\label{lemma:closure-equivalence}
Let $L \subseteq \Gamma^\star$, and define $\altcontr$ as the smallest relation on $\Gamma^\star$ satisfying the rules
\begin{mathpar}
\inferrule{~}{%
    \epsilon \altcontr \epsilon
}
\and
\inferrule{%
    \mathfrak{a} \in \Gamma \\
    w \altcontr x
}{%
    \mathfrak{a}w \altcontr \mathfrak{a}x
}
\and
\inferrule{%
    \alpha \in \at \\
    w \altcontr x
}{%
    \alpha{}w \altcontr \alpha\alpha{}x
}
\ .
\end{mathpar}
$\down{L}$ is the smallest subset of $\Gamma^\star$ such that $L \subseteq \down{L}$, and if $w \altcontr x$ with $x \in \down{L}$, then $w \in \down{L}$.
\end{restatable}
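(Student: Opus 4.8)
The plan is to reduce the statement to an equality of two closure operators. By its definition, $\down{L}$ is the least set containing $L$ and closed under taking $\contr$-predecessors, so $\down{L} = \{ w \in \Gamma^\star : w \contr^{*} x \text{ for some } x \in L \}$, where $\contr^{*}$ is the reflexive--transitive closure of $\contr$. The characterisation to be proved describes exactly the least set containing $L$ and closed under taking $\altcontr$-predecessors, namely $\{ w : w \altcontr^{*} x \text{ for some } x \in L \}$. Hence it suffices to prove that $\contr$ and $\altcontr$ generate the same reflexive--transitive closure, i.e.\ $\contr^{*} = \altcontr^{*}$, which I would establish via the two inclusions $\contr \subseteq \altcontr$ and $\altcontr \subseteq \contr^{*}$.

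For the first inclusion I would first record that $\altcontr$ is reflexive: a straightforward induction on $w$ shows $w \altcontr w$, using the base rule for $\epsilon$ and the middle rule to prepend each symbol. Given a single contraction step $w\alpha{}x \contr w\alpha\alpha{}x$, I then build a derivation of $w\alpha{}x \altcontr w\alpha\alpha{}x$: apply the middle rule once for each symbol of $w$, then the last rule to double the distinguished $\alpha$, and finish with $x \altcontr x$ by reflexivity. Thus $\contr \subseteq \altcontr$, and in particular $\contr \subseteq \altcontr^{*}$.

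For the second inclusion I would proceed by induction on the derivation of $w \altcontr x$, using that $\contr$ (hence $\contr^{*}$) is preserved by left-concatenation: if $u \contr^{*} v$ then $\mathfrak{a}u \contr^{*} \mathfrak{a}v$, since each step $u'\beta{}v' \contr u'\beta\beta{}v'$ lifts to $\mathfrak{a}u'\beta{}v' \contr \mathfrak{a}u'\beta\beta{}v'$. The base case $\epsilon \altcontr \epsilon$ is immediate. For the middle rule, the induction hypothesis gives $w \contr^{*} x$, whence $\mathfrak{a}w \contr^{*} \mathfrak{a}x$. For the last rule, the hypothesis gives $\alpha{}w \contr^{*} \alpha{}x$, and a single step $\alpha{}x \contr \alpha\alpha{}x$ (taking the prefix empty) yields $\alpha{}w \contr^{*} \alpha\alpha{}x$. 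Hence $\altcontr \subseteq \contr^{*}$.

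Combining the inclusions gives $\contr^{*} = \altcontr^{*}$, and therefore the two closures of $L$ coincide, which is the claim. The main subtlety to watch is that $\altcontr$ is itself \emph{not} transitive --- the last rule doubles each atom at most once per derivation, whereas $\contr^{*}$ can grow a single $\alpha$ into a run of arbitrary length --- so the equivalence only holds after passing to reflexive--transitive closures; keeping the direction of each closure (predecessors, not successors) consistent throughout is the other point that requires care.
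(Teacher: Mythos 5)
Your proof is correct and follows essentially the same route as the paper's: both establish $\contr\ \subseteq\ \altcontr$ (via reflexivity of $\altcontr$ and prepending the prefix symbol by symbol) and $\altcontr\ \subseteq\ \contr^{*}$ (by induction on the derivation, lifting $\contr^{*}$ through left-concatenation), and then conclude that the two predecessor-closures of $L$ coincide. Your explicit framing in terms of $\contr^{*}=\altcontr^{*}$ and the closing remark on the non-transitivity of $\altcontr$ are nice clarifications but do not change the argument.
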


We can then employ the characterisation of $s_e$ in \cref{proposition:least-solution-characterisation} to show, by induction on the length of the $\altcontr$-chain, that the components of $s_e$ are closed.

\begin{restatable}{prop}{componentclosure}%
\label{proposition:component-closure}
Let $e \in \terms$.
For all $e' \in \rho(e)$, it holds that $s_e(e')$ is closed.
\end{restatable}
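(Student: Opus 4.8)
The plan is to reduce closedness of each $s_e(e')$ to a closure property of the observation language $\semwkao{s_e(e')}$, and then to prove that property by induction on $\altcontr$-derivations, with \cref{lemma:double-derivatives} doing the real work.

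First I would unfold the definition: $s_e(e')$ is closed precisely when $\down{\semwkao{s_e(e')}} = \semwkao{s_e(e')}$. As the inclusion $\semwkao{s_e(e')} \subseteq \down{\semwkao{s_e(e')}}$ always holds, by \cref{lemma:closure-equivalence} it suffices to prove the reverse inclusion, i.e.\ that $\semwkao{s_e(e')}$ is closed under $\altcontr$-predecessors: whenever $w \altcontr x$ with $x \in \semwkao{s_e(e')}$, then $w \in \semwkao{s_e(e')}$. I would prove this for all $e' \in \rho(e)$ \emph{simultaneously}, by induction on the derivation of $w \altcontr x$, so that the induction hypothesis can be invoked at any state in $\rho(e)$.

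The engine of the argument is a language-level recurrence for $\semwkao{s_e(e')}$. Instantiating the $\WKAO$ version of the equivalence part of \cref{proposition:least-solution-characterisation} at $f = 1$ and $s = s_e$, then applying $\semwkao{-}$ (using $\semwkao{\pi_\alpha} = \{ \alpha \}$ from \cref{lemma:atom-semantics}), yields
\[
  \semwkao{s_e(e')}
    = \{ \epsilon : \epsilon(e') = 1 \}
    \cup \bigcup_{\substack{a \in \Sigma \\ e'' \in \delta(e', a)}} \{ a \} \cdot \semwkao{s_e(e'')}
    \cup \bigcup_{\substack{\alpha \in \at \\ e'' \in \zeta(e', \alpha)}} \{ \alpha \} \cdot \semwkao{s_e(e'')} \ .
\]
Because $\at$ and $\Sigma$ are disjoint sub-alphabets of $\Gamma$, the first symbol of any nonempty word in $\semwkao{s_e(e')}$ determines which summand it comes from. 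This supplies both a decomposition principle (from $\mathfrak{a}y \in \semwkao{s_e(e')}$ I can extract an $\mathfrak{a}$-successor $e''$ with $y \in \semwkao{s_e(e'')}$) and a construction principle for membership, which are all I need in the inductive step.

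The induction is then routine save for one case. The base case $\epsilon \altcontr \epsilon$ is immediate. For the rule $\mathfrak{a}w' \altcontr \mathfrak{a}x'$ (with $\mathfrak{a} \in \Gamma$ and $w' \altcontr x'$), I would decompose $x$ along its first symbol to obtain a successor $e''$ with $x' \in \semwkao{s_e(e'')}$, apply the induction hypothesis at $e''$ to get $w' \in \semwkao{s_e(e'')}$, and reassemble $\mathfrak{a}w' \in \semwkao{s_e(e')}$. The genuinely interesting case --- and the main obstacle --- is the duplication rule $\alpha w' \altcontr \alpha\alpha x'$. Here I would decompose $x$ \emph{twice}: first $\alpha x' \in \semwkao{s_e(e'')}$ for some $e'' \in \zeta(e', \alpha)$, then $x' \in \semwkao{s_e(e''')}$ for some $e''' \in \zeta(e'', \alpha)$. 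The crux is that \cref{lemma:double-derivatives} gives $\zeta(e'', \alpha) \subseteq \zeta(e', \alpha)$, hence $e''' \in \zeta(e', \alpha)$; the induction hypothesis at $e'''$ then yields $w' \in \semwkao{s_e(e''')}$, and the construction principle repackages this with a \emph{single} $\alpha$ to give $w = \alpha w' \in \semwkao{s_e(e')}$. Thus \cref{lemma:double-derivatives} is exactly what makes reading an atom twice land in a state already reachable by reading it once, so that the automaton extracted from $e$ absorbs the atom-duplication of $\contr$ for free.
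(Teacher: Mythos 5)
Your proposal is correct and follows essentially the same route as the paper's proof: reduce closedness to $\altcontr$-closure of $\semwkao{s_e(e')}$ via \cref{lemma:closure-equivalence}, argue simultaneously for all $e' \in \rho(e)$ by induction on the $\altcontr$-derivation, use the $\WKAO$ instance of \cref{proposition:least-solution-characterisation} (with \cref{lemma:wkao-soundness,lemma:atom-semantics}) as the decomposition and reassembly principle, and invoke \cref{lemma:double-derivatives} in the atom-duplication case to collapse the two $\zeta$-steps into one. No gaps.
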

\begin{proof}[Sketch]
Note that $s_e$ satisfies the system obtained from $e$ as in \cref{proposition:least-solution-characterisation}, both w.r.t. $\equivkao$ and $\equivwkao$.
Using \cref{lemma:closure-equivalence}, we can then show (by induction on the number of applications of $\altcontr$) that $\semwkao{s(e')} \subseteq \semkao{s(e')}$; the other inclusion holds trivially.
\end{proof}

Hence, we can conclude (using \cref{lemma:closure-almost-commutes}) that $\hat{e}$ is closed as well, as follows.

\begin{restatable}[Transformation yields closed term]{cor}{closure}%
\label{corollary:closure}
Let $e \in \termsgrat$; now $\hat{e}$ is closed.
\end{restatable}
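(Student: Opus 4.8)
The plan is to reduce the claim to the componentwise closure just established and then observe that closedness is stable under finite sums. Recall that $\hat{e} = \sum_{e' \in \iota(e)} s_e(e')$, so $\hat{e}$ is a finite sum of those components of $s_e$ that are indexed by the initial factors of $e$. Before applying \cref{proposition:component-closure} I would first verify that $\iota(e) \subseteq \rho(e)$; this follows by a routine induction on the structure of $e$, simply comparing the defining clauses of $\iota$ and $\rho$ (e.g., for $e \cdot f$ we have $\{ e' \cdot f : e' \in \iota(e) \} \subseteq \{ e' \cdot f : e' \in \rho(e) \} \subseteq \rho(e \cdot f)$, and similarly for $e^\star$). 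With this inclusion in hand, \cref{proposition:component-closure} guarantees that each summand $s_e(e')$ with $e' \in \iota(e)$ is closed.

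Next I would show that the sum of closed terms is closed. Suppose $g, h \in \termsgrat$ are closed, i.e.\ $\semkao{g} = \semwkao{g}$ and $\semkao{h} = \semwkao{h}$. Using part~(i) of \cref{lemma:closure-almost-commutes} together with the definition of $\semwkao{-}$ on sums, one computes $\semkao{g + h} = \semkao{g} \cup \semkao{h} = \semwkao{g} \cup \semwkao{h} = \semwkao{g + h}$, so that $g + h$ is again closed. An easy induction then extends this to any finite sum, with the base case $\sum_{e' \in \emptyset} s_e(e') = 0$ being closed since $\semkao{0} = \down{\emptyset} = \emptyset = \semwkao{0}$.

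Combining these two observations, $\hat{e}$ is a finite sum of closed terms and is therefore itself closed. I expect no genuine obstacle at this stage: all the substantive work has been front-loaded into \cref{proposition:component-closure}, and the corollary only needs the stability of closedness under the additive structure, which is immediate from \cref{lemma:closure-almost-commutes}(i). The single point requiring a moment's attention is confirming that the index set $\iota(e)$ of the summation sits inside $\rho(e)$, so that componentwise closure is actually available for every summand.
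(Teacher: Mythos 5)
Your proposal is correct and follows essentially the same route as the paper: the paper's proof is exactly the computation $\semkao{\hat{e}} = \bigcup_{e' \in \iota(e)} \semkao{s_e(e')} = \bigcup_{e' \in \iota(e)} \semwkao{s_e(e')} = \semwkao{\hat{e}}$, i.e.\ \cref{lemma:closure-almost-commutes}(i) to distribute the semantics over the sum, \cref{proposition:component-closure} on each summand, and the definition of $\semwkao{-}$ to recombine. Your explicit check that $\iota(e) \subseteq \rho(e)$ (which the paper only remarks on in passing in \cref{section:decision-procedure}) and your treatment of the empty sum are correct and welcome details, but they do not change the substance of the argument.
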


We are now ready conclude with the main result of this section.

\begin{thrm}[Soundness and completeness]%
\label{theorem:soundness-completeness}
Let $e, f \in \termsgrat$; then
\[
    e \equivkao f
    \iff
    \semkao{e} = \semkao{f}
    \ .
\]
\end{thrm}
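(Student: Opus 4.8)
The plan is to dispatch the two directions separately. The forward implication ($e \equivkao f \Rightarrow \semkao{e} = \semkao{f}$) is exactly the soundness result already recorded in \cref{lemma:soundness}, so nothing remains to be done there. The substance of the theorem is therefore the completeness direction: assuming $\semkao{e} = \semkao{f}$, I must produce a derivation $e \equivkao f$. The idea is not to attack arbitrary guarded rational terms head-on, but to route the argument through the closed atomic fragment, where \cref{proposition:kao-completeness-partial} already yields completeness by appeal to the known completeness of \KA.

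Concretely, I would argue as follows. Given $e, f \in \termsgrat$ with $\semkao{e} = \semkao{f}$, pass to their transforms $\hat{e}$ and $\hat{f}$. By \cref{corollary:least-solution-identity-total} we have $e \equivkao \hat{e}$ and $f \equivkao \hat{f}$; applying soundness (\cref{lemma:soundness}) to these two equivalences yields $\semkao{e} = \semkao{\hat{e}}$ and $\semkao{f} = \semkao{\hat{f}}$, so from the hypothesis we obtain $\semkao{\hat{e}} = \semkao{\hat{f}}$. By construction $\hat{e}$ and $\hat{f}$ are atomic, and by \cref{corollary:closure} they are closed. Hence \cref{proposition:kao-completeness-partial} applies and gives $\hat{e} \equivkao \hat{f}$. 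Chaining the three equivalences, $e \equivkao \hat{e} \equivkao \hat{f} \equivkao f$, and transitivity of $\equivkao$ concludes the proof.

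I expect no genuine obstacle inside the theorem itself: once the transformation $e \mapsto \hat{e}$ and its two guarantees (equivalence-preservation and closedness) are in hand, the argument is a routine composition of already-established facts, with soundness playing the mild role of transporting the semantic hypothesis across the syntactic transformation. The real difficulty has been front-loaded into the supporting results, and in a first attempt I would expect the bulk of the effort to sit in two places: proving \cref{proposition:kao-completeness-partial}, where one must verify that the \KAO-semantics of a closed atomic term genuinely coincides with an ordinary \KA-language over the extended alphabet $\Pi \cup \Sigma$, so that \cref{theorem:ka-completeness} can be invoked; and proving \cref{corollary:closure} via \cref{proposition:component-closure}, where the \KAO-specific content lives --- namely that the automaton extracted from $e$ by the partial derivatives $\delta$ and $\zeta$ internalises the contraction relation $\contr$ (through \cref{lemma:double-derivatives}), so that the reassembled term is already $\contr$-closed at the level of languages. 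Against that backdrop, the theorem is best read as the capstone that simply knits these strands together.
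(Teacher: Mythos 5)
Your proposal is correct and follows exactly the paper's route: soundness from \cref{lemma:soundness} for one direction, and for completeness the transformation $e \mapsto \hat{e}$ combined with \cref{corollary:least-solution-identity-total}, \cref{corollary:closure} and \cref{proposition:kao-completeness-partial}, chained by transitivity. Nothing to add.
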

\begin{proof}[Sketch]
Since we have a way to obtain from $e \in \termsgrat$ a closed term $\hat{e} \in \termsagrat$, for which it furthermore holds that $e \equivkao \hat{e}$, we can appeal to \cref{proposition:kao-completeness-partial}.
\end{proof}

\section{Decision procedure}%
\label{section:decision-procedure}

We now design a procedure that takes terms $e,f \in \termsgrat$ and decides whether $\semkao{e} = \semkao{f}$; by \cref{theorem:soundness-completeness}, this gives us a decision procedure for $e \equivkao f$.
To this end, we reduce to the problem of language equivalence between NFAs over guarded rational terms, defined using partial derivatives.
This, in turn, can be efficiently decided using bisimulation techniques.

First, observe that $\termsgrat$ carries a non-deterministic automaton structure.
\begin{definition}[Syntactic automaton]\label{def:aut-expr}
The set $\termsgrat$ carries an NDA structure $(\termsgrat, \epsilon, \theta)$ where $\epsilon: \termsgrat \to 2$ is as in \cref{def:epsilon}, and $\theta: \termsgrat \times \Gamma \to 2^{\termsgrat}$ is given by
\[
    \theta(e,\mathfrak{a}) =
    \begin{cases}
    \delta(e, \mathfrak{a}) & \text{if } \mathfrak{a} \in \Sigma \\
    \zeta(e, \mathfrak{a}) & \text{if } \mathfrak{a} \in \at
    \end{cases}
    \ .
\]
We call this the \emph{syntactic automaton} of guarded rational terms.
\end{definition}
Let
 $\ell \colon \termsgrat \rightarrow 2^{\Gamma^{\star}}$ be the semantics of this automaton as given in \cref{section:preliminaries}.
 It is easy to see that $\ell$ is the unique function such that
\begin{equation}\label{language-kao-automaton}
\ell(e)= \{ \epsilon : \epsilon(e) = 1 \} \cup \bigcup_{e'\in\delta(e,a)} \{a\}\cdot \ell(e') \cup \bigcup_{e'\in\zeta(e,\alpha)} \{\alpha\}\cdot \ell(e') \ .
\end{equation}
To use this NDA for \KAO equivalence, we need to show that the language accepted by the state $e \in \termsgrat$ is $\semkao{e}$.
For this goal, it helps to algebraically characterise expressions in terms of their derivatives.
We call this a \emph{fundamental theorem} of \KAO, after~\cite{rutten-2003,silva-2010}.
\begin{restatable}[Fundamental theorem]{thrm}{fundamental}%
\label{theorem:fundamental}
For all $e \in \termsgrat$, the following holds
\[
    e \equivkao \epsilon(e) + \sum_{e' \in \delta(e, a)} a \cdot e' + \sum_{e' \in \zeta(e, \alpha)} \pi_\alpha \cdot e' \ .
\]
\end{restatable}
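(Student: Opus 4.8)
The plan is to prove the two inequalities $\mathrm{RHS} \leqqkao e$ and $e \leqqkao \mathrm{RHS}$ separately, where $\mathrm{RHS}$ abbreviates $\epsilon(e) + \sum_{e' \in \delta(e,a)} a \cdot e' + \sum_{e' \in \zeta(e,\alpha)} \pi_\alpha \cdot e'$. The first inequality is immediate from the machinery already in place: \cref{lemma:acc-vs-unit} gives $\epsilon(e) \leqqkao e$, while \cref{lemma:half-fundamental-theorem} gives $a \cdot e' \leqqkao e$ for every $e' \in \delta(e,a)$ and $\pi_\alpha \cdot e' \leqqkao e$ for every $e' \in \zeta(e,\alpha)$. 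Since a finite sum of terms each below $e$ is itself below $e$ (by idempotence and monotonicity of $+$), we conclude $\mathrm{RHS} \leqqkao e$ with no further work; this is precisely why the two preceding lemmas were isolated.

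The reverse inequality $e \leqqkao \mathrm{RHS}$ I would establish by induction on the structure of $e$. The cases $0$, $1$, and $a$ are direct computations from the definitions of $\epsilon$, $\delta$, and $\zeta$; for a propositional $p$ one uses that $p \equivba \bigvee_{\pi_\alpha \leqqba p} \pi_\alpha$ (which follows from \cref{lemma:atom-semantics} and \cref{theorem:ba-completeness}) together with the \WKAO axiom turning $\vee$ into $+$, so that $p \equivkao \sum_{\pi_\alpha \leqqba p} \pi_\alpha$, matching $\zeta(p,\alpha) = \{1 : \pi_\alpha \leqqba p\}$. The case $e = e_1 + e_2$ reduces to the inductive hypotheses after observing that the derivative sets take unions and $\epsilon$ takes maxima, both absorbed by idempotence of $+$. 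For $e = e_1 \cdot e_2$ I would expand $e_1$ via the inductive hypothesis and distribute on the right: the derivatives of $e_1$ yield exactly the $\{e' \cdot e_2\}$ summands of $\delta(e_1 e_2, \cdot)$ and $\zeta(e_1 e_2, \cdot)$, while the term $\epsilon(e_1) \cdot e_2$ vanishes when $\epsilon(e_1) = 0$ and otherwise unfolds, via the inductive hypothesis on $e_2$, into the remaining constant, $\Delta$-, and $Z$-contributions.

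A point worth flagging in the product case is the asymmetry of $\zeta$ on $\cdot$: the set $Z(e_1,e_2,\alpha)$ can be nonempty even when $\epsilon(e_1) = 0$, namely when some $e' \in \zeta(e_1,\alpha)$ satisfies $\epsilon(e') = 1$, so $\mathrm{RHS}$ may contain observation-transitions that the naive unfolding of $e_1 \cdot e_2$ does not produce. This is harmless precisely because we are only proving $e \leqqkao \mathrm{RHS}$: the extra summands merely enlarge $\mathrm{RHS}$, and the already-established inequality $\mathrm{RHS} \leqqkao e$ guarantees they never overshoot. Thus the very subtlety that distinguishes $\zeta$ from $\delta$ needs no special argument here.

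The genuine obstacle is the case $e = e_1^\star$, where unfolding is circular: when $\epsilon(e_1) = 1$, the identity $e_1^\star \equivkao 1 + e_1 \cdot e_1^\star$ reproduces $e_1^\star$ on the right. I would resolve this by writing $e_1 \equivkao \epsilon(e_1) + g_1$ for its derivative part $g_1 = \sum_{e' \in \delta(e_1,a)} a \cdot e' + \sum_{e' \in \zeta(e_1,\alpha)} \pi_\alpha \cdot e'$ (the inductive hypothesis), and then sandwiching $g_1 \leqqkao e_1 \leqqkao 1 + g_1$ to obtain $g_1^\star \leqqkao e_1^\star \leqqkao (1+g_1)^\star \equivkao g_1^\star$, hence $e_1^\star \equivkao g_1^\star$, using only the standard Kleene-algebra facts that $\star$ is monotone and $(1+x)^\star \equivkao x^\star$ (both available since the \KA axioms are among those of \KAO). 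Since $\mathrm{RHS}$ for $e_1^\star$ equals $1 + g_1 \cdot e_1^\star$ after distributing $e_1^\star$ into $g_1$, a final rewrite $1 + g_1 \cdot e_1^\star \equivkao 1 + g_1 \cdot g_1^\star \equivkao g_1^\star \equivkao e_1^\star$ (via the unfolding axiom for $g_1^\star$) closes the case. This reduction to the $\epsilon = 0$ shape is the only step that needs real care; everything else is bookkeeping against the definitions.
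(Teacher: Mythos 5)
Your proof is correct, but the hard direction takes a genuinely different route from the paper's. For $\mathrm{RHS} \leqqkao e$ you do exactly what the paper does, via \cref{lemma:acc-vs-unit} and \cref{lemma:half-fundamental-theorem}. For $e \leqqkao \mathrm{RHS}$, however, the paper does not induct on $e$ at all: it reuses the linear-systems machinery from the completeness proof, chaining $e \equivkao \sum_{e' \in \iota(e)} s_e(e')$ (\cref{corollary:least-solution-identity-total}), the system equations for $s_e$ (\cref{proposition:least-solution-characterisation}), the identity $s_e(e') \equivkao e'$ (\cref{proposition:least-solution-identity}), and finally a technical containment $\delta(e', a) \subseteq \delta(e, a)$ and $\zeta(e', \alpha) \subseteq \zeta(e, \alpha)$ for $e' \in \iota(e)$ (\cref{lemma:iota-derivatives-containment}), so the whole inequality falls out in a five-line calculation. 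Your structural induction is the classical Antimirov-style argument instead; its delicate points are exactly the ones you flag, and your resolutions are sound: the extra $Z$-summands in the product case only enlarge the right-hand side of a $\leqqkao$, and the circularity in the star case is broken by the sandwich $g_1 \leqqkao e_1 \leqqkao 1 + g_1$ together with monotonicity of ${}^\star$ and $(1 + g_1)^\star \equivkao g_1^\star$, both derivable from the \KA axioms (note that $g_1 \leqqkao e_1$ uses the already-established first inequality applied to $e_1$, not the inductive hypothesis — this is fine since that direction was proved for all terms uniformly). What each approach buys: the paper's proof is essentially free once the completeness machinery of \cref{section:completeness} is in place, whereas yours is self-contained and would let the fundamental theorem (and hence the decision procedure) be established independently of the least-solution apparatus, at the cost of redoing an induction whose bookkeeping the linear-systems formulation was designed to centralise.
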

\begin{proof}[Sketch]
The right-to-left containment is a result of \cref{lemma:acc-vs-unit,lemma:half-fundamental-theorem}.
The converse is a straightforward calculation using \cref{proposition:least-solution-characterisation,corollary:least-solution-identity-total}.
\end{proof}

Next, we turn the fundamental theorem into a statement about the \KAO semantics.
Before we do, however, we need the following basic result about the \KAO semantics.
\begin{restatable}{lem}{nomorearrow}\label{nomorearrow}
Let us fix $\alpha \in \at$.
For all $e \in \termsgrat$, we have:
\[
    \bigsemkao{\sum\nolimits_{e' \in \zeta(e, \alpha)} \pi_\alpha \cdot e'}
        = \bigcup_{e' \in \zeta(e, \alpha)} \{ \alpha \} \cdot \semkao{e'} \ .
\]
\end{restatable}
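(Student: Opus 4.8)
The plan is to first collapse the left-hand side into a union of $\down$-closures, and then show that the $\down$-closure adds nothing that is not already present on the right-hand side. Using \cref{lemma:closure-almost-commutes}(i) to push $\semkao{-}$ through the sum, and \cref{lemma:closure-almost-commutes}(ii) together with the observation that $\semkao{\pi_\alpha} = \down{\semba{\pi_\alpha}} = \down{\{\alpha\}} = \{\alpha\}$ (by \cref{lemma:atom-semantics}, since a single atom admits no contraction), I rewrite
\[
    \bigsemkao{\sum\nolimits_{e' \in \zeta(e, \alpha)} \pi_\alpha \cdot e'}
        = \bigcup_{e' \in \zeta(e, \alpha)} \down{(\{\alpha\} \cdot \semkao{e'})} \ .
\]
The inclusion $\supseteq$ of the statement is then immediate, since $A \subseteq \down{A}$ for every $A$; the work is in the inclusion $\subseteq$, i.e.\ in showing that each $\down{(\{\alpha\} \cdot \semkao{e'})}$ is contained in $\bigcup_{e'' \in \zeta(e,\alpha)} \{\alpha\} \cdot \semkao{e''}$.

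To control the closure, I would use that each $\semkao{e'} = \down{\semwkao{e'}}$ is already $\contr$-closed, so that contractions happening strictly inside a word $w \in \semkao{e'}$ keep us inside $\{\alpha\} \cdot \semkao{e'}$. The prefix-respecting presentation $\altcontr$ from \cref{lemma:closure-equivalence} makes precise that contraction never alters the leading letter of a word; hence the only genuinely new words in $\down{(\{\alpha\} \cdot \semkao{e'})}$ come from merging the prepended $\alpha$ with a word $w \in \semkao{e'}$ that itself begins with $\alpha$. Writing such a $w$ as $\alpha x$, the boundary contraction of $\alpha \cdot \alpha x$ yields exactly $\alpha x = w$. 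Thus the inclusion $\subseteq$ reduces to the following claim: whenever $\alpha x \in \semkao{e'}$ with $e' \in \zeta(e, \alpha)$, we have $x \in \semkao{e''}$ for some $e'' \in \zeta(e, \alpha)$ (so that $\alpha x = \alpha \cdot x \in \{\alpha\} \cdot \semkao{e''}$).

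The heart of the argument --- and the step I expect to be hardest --- is an adequacy property of the observation derivative: for every $g \in \termsgrat$, if $\alpha x \in \semkao{g}$ then $x \in \semkao{g''}$ for some $g'' \in \zeta(g, \alpha)$. I would prove its \WKAO analogue (with $\semwkao{-}$ in place of $\semkao{-}$) first, by induction on $g$. The base case $g = p$ uses that $\pi_\alpha \leqqba p$ iff $\alpha \in \semba{p}$; the case $g = g_1 \cdot g_2$ splits on whether the $g_1$-part of $\alpha x$ is empty, invoking \cref{lemma:acc-vs-unit} and the $Z$-branch of $\zeta$ (the double-observation) in the empty case; the star case is similar. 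Lifting this to $\semkao{-}$ is the delicate part: a word $\alpha x \in \semkao{g}$ expands to some $w_0 \in \semwkao{g}$ that still begins with $\alpha$, but the expansion may have enlarged the leading block of $\alpha$'s, so one must peel off these $\alpha$'s one at a time with the \WKAO claim, at each step re-absorbing the resulting continuation into $\zeta(g, \alpha)$ by \cref{lemma:double-derivatives}. It is precisely this repeated use of $\zeta(g', \alpha) \subseteq \zeta(g, \alpha)$ that keeps the continuation inside $\zeta(g,\alpha)$ and encodes contraction at the level of derivatives.

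Finally, applying the adequacy claim with $g = e' \in \zeta(e, \alpha)$ gives some $e'' \in \zeta(e', \alpha)$, and one more application of \cref{lemma:double-derivatives} (using $e' \in \zeta(e, \alpha)$) places $e''$ back in $\zeta(e, \alpha)$, completing the inclusion $\subseteq$ and hence the proof.
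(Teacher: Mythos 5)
Your overall strategy for the lemma itself coincides with the paper's: the right-to-left inclusion is immediate from $\semkao{\pi_\alpha} = \{\alpha\}$ and \cref{lemma:closure-almost-commutes}, and the left-to-right inclusion is reduced to showing that the right-hand side is closed under single $\contr$-steps, split into an ``internal'' contraction (absorbed because each $\semkao{e'}$ is itself $\contr$-closed) and a ``boundary'' contraction, which is exactly where the paper invokes its \cref{morethanonea} --- your ``adequacy property'' that $\alpha x \in \semkao{g}$ implies $x \in \semkao{g''}$ for some $g'' \in \zeta(g,\alpha)$ --- followed by \cref{lemma:double-derivatives}. Where you genuinely diverge is in how that adequacy property is established. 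The paper does not prove it by induction on the term: it derives it from the completeness apparatus of \cref{section:completeness} (\cref{proposition:least-solution-characterisation}, \cref{proposition:least-solution-identity} and \cref{proposition:component-closure}), which already package the fact that each $s_e(e')$ is closed and decomposes along its derivatives, so the claim falls out in a few lines. Your route --- a structural induction for the \WKAO version followed by a lifting through the closure --- is more elementary and self-contained, at the cost of re-proving by hand what the linear-system machinery gives for free; it also makes the lemma independent of the completeness proof, which has some methodological appeal.

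The one step you should nail down is the lifting, which you rightly flag as delicate. A naive induction on the length of the $\contr$-chain from $\alpha x$ to a word of $\semwkao{g}$ does not close: in the case where a step duplicates the leading atom, $\alpha x \contr \alpha\alpha x$, the induction hypothesis hands you $\alpha x \in \semkao{g'}$ for some $g' \in \zeta(g,\alpha)$, but bare membership in $\semkao{g'}$ carries no bound on the length of its witnessing chain, so you cannot immediately apply the hypothesis a second time to peel the next $\alpha$. The fix is to strengthen the induction hypothesis so that it also returns a witnessing $\contr$-chain for $x$ of length at most that of the input chain; then the second application is legitimate, and \cref{lemma:double-derivatives} re-absorbs the resulting continuation into $\zeta(g,\alpha)$ exactly as you describe. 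With that strengthening your argument is sound.
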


We now arrive at a semantic analogue of the fundamental theorem.
\begin{restatable}{prop}{fundamentalsemantics}%
\label{lemma:fundamentalsemantics}
For all $e\in\termsgrat$, we have:
\[
    \semkao{e}
        = \{ \epsilon : \epsilon(e) = 1 \}
            \cup \bigcup_{e' \in \delta(e, a)} \{a\} \cdot \semkao{e'}
            \cup \bigcup_{e' \in \zeta(e, \alpha)} \{ \alpha \} \cdot \semkao{e'}
            \ .
\]
\end{restatable}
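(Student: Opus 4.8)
The plan is to transport the syntactic Fundamental Theorem (\cref{theorem:fundamental}) through the semantics $\semkao{-}$ and then simplify each summand separately. Concretely, I would apply $\semkao{-}$ to both sides of \cref{theorem:fundamental} and invoke soundness (\cref{lemma:soundness}) to obtain $\semkao{e} = \semkao{\epsilon(e) + \sum_{e' \in \delta(e,a)} a \cdot e' + \sum_{e' \in \zeta(e,\alpha)} \pi_\alpha \cdot e'}$, where the two inner sums are understood to range over $a \in \Sigma$ and $\alpha \in \at$ as well. By part (i) of \cref{lemma:closure-almost-commutes}, $\semkao{-}$ sends every $+$ to a union, so the whole task reduces to identifying the semantics of each of the three kinds of summand and taking unions.

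For the constant, $\semkao{\epsilon(e)}$ is $\{ \epsilon \}$ when $\epsilon(e) = 1$ and $\emptyset$ otherwise, i.e.\ exactly $\{ \epsilon : \epsilon(e) = 1 \}$. For an action summand $a \cdot e'$, part (ii) of \cref{lemma:closure-almost-commutes} gives $\semkao{a \cdot e'} = \down{(\semkao{a} \cdot \semkao{e'})}$, and $\semkao{a} = \down{\{ a \}} = \{ a \}$ since $a \in \Sigma$ admits no contraction. I would then argue that $\down{(\{ a \} \cdot \semkao{e'})} = \{ a \} \cdot \semkao{e'}$: because $a$ is an action and the relation $\contr$ only contracts repeated \emph{atoms}, any contraction inside a word $a w$ with $w \in \semkao{e'}$ takes place entirely within $w$; as $\semkao{e'} = \down{\semwkao{e'}}$ is already contraction-closed, the result is again of the form $a w''$ with $w'' \in \semkao{e'}$. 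Hence this summand contributes precisely $\{ a \} \cdot \semkao{e'}$, and the action part of the union is $\bigcup_{e' \in \delta(e,a)} \{ a \} \cdot \semkao{e'}$.

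The observation summands are the one delicate point, and I expect them to be the main obstacle. Here the naive computation only yields $\semkao{\pi_\alpha \cdot e'} = \down{(\{ \alpha \} \cdot \semkao{e'})}$ (using $\semkao{\pi_\alpha} = \down{\semba{\pi_\alpha}} = \{ \alpha \}$ via \cref{lemma:atom-semantics}), and now the closure genuinely adds words: if some $w \in \semkao{e'}$ already begins with $\alpha$, the prepended $\alpha$ can be contracted against it, so in general $\down{(\{ \alpha \} \cdot \semkao{e'})}$ strictly exceeds $\{ \alpha \} \cdot \semkao{e'}$. This is exactly the gap closed by \cref{nomorearrow}, which states that, for fixed $\alpha$, $\bigsemkao{\sum_{e' \in \zeta(e,\alpha)} \pi_\alpha \cdot e'} = \bigcup_{e' \in \zeta(e,\alpha)} \{ \alpha \} \cdot \semkao{e'}$. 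Intuitively the extra contracted words are harmless because of \cref{lemma:double-derivatives}: any re-use of the observation $\alpha$ lands in a continuation that is again reachable by $\zeta(\cdot, \alpha)$, so every contracted word is already contributed by another summand of the \emph{same} union. Thus, after distributing $\semkao{-}$ over the outer union on $\alpha$ (again by \cref{lemma:closure-almost-commutes}(i)) and invoking \cref{nomorearrow} for each $\alpha$, the observation part becomes $\bigcup_{e' \in \zeta(e,\alpha)} \{ \alpha \} \cdot \semkao{e'}$. Combining the three contributions gives the claimed identity; the entire difficulty has been quarantined inside \cref{nomorearrow}, while the remaining steps are routine bookkeeping.
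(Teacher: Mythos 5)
Your proposal is correct and follows essentially the same route as the paper: apply soundness to the fundamental theorem, split the sum into a union via \cref{lemma:closure-almost-commutes}, identify the constant summand via \cref{lemma:acc-vs-unit}, observe that the contraction closure commutes past an action prefix $a \in \Sigma$, and delegate the genuinely delicate observation case to \cref{nomorearrow}. You also correctly pinpoint why the observation summands are the only non-routine step and give the right intuition (via \cref{lemma:double-derivatives}) for why that lemma holds.
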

\begin{proof}[Sketch]
By applying soundness of $\equivkao$ w.r.t. $\semkao{-}$ (\cref{lemma:soundness}) to the fundamental theorem, and using \cref{nomorearrow} for the term summing over the $\at$-continuations of $e$.
\end{proof}

The language of a state in the syntactic NFA is then characterised by the \KAO semantics.

\begin{thrm}[Soundness of translation]\label{theorem:language-equiv}
  Let $e\in\termsgrat$; then $\semkao{e}=\ell(e)$.
\end{thrm}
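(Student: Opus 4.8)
The plan is to observe that $\semkao{-}$ and $\ell$ satisfy literally the same recursive equation, and then to exploit the fact that this recursion determines membership of a word by induction on its length, so that the solution is unique. Concretely, \cref{lemma:fundamentalsemantics} tells us that $\semkao{-}$ satisfies exactly the equation~\eqref{language-kao-automaton} that characterises $\ell$. It therefore suffices to prove that any two functions $\termsgrat \to 2^{\Gamma^\star}$ obeying this equation agree on every word; applying this to $\semkao{-}$ and $\ell$ yields the claim.

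I would prove agreement by induction on the length of the word, quantifying over all $e \in \termsgrat$ simultaneously: for every $n \in \naturals$, every $e \in \termsgrat$, and every $w \in \Gamma^\star$ with $|w| = n$, we have $w \in \semkao{e}$ if and only if $w \in \ell(e)$. For the base case $n = 0$, the word is $\epsilon$; since the two summands ranging over $\delta$ and $\zeta$ in~\eqref{language-kao-automaton} contribute only words of length at least one, both $\epsilon \in \semkao{e}$ and $\epsilon \in \ell(e)$ reduce to the condition $\epsilon(e) = 1$ (for $\semkao{-}$ this is also \cref{lemma:acc-vs-unit}), so the two sides coincide.

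For the inductive step, I take $w$ with $|w| = n+1$ and write $w = \mathfrak{a}x$ with $\mathfrak{a} \in \Gamma$ and $|x| = n$. Because $\Gamma = \at \cup \Sigma$ is a disjoint union, exactly one of the two nonempty summands in the recursion can produce a word beginning with $\mathfrak{a}$: if $\mathfrak{a} = a \in \Sigma$ the relevant summand is the $\delta$-term, and if $\mathfrak{a} = \alpha \in \at$ it is the $\zeta$-term. In either case, membership of $w$ on either side is equivalent to the existence of a derivative $e'$ (in $\delta(e,a)$ resp.\ $\zeta(e,\alpha)$) with $x$ in the corresponding language. Since $|x| = n$, the induction hypothesis gives $x \in \semkao{e'}$ iff $x \in \ell(e')$ for every such $e'$, and the two membership conditions match. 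This closes the induction and hence establishes $\semkao{e} = \ell(e)$ for all $e$.

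The genuine content of the theorem lives in \cref{lemma:fundamentalsemantics}, which I may assume; once both functions are known to satisfy the same guarded recursion, the remaining argument is purely a well-foundedness (uniqueness) observation. The only point deserving care is the disjointness of actions and atoms in $\Gamma$, which guarantees that a word's first symbol selects a single summand and thereby keeps the case analysis clean; this is precisely what makes the length-based induction go through without circularity, and I expect it to be the sole place where a careful reader might otherwise worry about the recursion being underdetermined.
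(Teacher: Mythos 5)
Your proposal is correct and follows essentially the same route as the paper: the theorem is obtained by combining \cref{lemma:fundamentalsemantics} with the uniqueness of $\ell$ as the solution of the recursion~\eqref{language-kao-automaton}. The only difference is that you spell out the uniqueness argument (induction on word length, using disjointness of $\Sigma$ and $\at$), which the paper takes as given from the characterisation of $\ell$ in the preliminaries.
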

\begin{proof}[Sketch]
A direct consequence of \cref{lemma:fundamentalsemantics} and the uniqueness of $\ell$ in satisfying~\eqref{language-kao-automaton}.
\end{proof}

To decide whether $\semkao{e}=\semkao{f}$ it suffices to show that $e$ and $f$ are language equivalent in the syntactic automaton.
We express this in terms of $\iota$ as defined in \cref{def:iota}.

\begin{restatable}{cor}{semiota}%
\label{cor:semiota}
	For all $e,f \in \termsgrat$, we have
	\[
        \semkao{e} = \semkao{f} \quad \Leftrightarrow \quad
        \bigcup_{e' \in \iota(e)} \ell(e') = \bigcup_{f' \in \iota(f)} \ell(f')
        \ .
    \]
\end{restatable}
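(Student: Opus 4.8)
The plan is to reduce the corollary to a single identity, namely that for every $e \in \termsgrat$ one has $\semkao{e} = \bigcup_{e' \in \iota(e)} \ell(e')$. Once this is in hand, applying it to both $e$ and $f$ immediately yields the biconditional: each side of the claimed equivalence is then just $\semkao{e} = \semkao{f}$ rewritten, so the two statements coincide and the $\iff$ is trivial.

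To establish the identity I would chain together three facts already proved. First, \cref{lemma:reconstruct-from-initial} gives $e \equivkao \sum_{e' \in \iota(e)} e'$, and soundness of $\semkao{-}$ with respect to $\equivkao$ (\cref{lemma:soundness}) turns this into the semantic equation $\semkao{e} = \bigsemkao{\sum_{e' \in \iota(e)} e'}$. Second, part (i) of \cref{lemma:closure-almost-commutes}, applied repeatedly over the finite set $\iota(e)$, distributes $\semkao{-}$ over the sum, giving $\bigsemkao{\sum_{e' \in \iota(e)} e'} = \bigcup_{e' \in \iota(e)} \semkao{e'}$. Third, \cref{theorem:language-equiv} identifies $\semkao{e'}$ with $\ell(e')$ for each $e'$, so that $\bigcup_{e' \in \iota(e)} \semkao{e'} = \bigcup_{e' \in \iota(e)} \ell(e')$. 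Composing these three equalities yields $\semkao{e} = \bigcup_{e' \in \iota(e)} \ell(e')$, as desired, and the same reasoning applies verbatim to $f$.

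Two minor points deserve a check rather than any genuine effort. The induction distributing $\semkao{-}$ over the sum must accommodate the degenerate case $\iota(e) = \emptyset$ (e.g.\ $e = 0$), where the empty sum is $0$ and the empty union is $\emptyset$; these agree since $\semkao{0} = \emptyset$. Moreover, associativity and commutativity of $\cup$ render irrelevant the particular bracketing and ordering chosen for $\sum$, so the repeated use of \cref{lemma:closure-almost-commutes}(i) is unambiguous. I do not expect any real obstacle: the statement is a bookkeeping consequence of the fundamental theorem (through \cref{theorem:language-equiv}) together with the reconstruction lemma. Its point is downstream—phrasing equivalence in terms of $\iota$ keeps all automaton states reachable from $\iota(e)$ inside the finite set $\rho(e)$ (recall $\iota(e) \subseteq \rho(e)$, which is closed under derivatives by \cref{lemma:reach-contains-derivatives}), which is precisely what makes the bisimulation-up-to-congruence procedure of \cref{theorem:bisim} applicable and terminating.
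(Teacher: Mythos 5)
Your proposal is correct and follows essentially the same route as the paper: the paper's proof is exactly the chain \cref{lemma:reconstruct-from-initial} (with soundness), \cref{lemma:closure-almost-commutes}(i), and \cref{theorem:language-equiv}, applied to both sides. Your reformulation as a single identity $\semkao{e} = \bigcup_{e' \in \iota(e)} \ell(e')$ and the check of the empty-sum case are just slightly more explicit bookkeeping of the same argument.
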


By construction, $\iota(e) \subseteq \rho(e)$ for every $e \in \termsgrat$.
Since $\rho(e)$ and $\rho(f)$ are closed under partial derivatives, we can restrict the syntactic automaton to $\rho(e) \cup \rho(f)$, to obtain a \emph{finite} NDA\@.
To decide $e\equivkao f$, it suffices to decide $\bigcup_{e' \in \iota(e)} \ell(e') = \bigcup_{f' \in \iota(f)} \ell(f')$ on this NFA\@.

This leads us to the main result of this section: a decision procedure for \KAO. 

\begin{restatable}[Decision procedure]{thrm}{decisionprocedure}
For all $e, f \in \termsgrat$, we have $e \equivkao f$ if and only if there exists $R$ such that $(\iota(e),\iota(f)) \in R$ and $R$ is a bisimulation up to congruence for the syntactic automaton restricted to $\rho(e) \cup \rho(f)$.
\end{restatable}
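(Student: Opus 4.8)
The plan is to chain together the semantic and automata-theoretic results established earlier, so that the entire content of this theorem becomes an assembly step. First I would invoke \cref{theorem:soundness-completeness} to replace the syntactic equivalence $e \equivkao f$ by the semantic equality $\semkao{e} = \semkao{f}$. Next, \cref{cor:semiota} rephrases this equality as the language equality $\bigcup_{e' \in \iota(e)} \ell(e') = \bigcup_{f' \in \iota(f)} \ell(f')$ in the syntactic automaton $(\termsgrat, \epsilon, \theta)$ of \cref{def:aut-expr}. Finally, \cref{theorem:bisim}, applied with $U = \iota(e)$ and $V = \iota(f)$, turns this language equality into the existence of a bisimulation up to congruence $R$ with $(\iota(e), \iota(f)) \in R$. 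Composing these three biconditionals yields the claim.

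The one point requiring care is that the statement restricts the syntactic automaton to the \emph{finite} state set $\rho(e) \cup \rho(f)$, whereas \cref{theorem:bisim} as stated concerns a possibly infinite NDA. I would first observe that $\rho(e) \cup \rho(f)$ is closed under $\theta$: by \cref{lemma:reach-contains-derivatives}, for every $e' \in \rho(e)$ we have $\delta(e', a) \subseteq \rho(e)$ and $\zeta(e', \alpha) \subseteq \rho(e)$, so each $\Gamma$-transition out of a state in $\rho(e)$ lands again in $\rho(e)$, and symmetrically for $f$. Hence the restriction $(\rho(e) \cup \rho(f), \epsilon, \theta)$ is a genuine NDA over $\Gamma$. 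Since $\ell$ is the unique map satisfying the defining recursion of \cref{section:preliminaries}, and that recursion consults only transitions reachable from a state, the language $\ell(e')$ of a state $e' \in \rho(e) \cup \rho(f)$ is unchanged whether computed in the full or in the restricted automaton. Combined with the inclusions $\iota(e) \subseteq \rho(e)$ and $\iota(f) \subseteq \rho(f)$, which hold by construction, this lets me apply \cref{theorem:bisim} to the restricted automaton without any change to the accepted languages, so its language-equivalence condition coincides with the right-hand side of \cref{cor:semiota}.

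I expect no deep obstacle: the substance of the result lives entirely in the machinery already developed, and this theorem merely glues it together. The only mild subtlety is the bookkeeping in the restriction argument above --- verifying closure under $\theta$ and invariance of $\ell$ --- which is where I would be most careful to avoid conflating the full syntactic automaton with its finite restriction. A secondary remark is that the relevant alphabet here is $\Gamma = \at \cup \Sigma$ rather than $\Sigma$; since the definitions of determinisation, congruence closure, and bisimulation up to congruence in \cref{section:preliminaries} are uniform in the alphabet, \cref{theorem:bisim} applies verbatim with $\Gamma$ in place of $\Sigma$.
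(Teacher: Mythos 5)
Your proposal matches the paper's proof exactly: the same three-step chain through \cref{theorem:soundness-completeness}, \cref{cor:semiota}, and \cref{theorem:bisim}. The extra care you take about restricting to the finite automaton on $\rho(e) \cup \rho(f)$ (closure under derivatives via \cref{lemma:reach-contains-derivatives}, invariance of $\ell$) is exactly what the paper handles in the prose immediately preceding the theorem, so it is a welcome but not divergent addition.
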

\begin{proof}[Sketch]
By \cref{theorem:bisim}, such an $R$ exists precisely when $\bigcup_{e' \in \iota(e)} \semkao{e'} = \bigcup_{f' \in \iota(f)} \semkao{f'}$; by \cref{cor:semiota,theorem:soundness-completeness} this is equivalent to $e \equivkao f$.
\end{proof}

\begin{example}
We know that for all $a,b\in\Omega$ we have that $a\wedge b \not\equivkao a\cdot b$.
This also follows from our decision procedure, which we argue by showing that any attempt to construct a bisimulation up to congruence $R$ containing $(\{a\wedge b\},\{a\cdot b\})$ fails.

We start with $R = \{(\{a \wedge b\}, \{a \cdot b\})\}$, and note that $\bar{\epsilon}(\{a \wedge b\}) = 0 = \bar{\epsilon}(\{a \cdot b\})$.
We now take a derivative w.r.t.\ $\alpha \in \at$ such that $\pi_\alpha\leqqba a \wedge b$.
To grow $R$ into a bisimulation up to congruence, all derivatives should be checked and possibly added to $R$; this specific choice, however, will lead to a counterexample.
We have that $\bar{\theta}(\{a \wedge b\}, \alpha)=\{ 1 \}$ and $\bar{\theta}(\{a \cdot b\}, \alpha) = \{1 \cdot b, 1\}$.
We add $(\{1\},\{1 \cdot b,1\})$ to $R$, noting that $\bar{\epsilon}(\{1\}) = 1 = \bar{\epsilon}(\{1 \cdot b, 1 \})$, and continue with the next derivative w.r.t.\ $\beta \in \at$ such that $\pi_\beta \leqqba b$.
We get $\bar{\theta}(\{1\}, \beta) = \emptyset$ and $\bar{\theta}(\{1 \cdot b, 1\}, \beta) = \{1\}$.
We now have $(\emptyset, \{1\}) \in R$, but $\bar{\epsilon}(\emptyset) \neq \bar{\epsilon}(\{1\})$.
Thus, we cannot construct a bisimulation up to congruence $R$ such that $(\{a \wedge b\},\{a \cdot b\}) \in R$.
\end{example}

\begin{remark}
For \KAO-expressions without observations, the derivatives with respect to an element of $\at$ result in the empty set.
Hence, for these \KAO-expressions, deciding equivalence comes down to standard derivative-based techniques for rational expressions~\cite{rutten-1998}.
\end{remark}

\section{Related work}%
\label{section:related-work}

This work fits in the larger tradition of Kleene algebra as a presentation of the ``laws of programming'', the latter having been studied by Hoare and collaborators~\cite{hoare-hayes-he-etal-1987,hoare-2014}.
More precisely, our efforts can be grouped with recent efforts to extend Kleene algebra with concurrency~\cite{hoare-moeller-struth-wehrman-2009,hoare-2014,laurence-struth-2017-arxiv,kappe-brunet-silva-zanasi-2018}, and thence with Boolean guards~\cite{jipsen-moshier-2016}.

We proved that $\equivkao$ is sound and complete w.r.t.\ $\semkao{-}$, based on the existing completeness proof for KA\@.
By re-using completeness results for a simpler algebra, the proof shows a clear separation of concerns between the ``old'' algebra being extended and the new layer of axioms placed on top.
This strategy pops up quite often in some form~\cite{kozen-smith-1996,anderson-foster-guha-etal-2014,laurence-struth-2017-arxiv,kappe-brunet-silva-zanasi-2018}.
We note that unlike~\cite{kozen-smith-1996,anderson-foster-guha-etal-2014}, our transformation does not proceed by induction on the term, but leverages the least fixpoints computable in every Kleene algebra.
Further, the combination of KA with additional hypotheses presented in~\cite{kozen-mamouras-2014} might yield another route to completeness.

The use of linear systems to study automata was pioneered by Conway~\cite{conway-1971} and Backhouse~\cite{backhouse-1975}.
Kozen's completeness proof expanded on this by generalising Kleene algebra to matrices~\cite{kozen-1994}.
The connection between linear systems, derivatives and completeness was studied by Kozen~\cite{kozen-2001} and Jacobs~\cite{jacobs-2006}.
To keep our presentation simple, we give our proof of completeness in elementary terms; a proof in terms of coalgebra~\cite{rutten-2000} may yet be possible.

Using bisimulation to decide language equivalence in a deterministic finite automaton originates from Hopcroft and Karp~\cite{hopcroft-karp-1971}.
This technique has many generalisations~\cite{rot-bonchi-bonsangue-etal-2017}.%

\section{Conclusions and further work}%
\label{section:further-work}

Kleene algebra with observations (KAO) is an algebraic framework that adds Boolean guards to Kleene algebra in such a way that a sensible extension with concurrency is still possible, in contrast with Kleene algebra with tests (KAT).
Indeed, the laws of KAO prevent the problem presented by \cref{antinomy:problem}.
The axiomatisation of KAO, as well as the decision procedure for equivalence, give an alternative foundation for combining KAO with concurrent Kleene algebra to arrive at a new equational calculus of concurrent programs.

The most obvious direction of further work is to define \emph{concurrent Kleene algebra with observations} (\emph{CKAO}) as the amalgamation of axioms of KAO and CKA, in pursuit of a characterisation of its equational theory and an analogous decidability result.
We conjecture that languages of \emph{sp-pomsets}~\cite{grabowski-1981,gischer-1988,lodaya-weil-2000} over actions and atoms, closed under some suitable relation, form the free model; a proof would likely build on~\cite{laurence-struth-2014} and re-use techniques from~\cite{kappe-brunet-silva-zanasi-2018}.
While the equational theory of CKAO might be decidable, we are less optimistic about a feasible algorithm; deciding the equational theory of CKA is already \textsc{expspace}-complete~\cite{brunet-pous-struth-2017}.

Another avenue of future research would be to create a programming language using KAO (or, possibly, CKAO) by instantiating actions and observations, and adding axioms that encode the intention of those primitives.
NetKAT~\cite{anderson-foster-guha-etal-2014,foster-kozen-milano-silva-thompson-2015,smolka-eliopoulos-foster-guha-2015}, a language for describing and reasoning about network behaviour is an excellent example of such an endeavour based on KAT\@.
Our long-term hope is that CKAO will function as a foundation for a ``concurrent'' version of NetKAT aimed at describing and reasoning about networks with concurrency.

Finally, we note that the decision procedure for KAO based on bisimulation up to congruence leaves room for optimisation.
Besides adapting the work on symbolic algorithms for bisimulation-based algorithms in KAT~\cite{pous-2014}, the transitivity property witnessed in \cref{lemma:double-derivatives} seems like it could sometimes allow a bisimulation-based algorithm to decide early.

\bibliographystyle{plainurl}
\bibliography{bibliography}

\ifarxiv%

\clearpage%
\appendix%

\section{Proofs about the language model}

\wkaosoundness*
\begin{proof}
It suffices to check the claim for the pairs generating $\equivwkao$.

For the pairs that come from $\equivba$, the claim then goes through by the fact that $\semwkao{-}$ coincides with $\semba{-}$ on $\termsprop$, and that $\equivba$ is sound w.r.t. $\semba{-}$ by \cref{theorem:ba-completeness}.

For the pairs that come from $\equivka$, the claim goes through by the observation that $2^{\Gamma^\star}$ forms a Kleene algebra when equipped with the operators used to define $\semwkao{-}$, and the fact that $\equivka$ encodes the axioms of a Kleene algebra.

Lastly, we can easily check that for any $p, q \in \termsprop$ we have that
\[
\semwkao{p \vee q}
    = \semba{p} \cup \semba{q}
    = \semwkao{p + q}
\quad\quad
\semwkao{\bot}
    = \semba{\bot}
    = \emptyset
    = \semwkao{0}
    \qedhere
    \ .
\]
\end{proof}

\closurealmostcommutes*
\begin{proof}
Let $\contr^*$ denote the reflexive, transitive closure of $\contr$.
\begin{enumerate}[(i)]
    \item
    First, suppose $w \in \semkao{e + f}$.
    Then $w \contr^* x$ for some $x \in \semwkao{e + f}$.
    If $x \in \semwkao{e}$, then $w \in \semkao{e}$; similarly, if $x \in \semwkao{f}$, then $w \in \semkao{f}$.

    For the other inclusion, suppose that $w \in \semkao{e} \cup \semkao{f}$.
    If $w \in \semkao{e}$, then $w \contr^* x$ for some $x \in \semwkao{e}$.
    In that case, $x \in \semwkao{e + f}$, and thus $w \in \semkao{e + f}$.
    Similarly, if $w \in \semkao{f}$, also $w \in \semkao{e + f}$.

    \item
    For the inclusion from left to right, we note that
    \[
        \semkao{e \cdot f}
            = \down{(\semwkao{e} \cdot \semwkao{f})}
            \subseteq \down{(\semkao{e} \cdot \semkao{f})}
            \ .
    \]

    For the other inclusion, suppose $w \in \down{(\semkao{e} \cdot \semkao{f})}$.
    Then there exist $x \in \semkao{e}$ and $y \in \semkao{f}$ such that $w \contr^* xy$.
    In that case, there also exist $u \in \semwkao{e}$ and $v \in \semwkao{f}$ such that $x \contr^* u$ and $y \contr^* v$.
    Hence, we find that $w \contr^* xy \contr^* xv \contr^* uv \in \semkao{e \cdot f}$, and thus $w \in \semkao{f}$.

    \item
    When $L$ is an observation language, we define $L^n$ by setting $L^0 = \{ \epsilon \}$ and $L^{n+1} = L \cdot L^n$.
    It is then straightforward to see that $L^\star = \bigcup_{n \in \naturals} L^n$.
    Similarly, we define $e^n$ by $e^0 = 1$ and $e^{n+1} = e \cdot e^n$.
    By~(i) and~(ii), we find:
    \begin{equation*}
        \semkao{e^\star}
            = \down{\semwkao{e}^\star}
            = \bigcup_{n \in \naturals} \down{\semwkao{e}^n}
            = \bigcup_{n \in \naturals} \semkao{e^n}
            = \bigcup_{n \in \naturals} \down{\semkao{e}^n}
            = \down{\semkao{e}^\star}
            \ .
        \qedhere
    \end{equation*}
\end{enumerate}
\end{proof}

\soundness*
\begin{proof}
We proceed by induction on the construction of $\equivkao$.
In the base, we check the pairs that generate $\equivkao$.
For the pairs that come from $\equivwkao$, we know by the above that $\semwkao{e} = \semwkao{f}$, and hence $\semkao{e} = \down{\semwkao{e}} = \down{\semwkao{f}} = \semkao{f}$.

For the contraction law, let $p, q \in \termsprop$; we should show $\semkao{p \wedge q} \subseteq \semkao{p \cdot q}$.
To see this, let $w \in \semkao{p \wedge q}$.
Then there exists an $\alpha \in \semwkao{p \wedge q} = \semba{p} \cap \semba{q}$ with $w \contr \alpha$.
Since $\alpha\alpha \in \semwkao{p \cdot q}$ and $\alpha \contr \alpha\alpha$, it follows that $\alpha \in \semkao{p \cdot q}$ and hence $w \in \semkao{p \cdot q}$.

For the inductive step, we should verify that the claim is preserved by the closure rules for congruence.
This gives us three cases to consider.
\begin{itemize}
    \item
    If $e = e_0 + e_1$ and $f = f_0 + f_1$ with $e_0 \equivkao f_0$ and $e_1 \equivkao f_1$, then by induction we know that $\semkao{e_0} = \semkao{f_0}$ and $\semkao{e_1} = \semkao{f_1}$.
    By \cref{lemma:closure-almost-commutes}(i), we can then derive that $\semkao{e} = \semkao{e_0} \cup \semkao{e_1} = \semkao{f_0} \cup \semkao{f_1} = \semkao{f}$. 

    \item
    If $e = e_0 \cdot e_1$ and $f_0 \cdot f_1$ with $e_0 \equivkao f_0$ and $e_1 \equivkao f_1$, then by induction we know that $\semkao{e_0} = \semkao{f_0}$ and $\semkao{e_1} = \semkao{f_1}$.
    By \cref{lemma:closure-almost-commutes}(ii), we can then derive that $\semkao{e} = \down{(\semkao{e_0} \cdot \semkao{e_1})} = \down{(\semkao{f_0} \cdot \semkao{f_1})} = \semkao{f}$ 

    \item
    If $e = e_0^\star$ and $f = f_0^\star$ and $e_0 \equivkao f_0$, then by induction we know that $\semkao{e_0} = \semkao{f_0}$.
    By \cref{lemma:closure-almost-commutes}(iii), we can then derive that $\semkao{e} = \down{\semkao{e_0}^\star} = \down{\semkao{f_1}^\star} = \semkao{f}$. 
    \qedhere
\end{itemize}
\end{proof}

\section{Proofs about partial derivatives}

\accvsunit*
\begin{proof}
First, assume that $\epsilon \in \semkao{e}$.
We will now show that $\epsilon(e) = 1$ by induction on $e$.
In the base, it suffices to consider the case where $e = 1$; here, we find $\epsilon(e) = 1$ by definition of $\epsilon$.
For the inductive step, there are three cases to consider.
\begin{itemize}
    \item
    If $e = e_0 + e_1$, then $\epsilon \in \semkao{e_0}$ or $\epsilon \in \semkao{e_1}$ by \cref{lemma:closure-almost-commutes}(i). 
    By induction, we then find that $\epsilon(e_0) = 1$ or $\epsilon(e_1) = 1$, and hence $\epsilon(e) = 1$.

    \item
    If $e = e_0 \cdot e_1$, then $\epsilon \contr^* wx$ with $w \in \semkao{e_0}$ and $x \in \semkao{e_1}$ by \cref{lemma:closure-almost-commutes}(ii). 
    By definition of $\contr$, we find that $w = x = \epsilon$.
    By induction, we then have that $\epsilon(e_0) = \epsilon(e_1) = 1$, and thus $\epsilon(e) = 1$.

    \item
    If $e = e_0^\star$, then $\epsilon(e) = 1$ by definition.
\end{itemize}

Next, note that to prove that $\epsilon(e) = 1$ implies $\epsilon \in \semkao{e}$, it suffices to prove that $\epsilon(e) \leqqkao e$, by \cref{lemma:soundness}.
This we do by induction on $e$.
In the base, the claim is true trivially.
For the inductive step, there are three cases to consider.
\begin{itemize}
    \item
    If $e = e_0 + e_1$, then we can derive that by induction that
    \[
        \epsilon(e) \equivkao \epsilon(e_0) + \epsilon(e_1) \leqqkao e_0 + e_1 = e \ .
    \]

    \item
    If $e = e_0 \cdot e_1$, then we can derive by induction that
    \[
        \epsilon(e) = \epsilon(e_0) \cdot \epsilon(e_1) \leqqkao e_0 \cdot e_1 = e \ .
    \]

    \item
    If $e = e_0^\star$, then $\epsilon(e) = 1 \leqqkao 1 + e_0 \cdot e_0^\star \equivkao e_0^\star = e$.
    \qedhere
\end{itemize}
\end{proof}

\doublederivatives*
\begin{proof}
We proceed by induction on $e$.
In the base, where $e \in \{ 0, 1 \} \cup \Sigma \cup \termsprop$, the claim holds vacuously.
For the inductive step, there are three cases to consider.
\begin{itemize}
    \item
    If $e = e_0 + e_1$, then $e' \in \zeta(e_0, \alpha)$ or $e' \in \zeta(e_1, \alpha)$; we assume the former without loss of generality.
    By induction, we then know that $\zeta(e', \alpha) \subseteq \zeta(e_0, \alpha)$; since $\zeta(e_0, \alpha) \subseteq \zeta(e, \alpha)$, the claim then follows.

    \item
    If $e = e_0 \cdot e_1$, we have three subcases to consider.
    \begin{itemize}
        \item
        If $e' = e_0' \cdot e_1$ with $e_0' \in \zeta(e_0, \alpha)$, then by induction we know that $\zeta(e_0', \alpha) \subseteq \zeta(e_0, \alpha)$.
        If $e'' \in \zeta(e_0' \cdot e_1, \alpha)$, then we have two more subcases to consider.
        \begin{itemize}
            \item
            If $e'' = e_0'' \cdot e_1$ for some $e_0'' \in \zeta(e_0', \alpha)$, then by induction we know that $e_0'' \in \zeta(e_0, \alpha)$, and therefore $e'' \in \zeta(e_0 \cdot e_1, \alpha)$.

            \item
            If $e'' \in Z(e_0', e_1, \alpha) = \zeta(e_1, \alpha)$, then one of two cases applies.
            First, if $\epsilon(e_0') = 1$, we find that $e'' \in \zeta(e_1, \alpha) = Z(e_0, e_1, \alpha)$.
            Second, if there exists an $e_0'' \in \zeta(e_0', \alpha)$ such that $\epsilon(e_0'') = 1$, then we find that $e_0'' \in \zeta(e_0, \alpha)$ by induction; hence, $e'' \in \zeta(e_1, \alpha) = Z(e_0, e_1, \alpha) \subseteq \zeta(e, \alpha)$.
        \end{itemize}

        \item
        If $e' \in Z(e_0, e_1, \alpha) = \zeta(e_1, \alpha)$ then by induction we find $e'' \in \zeta(e_1, \alpha) = Z(e_0, e_1, \alpha) \subseteq \delta(e, \alpha)$.
    \end{itemize}

    \item
    If $e = e_0^\star$, then $e' = e_0' \cdot e_0^\star$ for some $e_0' \in \zeta(e_0, \alpha)$.
    If $e'' \in \zeta(e', \alpha)$, then we have two subcases to consider.
    \begin{itemize}
        \item
        If $e'' = e_0'' \cdot e_0^\star$ for some $e_0'' \in \zeta(e_0', \alpha)$, then by induction we know that $e_0'' \in \zeta(e_0, \alpha)$.
        It therefore follows that $e'' \in \zeta(e, \alpha)$.

        \item
        If $e'' \in Z(e_0', e_0^\star, \alpha) = \zeta(e_0^\star, \alpha)$, then $e'' \in \zeta(e, \alpha)$ immediately.
        \qedhere
    \end{itemize}
\end{itemize}
\end{proof}

\halffundamentaltheorem*
\begin{proof}
We prove the second claim by induction on $e$; the first claim can be shown analogously.
For the base, the claim holds vacuously if $e \in \{ 0, 1 \} \cup \Sigma$.
When $e \in \termsprop$, we have $e' = 1$ and $\pi_\alpha \leqqba e$.
Hence, we conclude that $\pi_\alpha \cdot e' \equivkao \pi_\alpha \leqqkao e$.

For the inductive step, there are three cases to consider.
\begin{itemize}
    \item
    If $e = e_0 + e_1$, then either $e' \in \zeta(e_0, \alpha)$ or $e' \in \zeta(e_1, \alpha)$; w.l.o.g., we assume the former.
    By induction, we then find that $\pi_\alpha \cdot e' \leqqkao e_0 \leqqkao e$.

    \item
    If $e = e_0 \cdot e_1$, then there are two cases to consider.
    \begin{itemize}
        \item
        If $e' = e_0' \cdot e_1$ for some $e_0' \in \zeta(e_0, \alpha)$, then by induction we know that $\pi_\alpha \cdot e_0' \leqqkao e_0$.
        By induction, we derive $\pi_\alpha \cdot e' \equivkao \pi_\alpha \cdot e_0' \cdot e_1 \leqqkao e_0 \cdot e_1 = e$.

        \item
        If $e' \in Z(e_0, e_1, \alpha) = \zeta(e_1, \alpha)$, then $\pi_\alpha \cdot e' \leqqkao e_1$, and one of two cases applies.
        First, if $\epsilon(e_0) = 1$, then $\pi_\alpha \cdot e' \leqqkao e_1 \equivkao 1 \cdot e_1 \leqqkao e_0 \cdot e_1 = e$, by \cref{lemma:acc-vs-unit}.
        Second, if there exists some $e_0' \in \zeta(e_0, \alpha)$ with $\epsilon(e_0') = 1$, then by induction $\pi_\alpha \cdot e_0' \leqqkao e_0$.
        Hence, we can derive that
        \begin{align*}
        \pi_\alpha \cdot e'
            &\equivkao (\pi_\alpha \wedge \pi_\alpha) \cdot e' \\
            &\leqqkao (\pi_\alpha \cdot \pi_\alpha) \cdot e' \\
            &\equivkao (\pi_\alpha \cdot 1) \cdot (\pi_\alpha \cdot e') \\
            &\leqqkao (\pi_\alpha \cdot e_0') \cdot (\pi_\alpha \cdot e') \\
            &\leqqkao e_0 \cdot e_1 \\
            &\equivkao e
             \ .
        \end{align*}
    \end{itemize}

    \item
    If $e = e_0^\star$, then $e' = e_0' \cdot e_0^\star$ for some $e_0' \in \zeta(e_0, \alpha)$.
    By induction, we know that $\pi_\alpha \cdot e_0' \leqqkao e_0$.
    We can then derive that
    \[
        \pi_\alpha \cdot e'
            \equivkao \pi_\alpha \cdot e_0' \cdot e_0^\star
            \leqqkao e_0 \cdot e_0^\star
            \leqqkao 1 + e_0 \cdot e_0^\star
            \equivkao e_0^\star
            \ .
            \qedhere
    \]
\end{itemize}
\end{proof}

\reachcontainsderivatives*
\begin{proof}
We prove the second claim by induction on $e$; the first claim can be proved analogously.
In the base, there are two cases to consider.
If $e \in \{0, 1\} \cup \Sigma$, then the claim holds vacuously.
Otherwise, if $e \in \termsprop$, then $\zeta(e, \alpha) \subseteq \{ 1 \} \subseteq \rho(e)$.

For the inductive step, there are three cases to consider.
\begin{itemize}
    \item
    If $e = e_0 + e_1$, then by induction we have $\zeta(e_0, \alpha) \subseteq \rho(e_0)$ and $\zeta(e_1, \alpha) \subseteq \rho(e_1)$.
    Hence, we find that $\zeta(e, \alpha) = \zeta(e_0, \alpha) \cup \zeta(e_1, \alpha) \subseteq \rho(e_0) \cup \rho(e_1) = \rho(e)$.

    \item
    If $e = e_0 \cdot e_1$, then by induction we have $\zeta(e_0, \alpha) \subseteq \rho(e_0)$ and $\zeta(e_1, \alpha) \subseteq \rho(e_1)$.
    Hence, we can calculate that
    \[
    \zeta(e, \alpha)
        = \{ e_0' \cdot e_1 : e_0' \in \zeta(e_0, \alpha) \} \cup Z(e_0, e_1, \alpha)
        \subseteq \{ e_0' \cdot e_1 : e_0' \in \rho(e_0) \} \cup \rho(e_1)
         = \rho(e)
         \ .
    \]

    \item
    If $e = e_0^\star$, then by induction we have $\zeta(e_0, \alpha) \subseteq \rho(e_0)$.
    Hence, we find that
    \[
        \zeta(e, \alpha) = \{ e_0' \cdot e_0^\star : e_0' \in \zeta(e_0, \alpha) \} \subseteq \{ e_0' \cdot e_0^\star : e_0' \in \rho(e_0) \} \subseteq \rho(e) \ .
    \]
\end{itemize}

For the second part, we prove that if $e' \in \rho(e)$, then $\rho(e') \subseteq \rho(e)$; this, in conjunction with the first part, validates the claim.
We proceed by induction on $e$.
In the base, there are two cases to consider.
First, if $e \in \{ 0, 1 \} \cup \Sigma$, then the claim holds vacuously.
Otherwise, if $e \in \termsprop$, then $e' = 1$, and therefore $\rho(e') = \emptyset \subseteq \rho(e)$.

For the inductive step, there are three cases to consider.
\begin{itemize}
    \item
    If $e = e_0 + e_1$, assume w.l.o.g.\ $e' \in \rho(e_0)$.
    By induction, $\rho(e') \subseteq \rho(e_0) \subseteq \rho(e)$.

    \item
    If $e = e_0 \cdot e_1$ then there are two cases to consider.
    \begin{itemize}
        \item
        If $e' = e_0' \cdot e_1$ where $e_0' \in \rho(e_0)$, then we calculate
        \[
            \rho(e')
                = \{ e_0'' \cdot e_1 : e_0'' \in \rho(e_0') \} \cup \rho(e_1)
                \subseteq \{ e_0'' \cdot e_1 : e_0'' \in \rho(e_0) \} \cup \rho(e_1)
                = \rho(e)
                \ .
        \]
        \item
        If $e' \in \rho(e_1)$, then by induction we have $\rho(e') \subseteq \rho(e_1) \subseteq \rho(e)$.
    \end{itemize}

    \item
    If $e = e_0^\star$, then either $e' = 1$ or $e' = e_0' \cdot e_0^\star$ for some $e_0' \in \rho(e_0)$.
    In the former case, $\rho(e') = \emptyset$.
    In the latter case, we find by induction that
    \[
        \rho(e')
            = \{ e_0'' \cdot e_0^\star : e_0'' \in \rho(e_0') \} \cup \rho(e_0^\star)
            \subseteq \{ e_0'' \cdot e_0^\star : e_0'' \in \rho(e_0) \} \cup \rho(e_0^\star) = \rho(e)
            \ .
            \qedhere
    \]
\end{itemize}
\end{proof}

\reconstructfrominitial*
\begin{proof}
The proof proceeds by induction on $e$.
In the base, where $e \in \{ 0,  1 \} \cup \Sigma \cup \termsprop$, the claim holds trivially.
For the inductive step, there are three cases to consider.
\begin{itemize}
    \item
    If $e = e_0 + e_1$, then we calculate by induction that
    \[
        e
            = e_0 + e_1
            \equivkao \sum_{e_0' \in \iota(e_0)} e_0' + \sum_{e_1' \in \iota(e_1)} e_1
            \equivkao \sum_{e' \in \iota(e)} e'
            \ .
    \]

    \item
    If $e = e_0 \cdot e_1$, then we calculate by induction that
    \[
        e
            = e_0 \cdot e_1
            \equivkao \sum_{e_0' \in \iota(e_0)} e_0' \cdot e_1
            \equivkao \sum_{e' \in \iota(e)} e'
            \ .
    \]

    \item
    If $e = e_0^\star$, then we calculate by induction that
    \[
        e
            = e_0^\star
            \equivkao 1 + e_0 \cdot e_0^\star
            \equivkao 1 + \sum_{e_0' \in \iota(e_0')} e_0' \cdot e_0^\star
            \equivkao \sum_{e' \in \iota(e)} e'
            \ .
            \qedhere
    \]
\end{itemize}
\end{proof}

\section{Proofs towards completeness}

\kaocompletenesspartial*
\begin{proof}
Note that we can regard atomic guarded rational terms as rational expressions over $\Delta = \Sigma \cup \Pi$; let this interpretation be given by $\semka{-}: \termsagrat \to 2^{\Delta^\star}$.
It is now easy to show that for $g \in \termsagrat$ closed, we have $\semkao{g} = \semka{g}$.
By the premise that $e$ and $f$ are closed, and that $\semkao{e} = \semkao{f}$, we thus find that $\semka{e} = \semka{f}$.
By \cref{theorem:ka-completeness}, it then follows that $e \equivka f$; hence, $e \equivkao f$, since the axioms generating $\equivka$ are also valid for $\equivkao$.
\end{proof}

\leastsolutioncharacterisation*
\begin{proof}
Suppose that $s$ is a $\rho(e)$-vector that satisfies the system of equations above.
Using \cref{lemma:reach-contains-derivatives}, we can then calculate for any $e' \in \rho(e)$ as follows:
\begin{align*}
(M_e \cdot s + x_e \fatsemi f)(e')
    &\equivaka \epsilon(e') \cdot f + \sum_{e'' \in \rho(e)} M_e(e', e'') \cdot s(e'') \\
    &\equivaka \epsilon(e') \cdot f + \sum_{e'' \in \rho(e)} \Bigl( \sum_{e'' \in \delta(e', a)} a + \sum_{e'' \in \zeta(e', \alpha)} \pi_\alpha \Bigr) \cdot s(e'') \\
    &\equivaka \epsilon(e') \cdot f + \sum_{e'' \in \delta(e', a)} a \cdot s(e'') + \sum_{e' \in \zeta(e'', \alpha)} \pi_\alpha \cdot s(e'')
     \leqqaka s(e')
     \ .
\end{align*}
Hence, we find that $M_e \cdot s + x_e \fatsemi f \leqqaka s$.
Since $M_e^\star \cdot (x_e \fatsemi f) = (M_e^\star \cdot x_e) \fatsemi f = s_e \fatsemi f$ is the least $\rho(e)$-vector for which this holds, we can conclude that $s_e \fatsemi f \leqqaka s$.

Conversely, a similar derivation shows that $s_e \fatsemi f$ is can take the place of $s$ in the system above, and hence the least such $s$ is a lower bound for $s_e \fatsemi f$.

As for the other claim, let $s$ be the $\rho(e)$-vector defined as follows:
\[
    s(e') = \epsilon(e') \cdot f
        + \sum_{e'' \in \delta(e', a)} a \cdot (s_e \fatsemi f)(e'')
        + \sum_{e'' \in \zeta(e', \alpha)} \pi_\alpha \cdot (s_e \fatsemi f)(e'')
        \ .
\]
Now $s \leqqaka s_e \fatsemi f$ by the first part of this proposition.
By monotonicity, we also derive the following:
\begin{align*}
&\epsilon(e') \cdot f
    + \sum_{e'' \in \delta(e', a)} a \cdot s(e'')
    + \sum_{e'' \in \zeta(e', \alpha)} \pi_\alpha \cdot s(e'') \\
&\quad\leqqaka \epsilon(e') \cdot f
    + \sum_{e'' \in \delta(e', a)} a \cdot (s_e \fatsemi f)(e'')
    + \sum_{e'' \in \zeta(e', \alpha)} \pi_\alpha \cdot (s_e \fatsemi f)(e'')
 \equivaka s(e')
 \ ,
\end{align*}
meaning $s$ satisfies the system obtained from $e$ and $f$; hence, $s_e \fatsemi f \leqqaka s$ by the first part of this proposition again.
Consequently, $s \equivkao s_e \fatsemi f$, validating the claim.
\end{proof}

To prove \cref{proposition:least-solution-identity}, we need two technical lemmas.

\begin{restatable}{lem}{fixpointpropertiessubsystem}%
\label{lemma:fixpoint-properties-subsystem}
Let $e, f \in \termsgrat$.
If $f' \in \rho(f)$ and $\rho(f) \subseteq \rho(e)$, then $s_f(f') \leqqaka s_e(f')$.
\end{restatable}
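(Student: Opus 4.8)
The plan is to exploit the least-solution characterisation of $s_e$ and $s_f$ supplied by \cref{proposition:least-solution-characterisation}, instantiated with its parameter term fixed to $1$. Under that instantiation, $s_e$ is the least (w.r.t.\ $\leqqaka$) $\rho(e)$-vector $s$ satisfying, for every $e' \in \rho(e)$,
\[
    \epsilon(e')
        + \sum_{e'' \in \delta(e', a)} a \cdot s(e'')
        + \sum_{e'' \in \zeta(e', \alpha)} \pi_\alpha \cdot s(e'')
            \leqqaka s(e') \ ,
\]
and, analogously, $s_f$ is the least $\rho(f)$-vector solving the corresponding system arising from $f$. Crucially, the same proposition tells us that when $s = s_e$ this inequality is in fact an equivalence.

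The central idea is to feed $s_e$ into the $f$-system by restriction. Concretely, I would define the $\rho(f)$-vector $t$ by $t(f') = s_e(f')$, which is well-typed precisely because $\rho(f) \subseteq \rho(e)$. The goal then reduces to showing that $t$ solves the $f$-system: by leastness of $s_f$, this immediately yields $s_f \equivaka s_f \fatsemi 1 \leqqaka t$, that is, $s_f(f') \leqqaka s_e(f')$ for every $f' \in \rho(f)$, which is exactly the claim.

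To verify that $t$ is a solution of the $f$-system, I would take any $f' \in \rho(f)$ and invoke the equivalence satisfied by $s_e$ at $f'$, which is legitimate since $f' \in \rho(e)$. The key observation is that the indices $f''$ appearing in the two sums -- ranging over $\delta(f', a)$ and $\zeta(f', \alpha)$ -- all lie in $\rho(f)$ by \cref{lemma:reach-contains-derivatives}, as $\rho(f)$ is closed under partial derivatives. Consequently $t(f'') = s_e(f'')$ for each such $f''$, so the $f$-equation at $f'$ coincides syntactically with the (restricted) $e$-equation at $f'$, and the equivalence holding for $s_e$ downgrades to the required inequality for $t$.

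I expect this closure step to be the only point requiring care: it is \cref{lemma:reach-contains-derivatives} that guarantees the partial derivatives of an element of $\rho(f)$ stay inside $\rho(f)$, and this is what makes $t$ a genuine vector for the $f$-system rather than one referencing components outside $\rho(f)$. Once this is established, no computation is needed beyond substituting equals for equals. Finally, since \cref{proposition:least-solution-characterisation} is stated uniformly with $\AKA$ standing for either $\WKAO$ or $\KAO$, the argument establishes the inequality for $\leqqaka$ in both systems without modification.
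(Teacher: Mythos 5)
Your proof is correct and follows essentially the same route as the paper: both define the restriction of $s_e$ to $\rho(f)$, verify via \cref{proposition:least-solution-characterisation} (with parameter $1$) that it satisfies the system arising from $f$, and conclude by leastness of $s_f$. Your explicit appeal to \cref{lemma:reach-contains-derivatives} to justify that the restricted vector is well-defined on all derivative indices is a point the paper leaves implicit, but it is the same argument.
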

\begin{proof}
We fix a $\rho(f)$-vector $s$ by choosing for $f' \in \rho(f)$ that $s(f') = s_e(f')$.
By \cref{proposition:least-solution-characterisation}, we have for any $f' \in \rho(f)$ that
\[
    \epsilon(f') + \sum_{f'' \in \delta(f', a)} a \cdot s(f'') + \sum_{f'' \in \zeta(f', \alpha)} \pi_\alpha \cdot s(f'') \leqqaka s(f') \ .
\]
Hence, $s$ satisfies the system of linear equations obtained from $f$, and therefore $s_f$ is a lower bound of $s$; the claim then follows.
\end{proof}

\begin{restatable}{lem}{fixpointpropertiescontinuation}%
\label{lemma:fixpoint-properties-continuation}
Let $e, f \in \termsgrat$.
If $e' \in \rho(e)$ and $f' \in \iota(f)$, then $s_e(e') \cdot s_f(f') \leqqaka s_{e \cdot f}(e' \cdot f)$.
\end{restatable}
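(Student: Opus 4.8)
The plan is to prove the inequality through the least-solution characterisation of \cref{proposition:least-solution-characterisation}, instantiated at $e$ with the scalar constant $s_f(f')$. Concretely, I would define a $\rho(e)$-vector $s$ by $s(e'') = s_{e \cdot f}(e'' \cdot f)$, which is well-typed since $e'' \cdot f \in \rho(e \cdot f)$ for every $e'' \in \rho(e)$. If $s$ can be shown to satisfy the system associated with $e$ and the constant $s_f(f')$, then \cref{proposition:least-solution-characterisation} yields $s_e \fatsemi s_f(f') \leqqaka s$; reading off the $e'$-component gives exactly $s_e(e') \cdot s_f(f') \leqqaka s_{e \cdot f}(e' \cdot f)$, as required.

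To verify that $s$ solves that system, I would fix $e'' \in \rho(e)$ and expand $s_{e \cdot f}(e'' \cdot f) = s(e'')$ using the \emph{equivalence} form of \cref{proposition:least-solution-characterisation} (taken for $e \cdot f$ with constant $1$). Unfolding $\delta(e'' \cdot f, a)$ and $\zeta(e'' \cdot f, \alpha)$ splits the result into terms indexed by derivatives of the shape $e''' \cdot f$, which coincide precisely with the $\delta$- and $\zeta$-summands of the system for $s$, together with a remainder coming from $\epsilon(e'' \cdot f)$, $\Delta(e'', f, a)$ and $Z(e'', f, \alpha)$. Since the matching summands already lie below $s(e'')$, the whole obligation collapses to the single inequality $\epsilon(e'') \cdot s_f(f') \leqqaka s_{e \cdot f}(e'' \cdot f)$. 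This is immediate when $\epsilon(e'') = 0$, so the heart of the matter is the case $\epsilon(e'') = 1$, where I must show $s_f(f') \leqqaka s_{e \cdot f}(e'' \cdot f)$.

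I would handle this in two steps. First, \cref{lemma:fixpoint-properties-subsystem} applies with outer system $e \cdot f$ and sub-system $f$ (using $f' \in \iota(f) \subseteq \rho(f)$ and $\rho(f) \subseteq \rho(e \cdot f)$), giving $s_f(f') \leqqaka s_{e \cdot f}(f')$. Second, I would compare $s_{e \cdot f}(f')$ with $s_{e \cdot f}(e'' \cdot f)$: when $\epsilon(e'') = 1$ the continuation conditions force $\Delta(e'', f, a) = \delta(f, a)$ and $Z(e'', f, \alpha) = \zeta(f, \alpha)$, and $\epsilon(e'' \cdot f) = \epsilon(f)$, so expanding $s_{e \cdot f}(e'' \cdot f)$ produces (among other, nonnegative, terms) $\epsilon(f)$ together with all of $\sum_{f'' \in \delta(f, a)} a \cdot s_{e \cdot f}(f'')$ and $\sum_{f'' \in \zeta(f, \alpha)} \pi_\alpha \cdot s_{e \cdot f}(f'')$. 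Expanding $s_{e \cdot f}(f')$ in the same way (its derivatives stay in $\rho(f) \subseteq \rho(e \cdot f)$ by \cref{lemma:reach-contains-derivatives}) and comparing term by term reduces everything to the inclusions $\epsilon(f') \leq \epsilon(f)$, $\delta(f', a) \subseteq \delta(f, a)$ and $\zeta(f', \alpha) \subseteq \zeta(f, \alpha)$ for $f' \in \iota(f)$.

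These three inclusions are the real obstacle, and I would isolate them as an auxiliary lemma proved by a single simultaneous induction on $f$ (using \cref{def:iota} and the definitions of $\epsilon$, $\delta$, $\zeta$). The base cases and the $+$-case are routine; the $\cdot$- and $\star$-cases need care, because the condition governing $Z(\cdot, \cdot, \alpha)$ mixes $\epsilon$ of a term with $\epsilon$ of its $\zeta$-derivatives, so establishing the $\zeta$-inclusion requires feeding back both the $\epsilon$- and the $\zeta$-inclusions from the induction hypothesis. It is worth stressing that the restriction to \emph{initial factors} $\iota(f)$ is indispensable: for a general reachable term $f' \in \rho(f)$ these inclusions fail (e.g.\ $\zeta(q, \alpha) \not\subseteq \zeta(p \cdot q, \alpha)$ for a suitable $\alpha$, with $q \in \rho(p \cdot q) \setminus \iota(p \cdot q)$ and $q \not\leqqkao p \cdot q$), which is precisely why $\iota$ rather than $\rho$ appears in the statement.
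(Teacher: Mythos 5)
Your proposal is correct and follows essentially the same route as the paper: both define the candidate $\rho(e)$-vector $s(e'') = s_{e \cdot f}(e'' \cdot f)$, verify that it satisfies the system for $e$ with constant $s_f(f')$ using the equivalence form of \cref{proposition:least-solution-characterisation} together with \cref{lemma:fixpoint-properties-subsystem}, and conclude by leastness. The auxiliary containments you isolate for initial factors ($\epsilon(f') \le \epsilon(f)$, $\delta(f', a) \subseteq \delta(f, a)$, $\zeta(f', \alpha) \subseteq \zeta(f, \alpha)$) are indeed the needed ingredients and appear in the paper as \cref{lemma:iota-derivatives-containment} combined with \cref{lemma:acc-vs-unit,lemma:reconstruct-from-initial}; your detour through $s_{e \cdot f}(f')$, as opposed to the paper's direct term-by-term bound, is only a cosmetic difference.
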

\begin{proof}
Using \cref{proposition:least-solution-characterisation,lemma:reconstruct-from-initial,lemma:acc-vs-unit,lemma:fixpoint-properties-subsystem}, we calculate for all $e' \in \rho(e)$:
\begin{align*}
&\epsilon(e') \cdot s_f(f') + \sum_{e'' \in \delta(e', a)} a \cdot s_{e \cdot f}(e'' \cdot f) + \sum_{e'' \in \zeta(e', \alpha)} \pi_\alpha \cdot s_{e \cdot f}(e'' \cdot f) \\
&\quad\equivaka \epsilon(e') \cdot \Bigl( \epsilon(f') + \sum_{f'' \in \delta(f', a)} a \cdot s_f(f') + \sum_{f'' \in \zeta(f', \alpha)} \pi_\alpha \cdot s_f(f') \Bigr) \\
&\quad\phantom{\equivaka} + \sum_{e'' \in \delta(e', a)} a \cdot s_{e \cdot f}(e'' \cdot f) + \sum_{e'' \in \zeta(e', \alpha)} \pi_\alpha \cdot s_{e \cdot f}(e'' \cdot f) \\
&\quad\leqqaka \epsilon(e' \cdot f) + \sum_{g \in \delta(e' \cdot f, a)} a \cdot s_{e \cdot f}(g) + \sum_{g \in \zeta(e' \cdot f, \alpha)} \pi_\alpha \cdot s_{e \cdot f}(g)
 \equivaka s_{e \cdot f}(e' \cdot f)
 \ .
\end{align*}
Hence, the $\rho(e)$-vector $s$ where $s(e') = s_{e \cdot f}(e' \cdot f)$ satisfies the system obtained from $e$ and $s_f(f')$.
By \cref{proposition:least-solution-characterisation}, we find for all $e' \in \rho(e)$ that
\[
    s_e(e') \cdot s_f(f') = (s_e \fatsemi s_f(f'))(e') \leqqaka s(e') = s_{e \cdot f}(e' \cdot f)
    \ .
    \qedhere
\]
\end{proof}

\leastsolutionidentity*
\begin{proof}
Let $s$ be the identity on $\rho(e)$.
By \cref{lemma:half-fundamental-theorem} and \cref{lemma:acc-vs-unit}, we find that $s$ satisfies the system of linear equations obtained from $e$ as in \cref{proposition:least-solution-characterisation}.
Hence, by that lemma, for all $e' \in \rho(e)$ we have $s_e(e') \leqqkao s(e') = e'$.

The other direction goes by induction on $e$.
In the base, there are four cases.
\begin{itemize}
    \item
    If $e = 0$, then $\rho(e) = \emptyset$, and so the claim holds trivially.

    \item
    If $e = 1$, then $e' = 1$, and $e' = 1 = \epsilon(1) \leqqkao s_e(1) = s_e(e')$.

    \item
    If $e = a$ for some $a \in \Sigma$, then either $e' = a$, or $e' = 1$.
    The case where $e' = 1$ can be argued as before.
    On the other hand, if $e' = a$, then we find that
    \[
        e' = a \equivkao a \cdot 1 \leqqkao a \cdot s_e(1)
           \leqqkao s_e(a)
           = s_e(e')
           \ .
    \]

    \item
    If $e = p$ for some $p \in \termsprop$, then either $e' = p$ or $e' = 1$.
    As before, we have to argue the case where $e' = p$ only.
    By \cref{theorem:ba-completeness} and \cref{lemma:atom-semantics}, we find
    \[
        e' = p
            \equivkao \sum_{\pi_\alpha \leqqba p} \pi_\alpha
            \leqqkao \sum_{\pi_\alpha \leqqba p} \pi_\alpha \cdot s_e(1)
            \leqqkao s_e(p)
            = s_e(e')
            \ .
    \]
\end{itemize}
For the inductive step, there are three cases
\begin{itemize}
    \item
    If $e = e_0 + e_1$, then the claim follows by \cref{lemma:fixpoint-properties-subsystem}

    \item
    If $e = e_0 \cdot e_1$, then either $e' = e_0' \cdot f$ for some $e' \in \rho(e_0)$, or $e' \in \rho(e_1)$.
    In the former case, we find by induction, \cref{lemma:fixpoint-properties-continuation,lemma:reconstruct-from-initial} that
    \[
        e'
            = e_0' \cdot e_1
            \equivkao \sum_{e_1' \in \iota(e_1)} e_0' \cdot e_1'
            \leqqkao \sum_{e_1' \in \iota(e_1)} s_{e_0}(e_0') \cdot s_{e_1}(e_1')
            \leqqkao s_e(e')
            \ .
    \]
    In the latter case, we find $e' \leqqkao s_e(e')$ as before.

    \item
    If $e = e_0^\star$, then it suffices to treat the case where $e' = e_0' \cdot e$ with $e_0' \in \rho(e_0)$ only.
    By \cref{lemma:fixpoint-properties-continuation}, we derive for any $e'' \in \iota(e)$ that $s_{e_0}(e_0') \cdot s_e(e') \leqqkao s_{e_0 \cdot e}(e_0' \cdot e)$.
    By \cref{lemma:fixpoint-properties-subsystem} and the observation that $\rho(e_0 \cdot e) \subseteq \rho(e)$, we furthermore know that $s_{e_0 \cdot e}(e_0' \cdot e) \leqqkao s_e(e_0' \cdot e)$.
    We then conclude by induction, \cref{lemma:reconstruct-from-initial} and the above that
    \begin{equation*}
    e'
        = e_0' \cdot e
        \equivkao \sum_{e'' \in \iota(e)} e_0' \cdot e''
        \leqqkao \sum_{e' \in \iota(e)} s_{e_0}(e_0') \cdot s_e(e')
        \leqqkao s_e(e_0' \cdot e)
        = s_e(e')
        \ .
        \qedhere
    \end{equation*}
\end{itemize}
\end{proof}

\leastsolutionidentitytotal*
\begin{proof}
Using \cref{proposition:least-solution-identity,lemma:reconstruct-from-initial}, we calculate:
\[
    e
        \equivkao \sum_{e' \in \iota(e)} e'
        \equivkao \sum_{e' \in \iota(e)} s_e(e')
        \equivkao \hat{e}
        \ .
        \qedhere
\]
\end{proof}

\closureequivalence*
\begin{proof}
It suffices to prove that, whenever $w \contr x$, also $w \altcontr x$, and when $w \altcontr x$ also $w \contr^* x$, where $\contr^*$ denotes the reflexive and transitive closure of $\contr$.
For the first part, suppose that $w \contr x$; in that case, there exist $u, v \in \Gamma^\star$ such that $\alpha \in \at$, with $w = u\alpha{}v$ and $x = u\alpha\alpha{}v$.
A straightforward inductive argument on the length of $v$ then shows that $v \altcontr v$, and hence $\alpha{}v \altcontr \alpha\alpha{}v$.
By induction on the length of $w$, we conclude $w = u\alpha{}v \altcontr u\alpha\alpha{}v = x$.

For the other direction, suppose that $w \altcontr x$; we proceed by induction on the construction of $\altcontr$.
In the base, we have that $w = \epsilon = x$, in which case $w \contr^* x$ immediately.
For the inductive step, there are two cases to consider.
\begin{itemize}
    \item
    If there exists an $\mathfrak{a} \in \Gamma$ such that $w = \mathfrak{a}w'$ and $x = \mathfrak{a}x'$ with $w' \altcontr x'$, then by induction $w' \contr^* x'$.
    There must then exist $y_0, \dots, y_{n-1} \in \Gamma^\star$ such that $w' = y_0 \contr y_1 \contr \dots \contr y_{n-1} = x'$.
    By induction on $n$, we can then show that $w = \mathfrak{a}w' = \mathfrak{a}y_0 \contr \mathfrak{a}y_1 \contr \dots \contr \mathfrak{a}y_{n-1} = \mathfrak{a}x' = x$.

    \item
    If there exists an $\alpha \in \at$ such that $w = \alpha{}w'$ and $x = \alpha\alpha{}x'$ with $w' \altcontr x'$, then by induction $w' \contr^* x'$.
    By an argument similar to the previous case, we can show that $w = \alpha{}w' \contr^* \alpha{}x'$.
    Since $\alpha{}x' \contr \alpha\alpha{}x' = x$, it follows that $w \contr^* x$.
    \qedhere
\end{itemize}
\end{proof}

\componentclosure*
\begin{proof}
By definition, we have that $\semwkao{s_e(e')} \subseteq \semkao{s_e(e')}$.
For the reverse inclusion, it suffices to prove that $\semwkao{s_e(e')}$ is closed under~$\altcontr$, by \cref{lemma:closure-equivalence}.
More precisely, if $x \in \semwkao{s_e(e')}$ and $w \altcontr x$, we should show that $w \in \semwkao{s_e(e')}$.
We do this in tandem for all $e' \in \rho(e)$, and proceed by induction on $\altcontr$.
In the base, we have that $w = \epsilon = x$, and so the claim follows.

For the inductive step, there are two cases to consider.
\begin{itemize}
    \item
    If there exists an $\mathfrak{a} \in \Gamma$ such that $w = \mathfrak{a}w'$ and $x = \mathfrak{a}x'$ with $w' \altcontr x'$, then either $\mathfrak{a} \in \Sigma$ or $\mathfrak{a} \in \at$.
    In the latter case, we find by \cref{proposition:least-solution-characterisation} (for $\AKA = \WKAO$) and \cref{lemma:wkao-soundness} that there exist $\alpha \in \at$ and $e'' \in \zeta(e', \alpha)$ such that $\mathfrak{a} \in \semba{\pi_\alpha}$ and $x' \in \semwkao{s_e(e'')}$.
    By induction, $w' \in \semwkao{s_e(e'')}$; but then, again by \cref{proposition:least-solution-characterisation} and \cref{lemma:wkao-soundness} it follows that $w = \mathfrak{a}w' \in \semwkao{s_e(e')}$.
    The case where $\mathfrak{a} \in \Sigma$ can be argued analogously.

    \item
    Suppose there exists an $\alpha \in \at$ such that $w = \alpha{}w'$ and $x = \alpha\alpha{}x'$ with $w' \altcontr x'$.
    As in the previous case, we find $\beta, \gamma \in \at$ and $e'' \in \zeta(e', \beta)$ and $e''' \in \zeta(e'', \gamma)$ such that $\alpha \in \semba{\pi_\beta}$ and $\alpha \in \semba{\pi_\gamma}$, as well as $x' \in \semwkao{s_e(e''')}$. 
    By induction, we find $w' \in \semwkao{s_e(e''')}$. 
    By \cref{lemma:atom-semantics}, it also follows that $\beta = \alpha = \gamma$; hence, by \cref{lemma:double-derivatives}, we find $e''' \in \zeta(e', \alpha)$. 
    Again applying \cref{proposition:least-solution-characterisation} and \cref{lemma:wkao-soundness}, we can then conclude that $w = \alpha{}w' \in \semwkao{s_e(e')}$.
    \qedhere
\end{itemize}
\end{proof}

\closure*
\begin{proof}
Using \cref{lemma:soundness,lemma:closure-almost-commutes,proposition:component-closure,lemma:wkao-soundness}, we find
\[
    \semkao{\hat{e}}
        = \bigcup_{e' \in \iota(e)} \semkao{s_e(e')}
        = \bigcup_{e' \in \iota(e)} \semwkao{s_e(e')}
        = \semwkao{\hat{e}}
        \ .
        \qedhere
\]
\end{proof}

\section{Proofs about the decision procedure}

For the proofs that follow, we shall need the following technical lemma.

\begin{restatable}{lem}{iotaderivativescontainment}%
\label{lemma:iota-derivatives-containment}
Let $e \in \termsgrat$.
If $a \in \Sigma$ and $e' \in \iota(e)$, then $\delta(e', a) \subseteq \delta(e, a)$.
Furthermore, if $\alpha \in \at$ and $e' \in \iota(e)$, then $\zeta(e', \alpha) \subseteq \zeta(e, \alpha)$.
\end{restatable}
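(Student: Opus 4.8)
The plan is to prove both containments by structural induction on $e$, treating the $\delta$-statement and the $\zeta$-statement in parallel since their case analyses are essentially analogous. In each case I unfold the definition of $\iota$ (\cref{def:iota}) together with the continuation maps, and match each contribution to $\delta(e',a)$ (resp.\ $\zeta(e',\alpha)$) against a corresponding contribution to $\delta(e,a)$ (resp.\ $\zeta(e,\alpha)$). The induction hypothesis is applied to a proper subterm $e_0$ of $e$ at one of its own initial factors.

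For the base cases $e \in \{0,1\} \cup \Sigma \cup \termsprop$, the set $\iota(e)$ is either empty or the singleton $\{e\}$, so the containment holds vacuously or by equality. For $e = e_0 + e_1$ we have $\iota(e) = \iota(e_0) \cup \iota(e_1)$, so an initial factor $e'$ lies in, say, $\iota(e_0)$; the induction hypothesis gives $\delta(e',a) \subseteq \delta(e_0,a) \subseteq \delta(e,a)$, using that both continuation maps distribute over $+$, and symmetrically for $\zeta$.

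The interesting cases are $e = e_0 \cdot e_1$ and $e = e_0^\star$, where a nontrivial initial factor has the shape $e' = e_0' \cdot e_1$ (resp.\ $e' = e_0' \cdot e_0^\star$) for some $e_0' \in \iota(e_0)$. The ``left-hand'' contributions, of the form $e_0'' \cdot e_1$ with $e_0'' \in \delta(e_0',a)$, are absorbed directly by the induction hypothesis $\delta(e_0',a) \subseteq \delta(e_0,a)$, and likewise for $\zeta$. The subtlety lies in the auxiliary terms $\Delta(e_0',e_1,a)$ and $Z(e_0',e_1,\alpha)$, whose presence is governed by a termination side-condition on $e_0'$: I must show that whenever this side-condition fires for $e_0'$, the corresponding side-condition for $e_0$ fires as well, so that the same auxiliary term is already present in $\delta(e,a)$ (resp.\ $\zeta(e,\alpha)$). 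This is the main obstacle.

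It reduces to a single monotonicity fact: for $e_0' \in \iota(e_0)$, if $\epsilon(e_0') = 1$ then $\epsilon(e_0) = 1$. I would establish this by a short auxiliary induction showing $\epsilon(e') \le \epsilon(e)$ for every $e' \in \iota(e)$ (alternatively it follows from \cref{lemma:reconstruct-from-initial}, \cref{lemma:acc-vs-unit}, and \cref{lemma:soundness}, since $e_0' \leqqkao e_0$ forces $\semkao{e_0'} \subseteq \semkao{e_0}$). For $\delta$ this immediately promotes the trigger $\epsilon(e_0') = 1$ to $\epsilon(e_0) = 1$. For $\zeta$ the guard $Z$ has a second trigger, the existence of some $e_0'' \in \zeta(e_0',\alpha)$ with $\epsilon(e_0'') = 1$; here the $\zeta$-part of the induction hypothesis, $\zeta(e_0',\alpha) \subseteq \zeta(e_0,\alpha)$, places the same witness in $\zeta(e_0,\alpha)$, so the guard is preserved. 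In the star case the auxiliary term, when present, is simply $\delta(e_0^\star,a) = \delta(e,a)$ (resp.\ $\zeta(e_0^\star,\alpha) = \zeta(e,\alpha)$), so its containment is immediate. Combining these observations closes the induction.
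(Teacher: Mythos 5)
Your proof is correct and follows essentially the same route as the paper's: structural induction on $e$, with the crux being that the guards of $\Delta$ and $Z$ are preserved when passing from an initial factor $e_0'$ to $e_0$ (via $\epsilon(e_0')=1\Rightarrow\epsilon(e_0)=1$ for the first trigger and the inductive hypothesis on $\zeta$ for the second). The paper derives the $\epsilon$-monotonicity from \cref{lemma:reconstruct-from-initial} and \cref{lemma:acc-vs-unit}, which is one of the two options you propose; otherwise the arguments coincide.
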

\begin{proof}
We prove the second claim by induction on $e$; the first claim can be shown similarly.
In the base, where $e \in \{ 0, 1 \} \cup \Sigma \cup \termsprop$, the claim holds immediately.
For the inductive step, there are three cases to consider.
\begin{itemize}
    \item
    If $e = e_0 + e_1$, assume w.l.o.g.\ that $e' \in \iota(e_0)$.
    By induction, we then derive
    \[
        \zeta(e', \alpha)
            \subseteq \zeta(e_0, \alpha)
            \subseteq \zeta(e, \alpha)
            \ .
    \]

    \item
    If $e = e_0 \cdot e_1$, then $e' = e_0' \cdot e_1$ for some $e_0' \in \iota(e_0)$.
    We now claim that $Z(e_0', e_1, \alpha) \subseteq Z(e_0, e_1, \alpha)$.
    To see this, it suffices to consider the case where $Z(e_0', e_1, \alpha) = \zeta(e_1, \alpha)$, i.e., where $\epsilon(e_0') = 1$ or there exists an $e_0'' \in \zeta(e_0', \alpha)$ with $\epsilon(e_0'') = 1$.
    In the former case, $\epsilon(e_0) = 1$ by \cref{lemma:acc-vs-unit} and \cref{lemma:reconstruct-from-initial}.
    In the latter case, we find that $e_0'' \in \zeta(e_0, \alpha)$ by induction.
    Thus, we have $Z(e_0', e_1, \alpha) = \zeta(e_1, \alpha) = Z(e_0, e_1, \alpha)$.
    Hence, we derive by induction that
    \begin{align*}
    \zeta(e', \alpha)
        &= \{ e_0'' \cdot e_1 : e_0'' \in \zeta(e_0', \alpha) \} \cup Z(e_0', e_1, \alpha) \\
        &\subseteq \{ e_0'' \cdot e_1 : e_0'' \in \zeta(e_0, \alpha) \} \cup Z(e_0, e_1, \alpha)
         = \zeta(e, \alpha)
         \ .
    \end{align*}

    \item
    If $e = e_0^\star$, then $e' = e_0' \cdot e_0^\star$ for some $e_0' \in \iota(e_0)$.
    By induction, we then derive
    \begin{align*}
    \zeta(e', \alpha)
        &= \{ e_0'' \cdot e_0^\star : e_0'' \in \zeta(e_0', \alpha) \} \cup \zeta(e_0^\star, \alpha) \\
        &\subseteq \{ e_0'' \cdot e_0^\star : e_0'' \in \zeta(e_0, \alpha) \} \cup \zeta(e_0^\star, \alpha)
         = \zeta(e, \alpha)
         \ .
         \qedhere
    \end{align*}
\end{itemize}
\end{proof}

\fundamental*
\begin{proof}
Let us write $\tilde{e}$ for the right-hand side of the claimed equivalence.
By \cref{lemma:acc-vs-unit} and \cref{lemma:half-fundamental-theorem}, we already know that $\tilde{e} \leqqkao e$.

For the other direction, we derive as follows
\begin{align*}
e
    &\equivkao \sum_{e' \in \iota(e)} s_e(e')
        \tag{\cref{corollary:least-solution-identity-total}} \\
    &\equivkao \sum_{e' \in \iota(e)} \Bigl( \epsilon(e') + \sum_{e'' \in \delta(e', a)} a \cdot s_e(e'') + \sum_{e'' \in \zeta(e', \alpha)} \pi_\alpha \cdot s_e(e'') \Bigr)
        \tag{\cref{proposition:least-solution-characterisation}} \\
    &\equivkao \sum_{e' \in \iota(e)} \Bigl( \epsilon(e') + \sum_{e'' \in \delta(e', a)} a \cdot e'' + \sum_{e'' \in \zeta(e', \alpha)} \pi_\alpha \cdot e'' \Bigr)
        \tag{\cref{proposition:least-solution-identity}} \\
    &\equivkao \sum_{e' \in \iota(e)} \epsilon(e') + \sum_{\substack{e' \in \iota(e) \\ e'' \in \delta(e', a)}} a \cdot e'' + \sum_{\substack{e' \in \iota(e) \\ e'' \in \zeta(e', \alpha)}} \pi_\alpha \cdot e''
        \tag{Distributivity} \\
    &\leqqkao \epsilon(e) + \sum_{e' \in \delta(e, a)} a \cdot e' + \sum_{e' \in \zeta(e, \alpha)} \pi_\alpha \cdot e'
        \tag{Def $\epsilon$, \cref{lemma:iota-derivatives-containment}}
     = \tilde{e}
       \ .
\end{align*}
Hence, $e \leqqkao \tilde{e}$, completing the proof.
\end{proof}

To prove \cref{nomorearrow}, the following intermediate lemma is useful.

\begin{restatable}{lem}{morethanonea}%
\label{morethanonea}
Let $\alpha \in \at$, $w\in\Gamma^\star$ and $e\in\termsgrat$.
Now, if $\alpha w \in \semkao{e}$, then there exists an $e' \in \zeta(e,\alpha)$ such that $w \in \semkao{e'}$.
\end{restatable}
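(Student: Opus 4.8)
The plan is to factor the argument through the weaker, closure-free semantics $\semwkao{-}$ and then account for the contraction closure separately. Concretely, I would first prove the following auxiliary statement at the \WKAO level, by induction on $e$: \emph{if $\alpha u \in \semwkao{e}$, then there is some $e' \in \zeta(e, \alpha)$ with $u \in \semwkao{e'}$}. The main lemma then follows by combining this with the $\altcontr$-characterisation of $\down{-}$ from \cref{lemma:closure-equivalence} and the nested-derivative containment of \cref{lemma:double-derivatives}.

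For the auxiliary statement the base cases are immediate: for $e \in \{0,1\} \cup \Sigma$ no word of $\semwkao{e}$ begins with an atom, so the claim is vacuous; and for $e = p \in \termsprop$ a word beginning with $\alpha$ must be $\alpha$ itself with $\pi_\alpha \leqqba p$, so $u = \epsilon$ and $1 \in \zeta(p, \alpha)$ witnesses the claim. The inductive step uses that $\semwkao{-}$ is defined compositionally. For $e = e_0 + e_1$ the witness is inherited from whichever summand contains $\alpha u$. For $e = e_0 \cdot e_1$, write $\alpha u = xy$ with $x \in \semwkao{e_0}$ and $y \in \semwkao{e_1}$: if $x = \epsilon$ then $\epsilon \in \semwkao{e_0}$, so $\epsilon(e_0) = 1$, and $\alpha u = y$, whence the induction hypothesis on $e_1$ yields $e_1' \in \zeta(e_1, \alpha) = Z(e_0, e_1, \alpha) \subseteq \zeta(e, \alpha)$; if $x = \alpha x'$ is non-empty, the induction hypothesis on $e_0$ gives $e_0' \in \zeta(e_0, \alpha)$ with $x' \in \semwkao{e_0'}$, so that $u = x'y \in \semwkao{e_0' \cdot e_1}$ and $e_0' \cdot e_1 \in \zeta(e, \alpha)$. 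The case $e = e_0^\star$ is analogous: splitting $\alpha u$ into $\semwkao{e_0}$-factors and applying the hypothesis to the first non-empty factor produces a witness of the form $e_0' \cdot e_0^\star$.

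To lift this to $\semkao{-}$, I would use that, by \cref{lemma:closure-equivalence}, $\alpha w \in \semkao{e} = \down{\semwkao{e}}$ means $\alpha w \altcontr u$ for some $u \in \semwkao{e}$. Inverting the (syntax-directed) rules for $\altcontr$ on a left-hand side beginning with $\alpha$ leaves exactly two possibilities: $u = \alpha u'$ with $w \altcontr u'$, or $u = \alpha\alpha u'$ with $w \altcontr u'$. In the first case the auxiliary statement gives $e' \in \zeta(e, \alpha)$ with $u' \in \semwkao{e'}$, and since $w \altcontr u'$ we get $w \in \down{\semwkao{e'}} = \semkao{e'}$. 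In the second case I apply the auxiliary statement twice: first to obtain $e' \in \zeta(e, \alpha)$ with $\alpha u' \in \semwkao{e'}$, then to obtain $e'' \in \zeta(e', \alpha)$ with $u' \in \semwkao{e''}$; by \cref{lemma:double-derivatives} we have $e'' \in \zeta(e, \alpha)$, and as before $w \altcontr u'$ yields $w \in \semkao{e''}$. Either way the required continuation exists.

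I expect the main obstacle to be precisely the interaction between the contraction closure and the decomposition of words in the sequential and star cases --- in particular the duplicated-atom case $u = \alpha\alpha u'$, where a single input atom $\alpha$ must be observed twice. This is exactly the phenomenon that the $Z$-clause of $\zeta$ and \cref{lemma:double-derivatives} are designed to capture, and isolating it in the two-step \WKAO argument above is what keeps the proof manageable; a direct induction on $\semkao{-}$ would instead force one to reason about $\down{(\semkao{e_0} \cdot \semkao{e_1})}$ and entangle these two concerns.
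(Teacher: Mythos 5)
Your strategy---a structural induction establishing the claim at the $\semwkao{-}$ level, followed by a separate accounting of the contraction closure---is genuinely different from the paper's proof, which instead routes through the linear-system machinery ($\iota$, $s_e$, \cref{proposition:least-solution-characterisation}) and the closedness of the components $s_e(e')$ (\cref{proposition:component-closure}). Your auxiliary \WKAO-level statement and its induction on $e$ are correct. The gap is in the lifting step: you assert that $\alpha w \in \down{\semwkao{e}}$ means $\alpha w \altcontr u$ for a \emph{single} application of $\altcontr$ to some $u \in \semwkao{e}$. That is not what \cref{lemma:closure-equivalence} gives you: $\down{L}$ is the \emph{iterated} closure under $\altcontr$-predecessors, and $\altcontr$ is not transitive---each application duplicates any given atom occurrence at most once, so $\alpha \altcontr \alpha\alpha$ and $\alpha\alpha \altcontr \alpha\alpha\alpha$, but $\alpha\alpha\alpha$ is not reachable from $\alpha$ in one step. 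Concretely, take $e = p \cdot p \cdot p$ with $\pi_\alpha \leqqba p$ and $w = \epsilon$: then $\alpha \in \semkao{e}$, but the only words in $\semwkao{e}$ have length three, so no single-step witness $u$ exists and your case analysis ($u = \alpha u'$ or $u = \alpha\alpha u'$) never fires. A chain may also interleave a duplication deep inside the word with a later duplication of the leading atom, so ``inverting the step'' is not even well defined for the closure as a whole.

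The repair uses only ingredients you already have, but it must be set up as an iteration rather than a two-case split. Either observe that the reflexive-transitive closure $\contr^*$ expands each atom occurrence of the source word into a positive-length run, so that $\alpha w \contr^* x \in \semwkao{e}$ decomposes as $x = \alpha^k x'$ with $k \geq 1$ and $w \contr^* x'$, and then apply your auxiliary statement $k$ times, chaining $k-1$ applications of \cref{lemma:double-derivatives} to land back in $\zeta(e,\alpha)$ before concluding $w \in \down{\semwkao{e''}} = \semkao{e''}$; or run an explicit induction on the length of the $\altcontr$-chain, as the paper does in the proof of \cref{proposition:component-closure}. Your double application of the auxiliary statement in the $\alpha\alpha u'$ case is exactly the $k = 2$ instance of this scheme; what is missing is the generalisation to arbitrary $k$ together with a correct treatment of the closure as a chain rather than a single step.
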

\begin{proof}
By \cref{lemma:reconstruct-from-initial} and \cref{lemma:soundness}, we know that $\alpha{}w \in \semkao{\sum_{e' \in \iota(e)} e'}$.
Hence, by \cref{lemma:closure-almost-commutes}, there exists an $e' \in \iota(e)$ such that $\alpha{}w \in \semkao{e'}$.
By \cref{proposition:least-solution-identity} and \cref{lemma:soundness}, it then follows that $\alpha{}w \in \semkao{s_e(e')}$.
By \cref{proposition:component-closure}, we find that $\alpha{}w \in \semwkao{s_e(e')}$.
We furthermore know by \cref{proposition:least-solution-characterisation,lemma:wkao-soundness,lemma:atom-semantics}, as well as the definition of $\semwkao{-}$ that
\begin{align*}
\semwkao{s_e(e')}
    &= \bigsemwkao{\epsilon(e') + \sum\nolimits_{e'' \in \delta(e', a)} a \cdot s_e(e'') + \sum\nolimits_{e'' \in \zeta(e', \alpha)} \pi_\alpha \cdot e''} \\
    &= \semwkao{\epsilon(e')}
        \cup \bigcup_{e'' \in \delta(e', a)} \{ a \} \cdot \semwkao{s_e(e'')}
        \cup \bigcup_{e'' \in \zeta(e', \beta)} \{ \beta \} \cdot \semwkao{s_e(e'')}
        \ .
\end{align*}
Since $\alpha{}w$ starts with an atom, it appears in the third component above, i.e., there must exist $\beta \in \at$ and $e'' \in \zeta(e', \beta)$ with $\alpha{}w \in \{ \beta \} \cdot \semwkao{s_e(e'')}$.
Hence, $\alpha = \beta$ and $w \in \semwkao{s_e(e'')}$.
By \cref{proposition:component-closure,proposition:least-solution-identity,lemma:soundness}, we find that $\semwkao{s_e(e'')} = \semkao{e''}$.
Lastly, by \cref{lemma:iota-derivatives-containment}, we know that $e'' \in \zeta(e, \alpha)$.
In total, we conclude that $e'' \in \zeta(e, \alpha)$ such that $w \in \semkao{e''}$.
\end{proof}

\nomorearrow*
\begin{proof}
The inclusion from right to left follows by \cref{lemma:atom-semantics} and \cref{lemma:closure-almost-commutes}.

For the other direction, note that by \cref{lemma:atom-semantics} and \cref{lemma:wkao-soundness} we have
\[
    \bigsemwkao{\sum\nolimits_{e' \in \zeta(e, \alpha)} \pi_\alpha \cdot e'}
        = \bigcup_{e' \in \zeta(e, \alpha)} \{ \alpha \} \cdot \semwkao{e'}
        \subseteq \bigcup_{e' \in \zeta(e, \alpha)} \{ \alpha \} \cdot \semkao{e'}
        \ .
\]
It then suffices to prove that the right-hand side of the claimed equation is closed under $\contr$, since the left-hand side is the least set satisfying both of these conditions.
Thus, suppose $e' \in \zeta(e, \alpha)$, and let $w \in \{ \alpha \} \cdot \semkao{e'}$ with $x \contr w$.
We should find $e'' \in \zeta(e, \alpha)$ such that $x \in \{ \alpha \} \cdot \semkao{e''}$.
To this end, we note that $w = \alpha{}w'$ for some $w' \in \semkao{e'}$, and that moreover there exist $u, v \in \Gamma^\star$ and $\beta \in \at$ such that $w = u\beta\beta{}v$ and $x = u\beta{}v$.
This gives us two cases.
\begin{enumerate}[(i)]
    \item
    If $u = \epsilon$, then $\alpha = \beta$.
    Since $w' = \alpha{}v$, we find by \cref{morethanonea} an $e'' \in \zeta(e', \alpha)$ such that $v \in \semkao{e''}$.
    By \cref{lemma:double-derivatives}, we furthermore know that $e'' \in \zeta(e, \alpha)$.
    The claim is then satisfied, since $x = \alpha{}v \in \{ \alpha \} \cdot \semkao{e''}$.

    \item
    If $u \neq \epsilon$, then write $u = \alpha{}u'$.
    We then know that $w' = u'\beta\beta{}v$.
    By closure of $\semkao{e'}$ with respect to $\contr$, it follows that $u'\beta{}v \in \semkao{e'}$.
    Hence, $x = u\beta{}v = \alpha{}u'\beta{}v \in \{ \alpha \} \cdot \semkao{e'}$, satisfying the claim, since $e' \in \zeta(e, \alpha)$.
    \qedhere
\end{enumerate}
\end{proof}

\fundamentalsemantics*
\begin{proof}
We derive as follows.
\begin{align*}
\semkao{e}
    &= \bigsemkao{%
        \epsilon(e)
            + \sum\nolimits_{e' \in \delta(e, a)} a \cdot e'
            + \sum\nolimits_{e' \in \zeta(e, \alpha)} \pi_\alpha \cdot e'
        } \tag{\cref{lemma:soundness}, \cref{theorem:fundamental}} \\
    &= \semkao{\epsilon(e)}
        \cup \bigsemkao{\sum\nolimits_{e' \in \delta(e, a)} a \cdot e'}
        \cup \bigsemkao{\sum\nolimits_{e' \in \zeta(e, \alpha)} \pi_\alpha \cdot e'}
        \tag{\cref{lemma:closure-almost-commutes}} \\
    &= \{ \epsilon : \epsilon(e) = 1 \}
        \cup \bigsemkao{\sum\nolimits_{e' \in \delta(e, a)} a \cdot e'}
        \cup \bigsemkao{\sum\nolimits_{e' \in \zeta(e, \alpha)} \pi_\alpha \cdot e'}
        \tag{\cref{lemma:acc-vs-unit}} \\
    &= \{ \epsilon : \epsilon(e) = 1 \}
        \cup \bigcup_{e' \in \delta(e, a)} \{a\} \cdot \semkao{e'}
        \cup \bigcup_{e' \in \zeta(e, \alpha)} \{ \alpha \} \cdot \semkao{e'}
        \ .
        \tag{\cref{nomorearrow}}
\end{align*}
In the last step, we also use that closure distributes in the semantics of $a \cdot e'$.
As $a \in \Sigma$, we know that $\{a\} \cdot \semkao{e'} = \{a\} \cdot \down{\semwkao{e'}} = \down{(\{a\} \cdot \semwkao{e'})} = \down{\semwkao{a \cdot e'}} = \semkao{a \cdot e'}$.
\end{proof}

\semiota*
\begin{proof}
We derive as follows.
\begin{align*}
\semkao{e} = \semkao{f}
    &\Leftrightarrow \bigsemkao{\sum\nolimits_{e'\in\iota(e)}e'} = \bigsemkao{\sum\nolimits_{f'\in\iota(f)}f'}
        \tag{\cref{lemma:reconstruct-from-initial}} \\
    &\Leftrightarrow \bigcup_{e'\in\iota(e)}\semkao{e'} = \bigcup_{f'\in\iota(f)}\semkao{f'}
        \tag{\cref{lemma:closure-almost-commutes}} \\
    &\Leftrightarrow \bigcup_{e'\in\iota(e)}\ell(e') = \bigcup_{f'\in\iota(f)}\ell(f')
        \ .
        \tag*{(\cref{theorem:language-equiv}) \qedhere}
\end{align*}
\end{proof}

\decisionprocedure*
\begin{proof}
We derive as follows.
\begin{align*}
e\equivkao f
    & \Leftrightarrow \semkao{e}=\semkao{f} \tag{\cref{theorem:soundness-completeness}} \\
    & \Leftrightarrow  \bigcup_{e'\in\iota(e)}\ell(e') = \bigcup_{f'\in\iota(f)}\ell(f') \tag{\cref{cor:semiota}} \\
    & \Leftrightarrow \exists R\text{ such that } R \text{ is a bisimulation up to congruence} \\
    & \phantom{\Leftrightarrow } \text{ and } (\iota(e),\iota(f))\in R
        \ .
        \tag*{(\cref{theorem:bisim}) \qedhere}
\end{align*}
\end{proof}

\section{Tests versus observations}%
\label{appendix:tests-vs-obs}

We now investigate the relationships between models of \KAT and models of \KAO. 
To make this discussion more articulate, we note that both types of models are instances of a more general class of models: Kleene algebras $K$ containing a subset $B\subseteq K$ where $B$ is a Boolean algebra.
For the purpose of this discussion, we assume a priori two (and only these two) correspondences between the two signatures $\langle K, 0, 1, +, \cdot, \star \rangle$ and $\langle B, \bot, \top, \vee, \wedge, \overline{\cdot} \rangle$, namely:
\begin{enumerate}[(i)]
    \item
    that the natural order on $B$ is compatible with the natural order on $K$, that is, for all $a, b \in B$ such that $a \vee b = b$, it also holds that $a + b = b$.

    \item
    the contraction law, that is, for all $a, b \in B$ it holds that $a \wedge b \leq a \cdot b$.
\end{enumerate}

According to the laws of Boolean algebra, $\wedge$ is a greatest lower bound (GLB) for elements of $B$.
In most semantics, one would like to interpret conjunction as an intersection, hence as a GLB in $K$ (and not just in $B$).
Symbolically,
\begin{equation}
\forall x \in K,\ \forall a,b \in B,\ x \leq a \text{ and } x \leq b \implies x \leq a \wedge b \ . \tag{\textsc{Glb}}\label{eq:glb}
\end{equation}
Any \KAT satisfies~\eqref{eq:glb}, because for $x \in K$ and $a, b \in B$ with $x \leq a, b$, we have
\[
    x
        = x \cdot 1
        = x \cdot \left(b + \overline b\right)
        = x \cdot b + x \cdot \overline b \leq a \cdot b + b \cdot \overline b
        = a \cdot b + 0
        = a \wedge b
        \ .
\]

In most models of \KAT, including the language model and relational model, the Boolean sub-algebra is an ideal of the full algebra, i.e., it is downward-closed in the following sense:
\begin{equation}
 \forall x \in K,\ \forall a \in B,\ x \leq a \implies x \in B \ . \tag{\textsc{Ideal}}\label{eq:ideal}
\end{equation}
This is also the case in the language model of \KAO. 
For both algebras there are models that do not satisfy this property, but it is our impression that these are not the most useful.

\begin{restatable}{prop}{idealglb}%
\label{proposition:ideal-glb}
If~\eqref{eq:ideal} holds, then so does~\eqref{eq:glb}.
\end{restatable}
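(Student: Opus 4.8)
The plan is to use \eqref{eq:ideal} to transport the whole problem inside $B$, where greatest lower bounds are already available as $\wedge$, and then to move the resulting inequality back out to $K$ using the order compatibility~(i). Throughout I write $\leq$ for the order of $K$ and $\leqqba$ for the Boolean order on $B$ (so (i) reads: $p \leqqba q$ implies $p \leq q$).

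First I would invoke \eqref{eq:ideal}: since $x \leq a$ with $a \in B$, we obtain $x \in B$. Hence $x, a, b$ all lie in $B$, and $a \wedge b$ is their greatest lower bound for $\leqqba$. The strategy is now immediate to state: if I can upgrade the two hypotheses $x \leq a$ and $x \leq b$ to the Boolean inequalities $x \leqqba a$ and $x \leqqba b$, then $x \leqqba a \wedge b$ follows from the defining property of $\wedge$ as a meet in $B$, and a single application of~(i) turns this into $x \leq a \wedge b$, which is exactly \eqref{eq:glb}.

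So the crux is to pass from the $K$-inequalities to the corresponding $B$-inequalities; equivalently, to show that the Boolean meet $a \wedge b$ is not merely a lower bound of $a, b$ in $(B, \leqqba)$ but a genuine greatest lower bound of $a, b$ in $(K, \leq)$. One half of this is free: by~(i), every $\leqqba$-bound is a $\leq$-bound. The role of \eqref{eq:ideal} is to make $B$ downward-closed in $(K, \leq)$, so that an arbitrary $K$-lower bound $x$ of $a$ and $b$ is forced to live in $B$ and can therefore be measured against $a \wedge b$ inside the Boolean algebra in the first place. What remains — reconciling $\leq$ restricted to $B$ with $\leqqba$ — is the delicate point.

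The step I expect to be the main obstacle is precisely this reconciliation of the two orders on $B$, together with the fact that it cannot follow the \KAT route. In a \KAT one argues $x = x \cdot 1 = x \cdot (b + \overline{b}) = x \cdot b + x \cdot \overline{b} \leq a \cdot b + b \cdot \overline{b} = a \wedge b$, crucially exploiting $1 = \top$, $b + \overline{b} = \top$, $b \cdot \overline{b} = \bot$, and $\cdot = \wedge$ --- \emph{none} of which we may assume, since the standing hypotheses are only~(i) and the contraction law~(ii). The argument must therefore be carried out order-theoretically, leaning on the downward closure supplied by \eqref{eq:ideal} and on~(i), rather than on the algebraic identification $p \cdot q = p \wedge q$ that makes the sequential calculation work in \KAT.
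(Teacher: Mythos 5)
Your plan is the paper's proof: apply \eqref{eq:ideal} to conclude $x \in B$, then form the meet inside $B$ and transport the resulting inequality back to $K$ via assumption~(i). But you stop exactly where the content is. Having reduced everything to ``reconciling $\leq$ restricted to $B$ with $\leqqba$'', you declare this the delicate point, the main obstacle, and then never carry it out. As written, the proposal therefore does not establish \eqref{eq:glb}: the implication $x \leq a \Rightarrow x \leqqba a$ for $x, a \in B$ is precisely the missing step, and nothing in your text supplies it.

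The paper closes this in a single line: once \eqref{eq:ideal} has placed $x$ in $B$, it writes $x = x \wedge x \leq a \wedge b$, i.e.\ it combines idempotence of $\wedge$ with the fact that $\wedge$ on $B$ is monotone for the order in which the hypotheses $x \leq a$ and $x \leq b$ are stated. This is licensed by reading assumption~(i) --- ``the natural order on $B$ is compatible with the natural order on $K$'' --- as saying that the two orders agree on elements of $B$ (as they do in the language models of both \KAT and \KAO), not merely the single implication $\leqqba \Rightarrow \leq$ that the formal gloss spells out. Under that reading, the step you flag as an obstacle is immediate: $x \leq a, b$ gives $x \leqqba a, b$, hence $x = x \wedge x \leqqba a \wedge b$, hence $x \leq a \wedge b$ by~(i). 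You are right that the \KAT computation via $1 = \top$, $b + \overline{b} = \top$ and $\cdot = \wedge$ is unavailable, and right that \eqref{eq:ideal} is needed only to force $x$ into $B$; what you miss is that no further order-theoretic machinery is required beyond idempotence and the compatibility of the orders --- but that step still has to be written down, and you have not written it.
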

\begin{proof}
Let $x\in K$ and $a,b\in B$ such that $x\leq a,b$.
Since $a\in B$, we have $x \in B$ by~\eqref{eq:ideal}.
Therefore, we have $x = x\wedge x \leq a \wedge b$.
\end{proof}

\begin{restatable}{prop}{idealkao}%
\label{proposition:ideal-kao}
If~\eqref{eq:ideal} holds, then $K$ is a model of \KAO. 
\end{restatable}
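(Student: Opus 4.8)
The goal is to verify that $K$, together with its Boolean subalgebra $B$, satisfies the axioms that are specific to \KAO. Since $K$ is already a Kleene algebra and $B$ a Boolean algebra, the base axioms of both theories hold automatically, so the only remaining obligations are $\bot = 0$, the identity $p \vee q = p + q$ for $p, q \in B$, and the contraction law $p \wedge q \leq p \cdot q$. The contraction law is exactly assumption~(ii), so the work reduces to the first two identities, and the plan is to derive them from \eqref{eq:ideal} together with its consequence \eqref{eq:glb} (\cref{proposition:ideal-glb}) and the order-compatibility assumption~(i).

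First I would establish $\bot = 0$, since it is needed later. As $0$ is the least element of $K$, we have $0 \leq \bot$, so \eqref{eq:ideal} applied with $a = \bot$ gives $0 \in B$. Since $\bot$ is the least element of $B$, it lies below $0$ in $B$; by assumption~(i) this yields $\bot + 0 = 0$ in $K$, and because $\bot + 0 = \bot$ we conclude $\bot = 0$.

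Next I would prove $p \vee q = p + q$. One inclusion is routine: $p \vee q$ is an upper bound of $p$ and $q$ in $B$, hence, by assumption~(i), an upper bound in $K$, so the $K$-least upper bound $p + q$ satisfies $p + q \leq p \vee q$. For the reverse inclusion I first show $p + q \in B$: since $p, q \leq \top$ in $B$ and therefore (by~(i)) in $K$, we get $p + q \leq \top$, whence \eqref{eq:ideal} gives $p + q \in B$. Writing $r = p + q$, I then argue that $p$ (and symmetrically $q$) lies below $r$ in $B$ by checking $p \wedge \overline{r} = \bot$: the element $x = p \wedge \overline{r}$ satisfies $x \leq p \leq r$ and $x \leq \overline{r}$ in $K$, so \eqref{eq:glb} forces $x \leq r \wedge \overline{r} = \bot = 0$, and hence $x = 0 = \bot$. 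Consequently $p, q$ lie below $r = p + q$ in $B$, so $p \vee q \leq p + q$ in $B$ and thus, by~(i), in $K$, giving the desired equality.

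The main obstacle is precisely this reverse inclusion $p \vee q \leq p + q$. Assumption~(i) only guarantees that the order of $B$ refines that of $K$, so a $K$-inequality such as $p \leq p + q$ cannot be transported back into $B$ directly. Overcoming this is exactly where the membership $p + q \in B$ (from \eqref{eq:ideal}), the greatest-lower-bound property \eqref{eq:glb}, and the previously established fact $\bot = 0$ combine: together they let me compute $p \wedge \overline{p+q} = \bot$ and thereby recover the $B$-order relation from the $K$-order relation.
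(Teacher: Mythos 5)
Your proof is correct, and it follows the same overall decomposition as the paper: the only \KAO{} obligations not already supplied by the ambient Kleene and Boolean structure are $\bot = 0$ and $p \vee q = p + q$ (contraction being assumed), and, like the paper, you use \eqref{eq:ideal} to place $p+q$ and $0$ inside $B$ and assumption~(i) to get $p + q \leqqka p \vee q$. Where you genuinely diverge is the converse inequality $p \vee q \leqqka p + q$. The paper closes the loop with a short sandwich chain, $a\vee b \leqqka (a+b)\vee(a+b) \equivba a+b \leqqba (a\vee b)+(a\vee b) \equivka a\vee b$, which tacitly uses monotonicity of $\vee$ and $+$ across the two orders; you instead convert the Kleene-algebra inequalities $p, q \leqqka p+q$ into genuine Boolean-algebra inequalities by computing $p \wedge \overline{p+q} = \bot$ via \eqref{eq:glb} (i.e., \cref{proposition:ideal-glb}) together with $\bot = 0$, and only then invoke the universal property of $\vee$ inside $B$. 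This makes explicit the one real obstacle you correctly identify --- assumption~(i) transports the Boolean order into the Kleene order but not conversely --- at the cost of invoking \cref{proposition:ideal-glb} and of proving $\bot = 0$ up front (from $0 \leqqka \bot$ and \eqref{eq:ideal}) rather than deriving it, as the paper does, as a one-line consequence of $\vee = +$ via $0 \equivba 0 \vee \bot = 0 + \bot \equivka \bot$. Both arguments are sound; yours is a little longer but makes the order-transport issue visible, while the paper's is more economical.
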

\begin{proof}
First, we show that $a+b=a\vee b$.
Since $a+b\leqqba \top+\top\equivka\top$, we know that $a+b\in B$.
We also know that $a,b\leqqka a+b$ and that $a,b\leqqba a\vee b$, so we get:
$a\vee b \leqqka\left(a+b\right)\vee\left(a+b\right)\equivba a+b\leqqba\left(a\vee b\right)+\left(a\vee b\right)\equivka a\vee b$.

Now, since $0\leqqka \top$, $0$ is also an element of $B$, and therefore we can conclude:
\[
    0
        \equivba 0 \vee \bot
        = 0 + \bot\equivka \bot
        \ .
        \qedhere
\]
\end{proof}

\noindent
Consider now the following law, which holds trivially in any \KAT: 
\begin{equation}
1 = \top \ . \tag{\textsc{Units}}\label{eq:units}
\end{equation}
Since \KAO distinguishes $\cdot$ from $\wedge$, we distinguished their units as well.
This turns out to be important: identifying the units leads to identifying the products in a large class of models.

\begin{restatable}{prop}{glbunitprods}%
\label{proposition:glb-unit-prods}
  Suppose~\eqref{eq:glb} and~\eqref{eq:units} hold.
  Then for $a, b \in B$ we have $a \cdot b = a \wedge b$.
\end{restatable}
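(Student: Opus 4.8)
The plan is to establish the two inequalities $a \wedge b \leq a \cdot b$ and $a \cdot b \leq a \wedge b$ separately, and then conclude by antisymmetry of the natural order on $K$. The first inequality is immediate: it is precisely the contraction law, assumption~(ii), which is built into the class of models under consideration. All the real work therefore lies in the converse direction, $a \cdot b \leq a \wedge b$.

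For that converse, I would appeal to~\eqref{eq:glb}. Since $a \cdot b \in K$ and $a, b \in B$, it suffices to prove the two bounds $a \cdot b \leq a$ and $a \cdot b \leq b$; then~\eqref{eq:glb} delivers $a \cdot b \leq a \wedge b$ directly. To obtain these bounds, I would first use~\eqref{eq:units} together with the compatibility of the two natural orders (assumption~(i)): the Boolean inequality $b \leq \top$ holds, and since $\top = 1$ and compatibility lifts order relations from $B$ into $K$, this gives $b \leq 1$ in $K$; symmetrically $a \leq 1$. Multiplication in a Kleene algebra is monotone in each argument (a consequence of distributivity of $\cdot$ over $+$), so from $b \leq 1$ we get $a \cdot b \leq a \cdot 1 = a$, and from $a \leq 1$ we get $a \cdot b \leq 1 \cdot b = b$, where the final equalities are the unit laws of $\KA$.

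Combining the two bounds via~\eqref{eq:glb} yields $a \cdot b \leq a \wedge b$, which together with the contraction law gives $a \cdot b = a \wedge b$. I do not expect a serious obstacle here, as every step is routine; the only point requiring genuine care is to route the Boolean inequality $b \leq \top$ through assumption~(i) so that it becomes an inequality in the Kleene-algebra order \emph{before} invoking monotonicity of $\cdot$, rather than silently conflating the order on $B$ with the order on $K$.
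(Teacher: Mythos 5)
Your proposal is correct and follows essentially the same route as the paper's proof: contraction gives one inequality, \eqref{eq:glb} reduces the converse to $a \cdot b \leq a$ and $a \cdot b \leq b$, and these follow from $b \leq \top = 1$ together with monotonicity of $\cdot$ and the unit law. Your explicit care in lifting $b \leqqba \top$ to the order on $K$ via assumption~(i) before applying monotonicity is a point the paper's one-line chain $a \cdot b \leq a \cdot \top = a \cdot 1 = a$ leaves implicit, but the argument is the same.
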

\begin{proof}
  Since we have assumed the contraction law, we need to check $a\cdot b\leq a \wedge b$ only.
  Thanks to~\eqref{eq:glb} this inequality reduces to checking that $a\cdot b\leq a$ and $a\cdot b\leq b$.
  Both of these hold, as for instance $a\cdot b\leq a \cdot \top= a \cdot 1 =a$.
\end{proof}

\begin{restatable}{prop}{kaovskat}
  Suppose~\eqref{eq:ideal} holds.
  Now $K$ is a \KAT iff~\eqref{eq:units} holds.
\end{restatable}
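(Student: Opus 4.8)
The plan is to prove the two directions separately, and to observe that essentially all the technical work has already been carried out in the preceding propositions, so that the argument reduces to assembling them in the correct order. The forward implication is immediate: every \KAT identifies the multiplicative unit $1$ with the Boolean top element $\top$ by definition, so \eqref{eq:units} holds in any \KAT, with no appeal to \eqref{eq:ideal} required.

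For the reverse implication I would assume both \eqref{eq:ideal} and \eqref{eq:units}, and then verify the four identifications that, on top of the Kleene algebra axioms already satisfied by $K$, characterise a \KAT: namely $\bot = 0$, $\top = 1$, $p \vee q = p + q$, and $p \wedge q = p \cdot q$. First I would invoke \cref{proposition:ideal-kao}, which states that under \eqref{eq:ideal} the algebra $K$ is already a model of \KAO; its proof delivers the identifications $\bot = 0$ and $p \vee q = p + q$ directly. The identification $\top = 1$ is exactly the assumed hypothesis \eqref{eq:units}. Hence only $p \wedge q = p \cdot q$ remains.

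To obtain that last identification, I would first apply \cref{proposition:ideal-glb} to upgrade \eqref{eq:ideal} to \eqref{eq:glb}, and then feed \eqref{eq:glb} together with \eqref{eq:units} into \cref{proposition:glb-unit-prods}, whose conclusion is precisely $a \cdot b = a \wedge b$ for all $a, b \in B$. Note that the contraction law is available throughout as one of the standing assumptions (ii), which is exactly what \cref{proposition:glb-unit-prods} relies on to reduce its task to the inequality $a \cdot b \leq a \wedge b$. With all four identifications established, $K$ satisfies the defining equations of a \KAT.

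I do not expect a genuine obstacle: the substantive content lives entirely in the earlier propositions, and the only care required is bookkeeping. Specifically, I would confirm that the list of conditions defining a \KAT coincides exactly with the conclusions supplied by \cref{proposition:ideal-kao}, the hypothesis \eqref{eq:units}, and \cref{proposition:glb-unit-prods}, and that \eqref{eq:ideal} is strong enough to trigger each of these (via \cref{proposition:ideal-glb} wherever \eqref{eq:glb} is needed). The closest thing to a subtlety is ensuring that the same Boolean subalgebra $B$ serves in all invocations, so that the four identifications genuinely assemble into a single \KAT structure on $K$.
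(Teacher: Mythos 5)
Your proposal is correct and follows the paper's proof exactly: the forward direction is immediate, and the converse chains \cref{proposition:ideal-kao} (for the \KAO identifications), the hypothesis~\eqref{eq:units}, \cref{proposition:ideal-glb} (to upgrade~\eqref{eq:ideal} to~\eqref{eq:glb}), and \cref{proposition:glb-unit-prods} (for $a \cdot b = a \wedge b$) in the same order.
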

\begin{proof}
  The direct implication being straightforward, it remains to show the converse.
  By \cref{proposition:ideal-kao}, we know that $K$ is a model of \KAO, so it suffices to check that $\wedge$ and $\cdot$ coincide on $B$.
  Thanks to \cref{proposition:ideal-glb}, we know that~\eqref{eq:glb} holds, so we may conclude using \cref{proposition:glb-unit-prods}.
\end{proof}

This concludes our comparison between models of \KAT and models of \KAO. 

\fi%

\end{document}